\definecolor{sz}{rgb}{0.1,0.2,0.6}
\definecolor{blue}{rgb}{0.1,0.2,0.5}
\definecolor{brown}{rgb}{0.6,0.6,0.2}
\theoremstyle{plain}
\newtheorem{theorem}{Theorem}
\renewcommand{\setminus}{-}
\newcommand{\newtheoremwithcrefformat}[2]{%
  \newtheorem{#1}[lemma]{#2}%
  \crefformat{#1}{##2\MakeUppercase#1~##1##3}%
  \Crefformat{#1}{##2\MakeUppercase#1~##1##3}%
}
\newcommand{\newseptheoremwithcrefformat}[2]{%
  \newtheorem{#1}{#2}%
  \crefformat{#1}{##2\MakeUppercase#1~##1##3}%
  \Crefformat{#1}{##2\MakeUppercase#1~##1##3}%
}
\theoremstyle{nonumberplain}
\newtheorem{proof}{Proof}
\newtheorem{clproof}{Proof}
\newcommand{\set}[1]{\{#1\}}
\newcommand{\setof}[2]{\left\{#1 \,\mid\, #2 \right\}}
\renewcommand{\subset}{\subseteq}
\newcommand{\Oof}{\mathcal{O}}
\newcommand{\CCC}{\mathcal{C}}
\newcommand{\FFF}{\mathcal{F}}
\newcommand{\GGG}{\mathcal{G}}
\newcommand{\Pow}{\mathcal{P}}
\newcommand{\N}{\mathbb{N}}
\newcommand{\tup}[1]{\bar{#1}}
\renewcommand{\phi}{\varphi}
\renewcommand{\epsilon}{\varepsilon}
\newcommand{\str}{\mathbb}
\newcommand{\strA}{\str{A}}
\newcommand{\minor}{\preccurlyeq}
\newcommand{\dist}{\mathrm{dist}}
\renewcommand{\mid}{~:~}
\newcommand{\abs}[1]{\ensuremath{\left\lvert#1\right\rvert}}
\newcommand{\from}{\colon}
\renewcommand{\leq}{\leqslant}
\renewcommand{\geq}{\geqslant}
\renewcommand{\le}{\leqslant}
\renewcommand{\ge}{\geqslant}
\newcounter{aux}
\title{On the number of types in sparse graphs
\thanks{
The work of M.\ Pilipczuk and S.\ Siebertz is supported by the National Science Centre of 
Poland via POLONEZ grant agreement UMO-2015/19/P/ST6/03998, 
which has received funding from the European Union's Horizon 2020 research and 
innovation programme (Marie Sk\l odowska-Curie grant agreement No.\ 665778).
The work of Sz.~Toru{\'n}czyk is supported by the National Science Centre of Poland grant 2016/21/D/ST6/01485.
M. Pilipczuk is supported by the Foundation for Polish Science (FNP) via the START stipend programme.
}}
\author{
Micha\l~Pilipczuk \qquad
\qquad Sebastian Siebertz
\qquad Szymon Toru{\'n}czyk\\[0.3cm]
Institute of Informatics, University of Warsaw, Poland\\[0.1cm]
\texttt{\{michal.pilipczuk,siebertz,szymtor\}@mimuw.edu.pl}}
\begin{document}

\maketitle
\begin{abstract}
\noindent 
We prove that for every class of graphs $\CCC$ which is nowhere dense, as defined by  Ne\v set\v ril and Ossona de Mendez~\cite{nevsetvril2010first,nevsetvril2011nowhere},
and for every  first order formula $\phi(\tup x,\tup y)$, 
whenever one draws a graph $G\in \CCC$ and a subset of its nodes $A$, the number of subsets of $A^{|\tup y|}$ which are of the form $\set{\tup v\in A^{|\tup y|}\, \colon\, G\models\phi(\bar u,\tup v)}$ for some valuation $\tup u$ of $\tup x$ in $G$
is bounded by $\Oof(|A|^{|\tup x|+\epsilon})$, for every $\epsilon>0$. 
This provides optimal bounds on the VC-density of first-order definable set systems in nowhere dense graph classes.

We also give two new proofs of upper bounds on quantities in nowhere dense classes which are relevant for their logical treatment.
Firstly, we provide a new proof of the fact that nowhere dense classes are uniformly
quasi-wide, implying explicit, polynomial upper bounds on the functions
relating the two notions.  Secondly, we give a new combinatorial proof
of the result of Adler and Adler~\cite{adler2014interpreting} stating
that every nowhere dense class of graphs is stable. In contrast to the previous proofs of the above results,
our proofs are completely finitistic and constructive, and yield explicit and computable upper
bounds on quantities related to uniform quasi-wideness (margins) and stability (ladder indices).
\end{abstract}

\begin{picture}(0,0) \put(397,-217)
{\hbox{\includegraphics[scale=0.25]{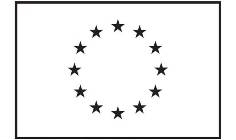}}} \end{picture} 
\vspace{-0.8cm}

\section{Introduction}\label{sec:intro}

Nowhere dense classes of graphs were introduced 
by Ne\v set\v ril and Ossona de 
Mendez~\cite{nevsetvril2010first,nevsetvril2011nowhere} as a 
general and abstract model
capturing uniform sparseness of graphs. These classes generalize many 
familiar classes of sparse graphs, such as planar graphs, graphs 
of bounded treewidth,  graphs of bounded degree, and, in fact, 
all classes that exclude a fixed 
topological minor.
Formally, a class $\CCC$ of graphs is {\em{nowhere dense}} if there is a function $t\colon \N\to \N$ such that for every $r\in \N$, 
no graph $G$ in~$\CCC$ contains the clique $K_{t(r)}$ on $t(r)$ vertices  as  {\em{depth-$r$ minor}},
i.e., as a subgraph of a graph obtained from $G$ by contracting mutually disjoint  subgraphs of radius at most $r$ to single vertices.
The more restricted notion of {\em{bounded expansion}} requires in addition that for every fixed $r$, there is a constant (depending on $r$) upper bound on the ratio 
between the number of edges and the number of vertices in depth-$r$ minors of graphs from $\CCC$.

The concept of nowhere denseness
turns out to be very robust, as witnessed by the fact that 
it admits multiple different characterizations, uncovering intricate connections to seemingly distant branches of mathematics.
For instance,  nowhere dense graph classes can be characterized 
by upper bounds on the density of bounded-depth (topological) 
minors~\cite{nevsetvril2010first,nevsetvril2011nowhere},
by uniform quasi-wideness~\cite{nevsetvril2011nowhere} (a notion introduced by
Dawar~\cite{dawar2010homomorphism} in the context of homomorphism
preservation properties), by low tree-depth
colorings~\cite{nevsetvril2008grad}, by generalized coloring
numbers~\cite{zhu2009coloring}, by sparse neighborhood
covers~\cite{GroheKRSS15,grohe2014deciding}, by a game called the
splitter game~\cite{grohe2014deciding}, and by the model-theoretic
concepts of stability and independence~\cite{adler2014interpreting}.
For a broader discussion on graph theoretic sparsity we refer to the book
of Ne\v{s}et\v{r}il and Ossona de Mendez~\cite{sparsity}.

The combination of combinatorial and logical methods yields a powerful toolbox for the study
of nowhere dense graph classes. In particular, 
the result of Grohe, Kreutzer and the second author~\cite{grohe2014deciding} exploits 
this combination in order to prove that  a given
first order sentence $\varphi$ can be evaluated in time 
$f(\varphi)\cdot n^{1+\epsilon}$ on $n$-vertex graphs from a fixed nowhere dense class of graphs $\CCC$, for any fixed real $\epsilon>0$ and some function $f$.
On the other hand, provided $\CCC$ is closed under taking subgraphs, it is known that if $\CCC$ is not nowhere dense,
then there is no algorithm with running time of the form $f(\varphi)\cdot n^c$ for any constant~$c$ under plausible complexity assumptions~\cite{dvovrak2013testing}.
In the terminology of parameterized complexity, these results show that the notion of nowhere denseness exactly characterized subgraph-closed classes where model-checking first order logic
is fixed-parameter tractable, and conclude a long line of research concerning the parameterized complexity of the model checking problem for sparse graph classes (see \cite{grokre11} for a survey).

\paragraph{Summary of contribution.} In this paper, we continue the study of the 
interplay of combinatorial and logical properties
of nowhere dense graph classes, and provide
new upper bounds on several
quantities appearing in their logical study.
Our main focus is on the notion of \emph{VC-density} for first order formulas. This concept originates from model theory and 
aims to measure the complexity of set systems definable by first order formulas, similarly to the better-known VC-dimension.
We give optimal bounds on the VC-density in nowhere dense graph classes, and in particular we show that these bounds are ``as good as one could hope for''.

We also provide new upper bounds on quantities related to {\em{stability}} and {\em{uniform quasi-wideness}} of nowhere dense classes.
For stability, we provide explicit and computable upper bounds on the \emph{ladder index} of any first order formula on a given nowhere dense class.
For uniform quasi-wideness, we give a new, purely combinatorial proof of polynomial upper bounds on {\em{margins}}, that is, functions governing this notion.
We remark that the existence of upper bounds as above is known~\cite{adler2014interpreting,siebertz2016polynomial}, but the proofs are based on nonconstructive arguments, 
notably the compactness theorem for first order logic. Therefore, the upper bounds are given purely existentially and are not effectively computable.
Contrary to these, our proofs are entirely combinatorial and effective, yielding computable upper bounds.

We now discuss the relevant background from logic and model theory, in order to motivate and state our results.

\paragraph{Model theory.}Our work is inspired by ideas from model theory,  more specifically, from \emph{stability theory}.
  The goal of {stability theory}
  is to draw certain dividing lines
  specifying abstract properties of 
  logical structures which allow the development 
  of a good structure theory. There are many such dividing lines, depending on the specifics of the desired theory. One such dividing line encloses the class of \emph{stable structures}, 
  another encloses the larger class of \emph{dependent structures} (also called \emph{NIP}). 
  A general theme is that the existence of a manageable structure is strongly related to
  the non-existence of certain forbidden patterns on one hand,
and on the other hand, to bounds on cardinalities
of certain \emph{type sets}.  
  Let us illustrate this phenomenon more concretely.

For a first order formula 
$\phi(\tup{x},\tup{y})$ 
 with free variables
split into  $\tup{x}$ and $\tup{y}$,
a \emph{$\phi$-ladder}
of length $n$ in a logical structure $\str A$ is a sequence $\tup{u}_1,\ldots, \tup{u}_{n},
\tup{v}_1,\ldots, \tup{v}_{n}$ of tuples of elements of $\str A$ 
such that 
$$\strA\models\phi(\tup{u}_i,\tup{v}_j)\ \Longleftrightarrow\ i\leq j\qquad \text {for all $1\leq i,j\le n$.}$$
The least  $n$ for which 
there is no $\phi$-ladder of length $n$ is 
the \emph{ladder index} 
of $\phi(\tup{x},\tup{y})$ in~$\str A$ (which may depend on the split of the
variables, and may be $\infty$ for some infinite structures $\str A$). A class of structures $\CCC$ is \emph{stable} if
the ladder index of every first order formula $\phi(\tup{x},\tup{y})$ over
structures from~$\CCC$ is bounded by a constant depending only on $\phi$ 
and~$\CCC$. This notion can be applied to a single infinite structure $\str A$, by considering the class consisting of $\str A$ only.
Examples of stable structures include $(\str N,=)$,
the field of complex numbers $(\str C,+,\times,0,1)$,
as well as any vector space $V$ over the field of rationals, treated as a group with addition. On the other hand, $(\str Q,\le)$ and the field of reals $(\str R,+,\times,0,1)$ are not stable,
as they admit a linear ordering which is definable by a first order formula.
Stable structures turn out to have more graspable  structure than unstable ones, as they can be equipped with various notions 
useful for their study, such as
\emph{forking independence} (generalizing linear independence in vector spaces)
and \emph{rank} (generalizing dimension).
We refer to the textbooks~\cite{pillay,tent2012course} for an introduction to stability theory.

  One of concepts studied in the early 
  years of stability theory is 
  a property of infinite graphs  called \emph{superflatness}, introduced by Podewski and Ziegler~\cite{podewski1978stable}.
  The definition of superflatness is the same as   of nowhere denseness, but 
   Podewski and Ziegler,
  instead of applying it to an infinite class of finite graphs, apply it to a single infinite graph.
  The main result of~\cite{podewski1978stable} is that every superflat graph is stable.   
As observed by Adler and Adler~\cite{adler2014interpreting}, 
this directly implies  the following:
 \begin{theorem}[\cite{adler2014interpreting,podewski1978stable}]\label{thm:adleradler}
 	Every nowhere dense class of graphs is stable. Conversely, any stable class of finite graphs which is subgraph-closed  is nowhere dense.
 \end{theorem}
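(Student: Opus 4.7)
I would prove the two directions separately.

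For the easier direction (stable, subgraph-closed, finite $\Rightarrow$ nowhere dense), I argue by contrapositive. Suppose $\CCC$ is subgraph-closed but not nowhere dense; fix $r$ such that for every $n$ some member of $\CCC$ contains the $r$-subdivision $K_n^{(r)}$ of $K_n$ (each edge replaced by a path of length $r+1$) as a subgraph. Since $\CCC$ is subgraph-closed, the $r$-subdivision $H_n^{(r)}$ of the half-graph $H_n$ on $\{u_1,\ldots,u_n\}\cup\{v_1,\ldots,v_n\}$, with edges $u_iv_j$ for $i\le j$, also lies in $\CCC$. Every internal subdivision vertex has degree two, so any shortest path between principal vertices of $H_n^{(r)}$ is forced to traverse whole subdivided edges; this gives $\dist_{H_n^{(r)}}(u_i,v_j)=(r+1)\cdot\dist_{H_n}(u_i,v_j)$. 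Since $H_n$ is bipartite, $\dist_{H_n}(u_i,v_j)=1$ when $i\le j$ and $\dist_{H_n}(u_i,v_j)=3$ otherwise. Consequently, the first-order formula $\phi_r(x,y):=\dist(x,y)\le r+1$, which is expressible at quantifier rank depending only on $r$, admits a ladder of length $n$ in $H_n^{(r)}\in\CCC$ for every $n$, so $\CCC$ is not stable.

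For the direction nowhere dense $\Rightarrow$ stable, I would proceed by induction on the quantifier rank of $\phi(\tup x,\tup y)$, combining three ingredients. First, Gaifman's locality theorem rewrites $\phi$ as a Boolean combination of $r$-local formulas $\psi_i(\tup x,\tup y)$ (with $r$ depending only on the quantifier rank of $\phi$) and basic local sentences. The sentences are parameter-independent, while a Ramsey-type pigeonhole shows that the ladder index of a Boolean combination of $t$ formulas is bounded by an iterated function of the individual ladder indices and $t$, reducing the task to bounding the ladder index of a single $r$-local formula $\psi$. Second, the uniform quasi-wideness characterization of nowhere denseness lets me extract, from any sufficiently long $\psi$-ladder in $G\in\CCC$, a sub-ladder of length $m$ whose tuples $\tup u_i,\tup v_j$ become pairwise $2r$-separated in $G-X$ for some vertex set $X$ of size bounded in terms of $r$ and the margin function of $\CCC$. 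Third, once the tuples are scattered, locality forces $\psi(\tup u_i,\tup v_j)$ to be determined by the $r$-neighborhood of $\tup u_i$ and that of $\tup v_j$ separately, together with the fixed interaction with $X$, which I absorb as additional parameters; conditionally on $X$ the formula then decomposes into a product-like form in $(\tup u_i,\tup v_j)$, and such a formula supports only a ladder of length at most one.

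The main obstacle is the third step: the $r$-neighborhood in $G-X$ is not in general the $r$-neighborhood in $G$, so I must verify that including $X$ as parameters still faithfully represents $\psi$ as evaluated in the original graph, while reducing the ``interaction'' between $\tup u_i$ and $\tup v_j$ enough that the induction hypothesis applies to each factor. Careful bookkeeping of locality radii, the size of $X$, the lengths of the parameter tuples, and the growth of quantifier rank through Gaifman normal form then converts this argument into an explicit, computable upper bound on the ladder index of $\phi$ in terms of $\phi$ and the quasi-wideness margins of $\CCC$, as promised in the introduction.
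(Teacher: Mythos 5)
The paper does not actually reprove \cref{thm:adleradler}; it cites it as the combination of Podewski--Ziegler (superflat graphs are stable) and Adler--Adler (transfer to classes of finite graphs). The closest thing the paper offers is \cref{thm:new-stable} in \cref{sec:stable}, a constructive strengthening of the implication ``nowhere dense $\Rightarrow$ stable,'' so that is the natural benchmark for your forward direction. Your converse direction -- contrapositive via subdivided half-graphs and a bounded-quantifier-rank distance formula -- is correct and essentially the standard argument (the paper never reproves it).

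Your forward direction is in the same spirit as the paper's \cref{thm:new-stable} (locality plus quasi-wideness) but has two genuine gaps, both of which are precisely where the paper does its real work. First, the ladder tuples $\bar u_i,\bar v_j$ have length $>1$ in general, and uniform quasi-wideness is a statement about sets of \emph{vertices}, not tuples; extracting a scattered sub-ladder requires a tuple version, which the paper develops as \cref{thm:uqw-tuples} via a nontrivial coordinate-by-coordinate projection argument -- your sketch silently assumes this extension. Second, your step~3 fails as written. After applying (tuple) quasi-wideness to the concatenations $\bar u_i\bar v_i$, distinct concatenated tuples become $r$-separated by the small separator~$S$, but within a single tuple the parts $\bar u_i$ and $\bar v_i$ (same index~$i$) stay arbitrarily close; so the sets of vertices appearing in all $\bar u_i$'s and in all $\bar v_j$'s are \emph{not} $r$-separated, contrary to what your ``pairwise $2r$-separated'' phrasing suggests. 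Moreover, the claim that the resulting formula supports ``a ladder of length at most one'' is an overstatement even in the idealized case: a single rectangle $\alpha(\bar u)\land\beta(\bar v)$ has ladder index~$2$, and what locality actually gives you is a Boolean matrix indexed by types, whose ladder index is controlled by the \emph{number of types}, not by a fixed small constant. The paper resolves both issues simultaneously with the alternating-partition device: it splits the scattered index set into $J_1\uplus J_2$ so that between any two $J_1$-indices there is a $J_2$-index, takes $X$ from $J_1$-tuples and $Y$ from $J_2$-tuples (now genuinely $r$-separated by $S$), applies the type-counting result \cref{cor:bound} (Gaifman locality combined with Feferman--Vaught) to bound the number of $\phi$-types of $\bar u_i$ over $Y$ by a constant $T$ depending only on $\phi$ and $|S|$, and gets the contradiction by pigeonhole: two $J_1$-indices $i<k$ share a type over $Y$, but some $j\in J_2$ strictly between them witnesses $\phi(\bar u_i,\bar v_j)\neq\phi(\bar u_k,\bar v_j)$. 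Your sketch has the right ingredients but is missing exactly these two technical moves; also note the paper proceeds directly (no induction on quantifier rank through Gaifman normal form), which sidesteps the Boolean-combination bookkeeping you would otherwise need.
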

 Thus, the notion of nowhere denseness (or superflatness) coincides with stability if we restrict attention to subgraph-closed graph classes.
 
The proof of Adler and Adler does not yield effective or computable upper bound on the 
ladder index of a given formula for a given nowhere dense class of graphs, as it relies on the result of Podewski and Ziegler, which in turn invokes compactness for first order logic.

\paragraph{Cardinality bounds.}
One of the key insights provided by the work of Shelah is that stable classes can be characterized by admitting strong upper bounds on the cardinality of \emph{Stone spaces}.
For a first order formula $\phi(\tup x,\tup y)$ 
with free variables partitioned into \emph{object variables} $\bar x$ and \emph{parameter variables} $\tup y$, a logical structure $\str A$,
and a subset of its domain $B$, define
the set of \emph{$\phi$-types} with parameters from~$B$, which are realized in 
$\strA$, as follows\footnote{Here, $S^\phi(\str A/B)$ is the set  of types which are \emph{realized} in $\str A$. In model theory,
one usually works with the larger class of \emph{complete types}. This distinction will not be relevant here.}:
\begin{equation}\label{eq:stone-def}
S^\phi(\strA/B)=\left\{\big\{\tup v\ \in B^{|\bar y|}\, \colon\, \strA\models\phi(\tup u,\tup v)\big\} \colon\, \tup u\in V(\strA)^{|\bar x|}\right\}\ \subset\  \Pow(B^{|\bar y|}).
\end{equation}
Here, $V(\strA)$ denotes the domain of $\strA$ and $\Pow(X)$ denotes the powerset of $X$.
Putting the above definition in words, every tuple $\tup u\in V(\strA)^{|\bar x|}$ defines the set of those tuples $\tup v\in B^{|\bar y|}$ for which $\phi(\tup u,\tup v)$ holds.
The set $S^\phi(\strA/B)$ consists of all subsets of $B^{|\bar y|}$ that can be defined in this way.

Note that in principle, $S^\phi(\str A/B)$
may be equal to $\Pow(B^{|\bar x|})$, and therefore have very large cardinality compared to $B$, even for very simple formulas. 
The following characterization due to Shelah 
(cf. \cite[Theorem 2.2, Chapter II]{shelah1990classification})
shows that for stable classes this does not happen.
\begin{theorem}\label{thm:Shelah-stone-space}
A class of structures $\cal C$
	is stable if and only if 
	there is 
	an infinite cardinal $\kappa$
	such that the following holds for all
	structures
	$\str A$ in the elementary closure\footnote{The elementary closure of $\cal C$ is 
	the class of all structures $\str A$
	such that  every first order sentence $\phi$
	which holds in $\str A$ also holds in some $\str B\in \cal C$. Equivalently, it is the class of 
   models of the theory of $\cal C$.} of~$\cal C$, and all~$B\subset V(\str A)$:
$$  \textit{if\quad}|B|\le \kappa\textit {\quad then\quad}
|S^\phi(\str A/B)|\le \kappa.$$
\end{theorem}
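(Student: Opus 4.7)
The plan is to establish the two implications separately, using classical stability-theoretic machinery together with compactness for first order logic.

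For stability $\Rightarrow$ type-count bound, I would assume that every formula $\phi(\tup x,\tup y)$ has ladder index at most some finite $d(\phi)$ on $\CCC$. Since ``the ladder index of $\phi$ is at most $d$'' is a first order property (just the negation of the existence of a $\phi$-ladder of length $d{+}1$), this bound transfers verbatim to every structure in the elementary closure of $\CCC$. Fixing such $\phi$, I would then associate to each partial $\phi$-type its Shelah \emph{$2$-rank}, and argue that bounded ladder index forces this rank to be finite and uniformly bounded by some $d'(\phi)$ on all partial $\phi$-types consistent in any $\strA$ in the elementary closure. An inductive count on the rank --- at each drop of rank, the number of inequivalent refinements by formulas with parameters in $B$ is at most $|B|$ --- then shows that $|S^\phi(\strA/B)|$ is polynomial in $|B|$. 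Choosing $\kappa$ to be any infinite cardinal with $\kappa^{\aleph_0}=\kappa$ (for instance $\kappa=2^{\aleph_0}$ in a countable language) provides a single $\kappa$ that handles every formula $\phi$ simultaneously.

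For the converse, assume some formula $\phi$ has unbounded ladder index on $\CCC$, and fix an infinite $\kappa$. By compactness applied to the theory of $\CCC$, the elementary closure contains a model admitting an infinite $\phi$-ladder. Applying Ramsey's theorem extracts an \emph{indiscernible} sub-ladder, and one more round of compactness blows this up into an indiscernible system $(\tup v_\eta)_{\eta\in 2^{<\kappa}}$ indexed by the binary tree of height $\kappa$, along which every branch $\xi\in 2^\kappa$ determines a consistent mixed $\phi$-type (alternating between $\phi$ and $\neg\phi$ according to $\xi$). Realizing the $2^\kappa$ resulting types in a sufficiently saturated elementary extension then produces more than $\kappa$ distinct $\phi$-types over the parameter set $B=\{\tup v_\eta : \eta\in 2^{<\kappa}\}$ of cardinality $\kappa$, contradicting any purported bound.

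The principal obstacle lies in the stable direction: one has to set up Shelah's $\phi$-$2$-rank carefully, verify that bounded ladder index forces finite rank on every consistent partial type in the elementary closure, and perform the inductive count tightly enough to bound $|S^\phi(\strA/B)|$ uniformly in $|B|$ and, ultimately, uniformly across all formulas. The unstable direction, while conceptually cleaner, still requires some care in constructing a branching indiscernible tree from a mere infinite ladder --- the standard route combines Ramsey-type indiscernibility with Shelah's dichotomy between the independence property and the strict order property, both of which supply the desired branching configuration.
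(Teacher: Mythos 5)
The paper does not give a proof of \cref{thm:Shelah-stone-space}; it is stated as background with a citation to Shelah's \emph{Classification Theory}~\cite{shelah1990classification} (Theorem~2.2, Chapter~II). There is therefore no paper-internal proof to compare against; the comparison can only be with the standard argument you are reconstructing.

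Your sketch of the forward direction is sound in outline: bounded ladder index transfers to the elementary closure (being an elementary property), forces finite Shelah $2$-rank for $\phi$, and an inductive count along rank drops bounds $|S^\phi(\strA/B)|$ by a polynomial in $|B|$. One remark: for the per-formula count a polynomial bound already yields $|S^\phi(\strA/B)|\le\kappa$ whenever $\kappa$ is infinite and $|B|\le\kappa$, so the extra requirement $\kappa^{\aleph_0}=\kappa$ is superfluous here; it would matter only if one were counting complete types over $B$ rather than $\phi$-types.

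The converse direction has a genuine gap in the cardinal arithmetic. You index the indiscernible tree by $2^{<\kappa}$ and take $B=\set{\tup v_\eta:\eta\in 2^{<\kappa}}$, asserting that $B$ has cardinality $\kappa$. In general $|2^{<\kappa}|=\sum_{\alpha<\kappa}2^{|\alpha|}$ strictly exceeds $\kappa$: for instance $|2^{<\aleph_1}|=2^{\aleph_0}$, which may well be $>\aleph_1$. In that case $|B|\le\kappa$ fails and no contradiction with the hypothesised bound is obtained; worse, one is not even guaranteed $2^\kappa>|2^{<\kappa}|$, since e.g.\ $2^{\aleph_0}=2^{\aleph_1}$ is consistent. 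The standard repair is to take the tree height $\mu$ to be the \emph{least} cardinal with $2^\mu>\kappa$; since $2^\kappa>\kappa$ one has $\mu\le\kappa$, and by minimality every $\nu<\mu$ satisfies $2^{|\nu|}\le\kappa$, so $|2^{<\mu}|\le\kappa\cdot\kappa=\kappa$, giving a parameter set of size $\le\kappa$ with $2^\mu>\kappa$ distinct branch types. Alternatively one can avoid trees entirely and realize a dense $\phi$-order indexed by a linear order $I$ of size $\le\kappa$ with more than $\kappa$ Dedekind cuts (using $\mathrm{ded}(\kappa)>\kappa$ for every infinite $\kappa$); this is in fact what is needed in the strict-order-property case, where a binary branching configuration is not directly available from the order itself. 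Without one of these fixes, the argument does not establish failure of the bound for \emph{all} infinite $\kappa$, which is what the contrapositive of the right-to-left implication requires.
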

Therefore,
Shelah's result provides an upper bound on the number of types, albeit using infinite cardinals, elementary limits, and infinite parameter sets.
 The cardinality bound provided by \cref{thm:Shelah-stone-space},
 however, does not seem to  immediately translate to a result of finitary nature. As we describe below,
 this can be done using the notions of {\em{VC-dimension}} and {\em{VC-density}}.

\paragraph{VC-dimension and VC-density.} The notion of VC-dimension was introduced by 
Vapnik and Chervonenkis~\cite{chervonenkis1971theory} as a measure of complexity of set systems, or equivalently of hypergraphs.
Over the years it
has found important applications in many areas of
statistics, discrete and computational geometry, 
and learning theory. 

Formally, VC-dimension is defined as follows. 
Let $X$ be a set and let  $\FFF\subseteq \Pow(X)$ 
be a family of subsets of $X$.
A subset $A\subseteq X$ is \emph{shattered by $\FFF$} if
$\{A\cap F\colon F\in \FFF\}=\Pow(A)$; that is, every subset of $A$ can be obtained as the intersection of some set from $\FFF$ with $A$. 
The \emph{VC-dimension},
of $\FFF$ is the maximum size of a subset $A\subseteq X$ that is shattered by
$\FFF$.

As observed by Laskowski~\cite{laskowski1992vapnik}, VC-dimension can be connected to concepts from stability theory introduced by Shelah.
For a given structure $\str A$, parameter set $B\subseteq V(\str A)$, and formula $\phi(\bar x,\bar y)$,
we may consider the family $S^\phi(\str A/B)$ of subsets of $B^{|\tup y|}$ defined using equation~\eqref{eq:stone-def}.
The \emph{VC-dimension} of $\phi(\bar x,\bar y)$ on $\str A$ is the VC-dimension of the family $S^\phi(\str A/V(\str A))$. 
In other words, the VC-dimension of $\phi(\bar x,\bar y)$
on $\str A$ is the largest cardinality of a finite
set $I$ for which there exist families of tuples $(\bar u_i)_{i\in I}$ and $(\bar v_J)_{J\subset I}$
of elements of $\str A$
such that  $$\strA\models\phi(\tup{u}_i,\tup{v}_J)\Longleftrightarrow i\in J\qquad \text {for all $i\in I$ and $J\subset I$.}$$
A formula $\phi(\bar x,\bar y)$ is \emph{dependent} on a class of structures $\cal C$
if there is a bound $d\in\N$ such that the VC-dimension of $\phi(\bar x,\bar y)$ on $\str A$ is at most $d$ for all $\str A\in\cal C$.
It is immediate from the definitions  that if a formula $\phi(\bar x,\bar y)$ is stable over $\cal C$, then it is also dependent on $\cal C$ (the bound being the ladder index). 
A class of structures  $\cal C$ is {\em{dependent}} if every formula $\phi(\bar x,\bar y)$ is dependent over $\cal C$. 
In particular, every stable class is dependent, and hence, by \cref{thm:adleradler}, every nowhere dense class of graphs is dependent.
Examples of infinite dependent structures (treated as singleton classes) include 
$(\mathbb Q,\le )$ and the field of reals $(\mathbb R,\times,+,0,1)$. 

One of the main properties of VC-dimension is that it implies polynomial upper bounds on the number of different ``traces'' that a set system can have on a given parameter set.
This is made precise by the well-known Sauer-Shelah Lemma, stated as follows.
\begin{theorem}[Sauer-Shelah Lemma, \cite{chervonenkis1971theory,sauer1972density, shelah1972combinatorial}]\label{thm:sauer-shelah}
  For any family $\FFF$ of subsets of a set $X$, if the VC-dimension of $\FFF$ is $d$,
  then for every finite $A\subset X$,
$$|\setof{A\cap F}{F\in {\cal F}}|\le c\cdot |A|^d,  
\qquad\textit{where $c$ is a universal constant.}$$
\end{theorem}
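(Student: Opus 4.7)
The plan is to follow the classical route: reduce to the finite case, then prove the finer ``Pajor-Sauer-Shelah'' bound via induction.

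The first step is to pass from the (possibly infinite) ground set $X$ to the finite set $A$. Define the \emph{trace family} $\FFF|_A = \setof{A\cap F}{F \in \FFF} \subseteq \Pow(A)$. Any set $B \subseteq A$ shattered by $\FFF|_A$ is also shattered by $\FFF$, so the VC-dimension of $\FFF|_A$ is at most $d$. Since the quantity we want to bound is exactly $|\FFF|_A|$, we are reduced to showing that for a finite set $Y$ with $|Y|=n$ and a family $\GGG \subseteq \Pow(Y)$ of VC-dimension at most $d$, one has $|\GGG| \leq \sum_{i=0}^{d}\binom{n}{i}$. The right-hand side is bounded by $c\cdot n^{d}$ for a universal constant $c$ (say $c=d+1$ absorbed into the asymptotic notation, with small-$n$ cases handled trivially).

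The main step is the sharper inequality due to Pajor: for every finite family $\GGG \subseteq \Pow(Y)$,
\[
|\GGG| \ \leq\ \big|\setof{S\subseteq Y}{S \text{ is shattered by } \GGG}\big|.
\]
Once this is proved, the theorem follows immediately, because any shattered set has size at most $d$, so the right-hand side is bounded by $\sum_{i=0}^{d} \binom{n}{i}$. I would prove Pajor's inequality by induction on $|Y|$. The base case $Y=\emptyset$ is trivial (both sides equal $1$ or $0$). For the inductive step, pick an element $y \in Y$ and split
\begin{align*}
\GGG_1 &\ =\ \setof{G\setminus\{y\}}{G\in \GGG},\\
\GGG_2 &\ =\ \setof{G\in \GGG}{y\notin G \text{ and } G\cup\{y\}\in \GGG},
\end{align*}
both viewed as families of subsets of $Y\setminus\{y\}$. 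A direct count gives $|\GGG| = |\GGG_1| + |\GGG_2|$. By induction, $|\GGG_1|$ is at most the number of subsets of $Y\setminus\{y\}$ shattered by $\GGG_1$, and similarly for $\GGG_2$. The key observation is the injection between shattered families: if $S \subseteq Y\setminus\{y\}$ is shattered by $\GGG_1$, then $S$ is also shattered by $\GGG$; and if $S$ is shattered by $\GGG_2$, then $S\cup\{y\}$ is shattered by $\GGG$ (because for each trace one can choose the representative with or without $y$). These two families of shattered sets of $Y$ are disjoint, so summing gives the required bound for $\GGG$.

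The only slightly delicate point, which I would single out as the main technical obstacle, is verifying the two implications in the injection argument: in particular, showing that a set $S$ shattered by $\GGG_2$ yields $S\cup\{y\}$ shattered by $\GGG$. This requires checking that for every $T\subseteq S$, the sets $T$ and $T\cup\{y\}$ can both be realized as traces of members of $\GGG$ on $S\cup\{y\}$, which uses precisely the pairing built into the definition of $\GGG_2$. All other steps, including the reduction to the finite case and the summation of binomial coefficients, are routine.
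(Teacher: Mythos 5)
The paper states the Sauer--Shelah Lemma as a cited classical result and does not include its own proof, so there is no internal argument to compare against. Your proposal is the standard proof via Pajor's inequality, and it checks out: the identity $|\GGG| = |\GGG_1| + |\GGG_2|$ holds because $G \mapsto G\setminus\{y\}$ maps $\GGG$ onto $\GGG_1$ with fibers of size two exactly over the members of $\GGG_2$; sets shattered by $\GGG_1$ are shattered by $\GGG$ since $\GGG_1$ is a trace of $\GGG$; and your pairing argument correctly shows that if $S\subseteq Y\setminus\{y\}$ is shattered by $\GGG_2$, then $S\cup\{y\}$ is shattered by $\GGG$ (for $T\subseteq S$ take a witness $G\in\GGG_2$ with $G\cap S=T$, and use $G$ or $G\cup\{y\}\in\GGG$ according to whether $y$ should be excluded or included). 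The images of these two maps into the shattered sets of $\GGG$ are disjoint, since one lands in sets avoiding $y$ and the other in sets containing $y$, which closes the induction. One minor imprecision: the parenthetical suggestion to take ``$c=d+1$ absorbed into the asymptotic notation'' yields a $d$-dependent constant, while the statement asserts a universal one. A universal $c$ does hold here --- the ratio $\sum_{i=0}^{d}\binom{n}{i}\big/n^{d}$ is bounded by $2$ over all $n\ge 1$ and $d\ge 0$, with the extremal case $n=1$ --- but this merits a line of justification (e.g.\ $\binom{n}{d}\le n^d/d!$ together with a direct check for small $n$) rather than being absorbed silently.
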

In particular, this implies that 
in a dependent class of structures $\cal C$, 
for every formula $\phi(\bar x,\bar y)$
there exists some constant $d\in \N$
such that
\begin{equation}\label{eq:nip}
|S^\phi(\str A/B)|\le c\cdot |B|^d,	
\end{equation}
for all $\str A\in\cal C$ and finite $B\subset V(\str A)$.
Unlike~\cref{thm:Shelah-stone-space}, this result 
is of finitary nature: it provides quantitative upper bounds on the number of different definable subsets of a given finite parameter set. 
Together with~\cref{thm:adleradler}, this implies that for every nowhere dense class of graphs $\cal C$
and every first order formula $\phi(\bar x,\bar y)$,
there exists a constant $d\in\N$ such that~\eqref{eq:nip} holds. 

However, the VC-dimension $d$ may be enormous and it highly depends on $\cal C$ and the formula $\phi(\bar x,\bar y)$.
This suggests investigating quantitative bounds of the form \eqref{eq:nip} for exponents smaller than the VC-dimension $d$, as it is conceivable that the combination of bounding VC-dimension and applying
Sauer-Shelah Lemma yields a suboptimal upper bound. Our main goal is to decrease this exponent drastically in the setting of nowhere dense graph classes.

The above discussion motivates the notion of \emph{VC-density}, a notion closely related to VC-dimension.
The \emph{VC-density} (also called the 
\emph{VC-exponent})
of a set system $\cal F$
on an infinite set $X$ is the infimum of all reals $\alpha>0$ such that 
$|\setof{A\cap F}{F\in \cal F}|\in \Oof(|A|^\alpha)$, for all finite $A\subset X$. 
Similarly, the VC-density of a formula $\phi(\bar x,\bar y)$ over a class of structures~$\cal C$
is the infimum of all reals $\alpha>0$
such that $|S^\phi(\str A/B)|\in \Oof(|B|^\alpha)$,
for all $\str A\in \cal C$ and all finite $B\subset V(\str A)$.
The Sauer-Shelah Lemma
implies that the VC-density (of a set system, or of a formula over a class of structures) is bounded from above by the VC-dimension. 
However, in many cases, the VC-density may be much smaller than the VC-dimension. Furthermore, it is the VC-density, rather than VC-dimension, that is actually relevant in combinatorial
and algorithmic applications~\cite{Bronnimann1995,matouvsek1998geometric,Matousek:2004:BVI:1005787.1005789}, see also \cref{sec:ep}.
We refer to~\cite{aschenbrenner2016vapnik} for an overview of 
applications of VC-dimension and VC-density in model
theory and to the surveys~\cite{furedi1991traces,matouvsek1998geometric} 
on uses of VC-density in
combinatorics. 

\paragraph{The main result.}
Our main result, \cref{thm:vc-density} stated below, improves the bound~\eqref{eq:nip} for classes of sparse graphs
by providing essentially the optimum exponent.

 \newcounter{vcupper}
 \setcounter{vcupper}{\thetheorem}
 \begin{theorem}\label{thm:vc-density}
Let $\CCC$ be a class of graphs and let $\phi(\tup x,\tup y)$ be a first order formula
with free variables  partitioned  into object variables $\bar x$  and parameter variables $\bar y$. Let $\ell=|\bar x|$. Then:
\begin{enumerate}[(1)]
\item If $\CCC$ is nowhere dense, then for every $\epsilon>0$ 
there exists a constant~$c$ such that for every $G\in \CCC$ and every nonempty
$A\subseteq V(G)$, we have $|S^\phi(G/A)|\leq c\cdot |A|^{\ell+\epsilon}.$
\item If $\CCC$ has bounded expansion, then there exists a constant~$c$ such that for every $G\in \CCC$ and every nonempty $A\subseteq V(G)$, we have $|S^\phi(G/A)|\leq c\cdot |A|^\ell$.
\end{enumerate}
 \end{theorem}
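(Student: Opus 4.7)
The plan is to combine Gaifman's locality theorem with the structural properties of nowhere dense classes, reducing type counting to counting ``$r$-local profiles'' of $\ell$-tuples over $A$. First, I would apply Gaifman's theorem to $\phi(\bar x, \bar y)$ to obtain an equivalent Boolean combination of basic $r$-local formulas (around the free variables) together with local sentences not involving the free variables, where $r$ depends only on $\phi$. The key consequence is that the trace $F_{\bar u} := \{\bar v \in A^{|\bar y|} \colon G \models \phi(\bar u, \bar v)\}$ is determined by the ``$r$-type'' of $\bar u$ over $A$, that is, the isomorphism class of the induced subgraph on a bounded-radius neighborhood of $\bar u \cup A$ together with appropriate distance data. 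It therefore suffices to bound the number of $r$-types of $\ell$-tuples over $A$ by $c|A|^{\ell+\epsilon}$ (respectively $c|A|^\ell$ in the bounded expansion case).

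For the base case $\ell=1$, the heart of the argument uses the weak coloring order $\sqsubseteq$ of $G$. In a nowhere dense class, $\wcol_r(G) \leq O(|V(G)|^\epsilon)$ for every fixed $r,\epsilon$, so each vertex $u$ has a small backbone $\WReach_r[u]$ in this order. The $r$-type of $u$ over $A$ is then determined by how this backbone interacts with the weakly-reachable sets of vertices of $A$; a neighborhood-complexity-style count yields at most $c|A|^{1+\epsilon}$ distinct $r$-types, sharpening to $c|A|$ in the bounded expansion case where $\wcol_r$ is a constant. For the inductive step $\ell-1 \Rightarrow \ell$, I would split $\bar u = (\bar u', u_\ell)$ and, for each fixed $\bar u'$, use the base case to bound the number of $r$-types of $u_\ell$ over $A \cup \bar u'$ by $c(|A|+\ell-1)^{1+\epsilon}$. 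To handle the variation in $\bar u'$, I would view the formula's ``slice'' at $u_\ell$ as a derived formula with $\ell-1$ object variables and apply the inductive hypothesis, then multiply the two bounds, rescaling $\epsilon$ as needed to absorb the cumulative error.

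The main obstacle is ensuring the inductive step does not blow up the effective parameter set from $A$ to something like $V(G)^{\ell-1}$, which would ruin the exponent. Concretely, we must bound the number of $\bar u'$-equivalence classes (where two $\bar u'$'s are equivalent if they induce the same slice of $r$-types of $u_\ell$) by a quantity polynomial in $|A|$, even though $\bar u'$ ranges over all of $V(G)^{\ell-1}$. I expect this to require a compositional argument, combining the base case applied to a well-chosen auxiliary formula with the inductive hypothesis applied to the ``profile slice'' formula, so that both the $\bar u'$-variation and the $u_\ell$-variation are controlled by quantities of order $|A|^{1+\epsilon}$ and $|A|^{(\ell-1)+\epsilon}$ respectively. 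The polynomial nature of the quasi-wideness margins and weak-coloring bounds established earlier in the paper is exactly what allows the constants to remain effective and uniform across the $\ell$ recursive steps, rather than blowing up with the depth of the induction.
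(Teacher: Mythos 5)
Your plan diverges from the paper's proof in a way that hides a genuine gap. The paper does \emph{not} prove the theorem by induction on $\ell$; it handles $\ell$-tuples of object variables directly, and the key tool that makes this possible is \cref{thm:uqw-tuples}, the extension of uniform quasi-wideness to \emph{sets of tuples}. Your outline never mentions this (or any form of quasi-wideness), and the inductive step you propose instead does not go through.

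Concretely, consider $\ell = 2$. You would fix $u_1$, treat $\phi(u_1, x_2, \bar y)$ as a formula with one object variable and parameter set $A \cup \{u_1\}$, and invoke the base case to get $\Oof(|A|^{1+\epsilon})$ traces \emph{for each fixed $u_1$}. That bound is correct, but it is conditional on $u_1$, and $u_1$ ranges over all of $V(G)$, not over $A$. To finish you would need to bound, by $\Oof(|A|^{1+\epsilon})$ or so, the number of distinct \emph{slice maps} $u_2 \mapsto \{\bar v : G \models \phi(u_1, u_2, \bar v)\}$ as $u_1$ varies over $V(G)$. You acknowledge this obstacle and appeal to an unspecified ``profile slice'' formula with $\ell - 1$ object variables, but there is no such formula whose $S^{\psi}(G/A)$ is in bijection with these slice maps: a slice map is a function from $V(G)$ (or from types of $u_2$) into subsets of $A^{|\bar y|}$, and it is not determined by any $\phi$-type of $\bar u'$ over $A$ alone. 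The naive product $|A|^{(\ell-1)+\epsilon} \cdot |A|^{1+\epsilon}$ therefore does not arise, because the two factors are not counting independent objects. This is precisely the place where the paper instead appeals to tuple quasi-wideness: given any family of $\ell$-tuples with distinct types over $B$, one extracts a large subfamily whose tuples are mutually $2r$-separated by a small set $S$, then combines the locality lemma (the type over $B$ is determined by the type over $S$, \cref{cor:bound}) with the $r$-closure machinery and a projection/grouping argument to get the exponent $\ell$ directly, with no induction on $\ell$.

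Two smaller issues. First, your appeal to Gaifman locality is too coarse as stated: the trace of $\bar u$ over $A$ is not determined by a bounded-radius neighborhood of ``$\bar u \cup A$'', since $A$ can be the whole vertex set; the paper's usable version (\cref{lem:types}) requires an explicit $r$-separator $S$ between the tuple and $A$, and the whole argument is set up to manufacture such separators. Second, your base case via weak coloring numbers is in the right ballpark, but what you actually need is the $r$-closure / $r$-projection machinery of Drange et al.\ and Eickmeyer et al.\ (\cref{lem:closure-be}, \cref{lem:closure-nd}), which controls not just $r$-neighborhood traces but the full family of $r$-projections and their sizes; merely citing $\wcol_r$ bounds does not yet yield the bound on $|S^\phi(G/A)|$ for a general first-order $\phi$. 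So while some of your ingredients are cousins of the right ones, the structural core of the paper's proof -- tuple quasi-wideness plus $r$-separation-based locality -- is absent, and the coordinate-by-coordinate induction cannot substitute for it.
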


In particular, \cref{thm:vc-density} implies that
the VC-density of any fixed formula 
$\phi(\bar x,\bar y)$ over any nowhere dense class of graphs is $|\bar x|$, the number of object variables in $\phi$.

To see that the bounds provided by \cref{thm:vc-density} cannot be improved, consider a formula $\phi(\bar x,y)$ (i.e. with one parameter variable) expressing that $y$ is equal to one of the entries of 
$\bar x$. Then for each graph $G$ and parameter set $A$, $S^{\phi}(G/A)$ consists of all subsets of $A$ of size at most~$|\tup x|$, whose number is $\Theta(|A|^{|\tup x|})$. Note that
this lower bound applies to any infinite class of graphs, even edgeless ones.

We moreover show that, as long as we consider only subgraph-closed graph classes, the result of \cref{thm:vc-density} also cannot be improved in terms of generality.
The following result is an easy corollary of known characterizations of obstructions to being nowhere dense, respectively having bounded expansion.

 \newcounter{vclower}
 \setcounter{vclower}{\thetheorem}
   \begin{theorem}\label{thm:vc-density-lower-bound}
  Let $\CCC$ be a class of graphs which 
  is closed under taking subgraphs. 
  \begin{enumerate}[(1)]
  \item If $\CCC$ is not nowhere dense, then there is a formula 
  $\phi(x,y)$ such that for every $n\in \N$ there are $G\in\CCC$ and $A\subseteq V(G)$ 
  with $|A|=n$ and $|S^\phi(G/A)|=2^{|A|}$. 
  \item If $\CCC$ has unbounded expansion, then there is a formula 
  $\phi(x,y)$ such that for every $c\in \mathbb{R}$ there exist $G\in\CCC$ and a nonempty $A\subseteq V(G)$ with $|S^\phi(G/A)|>c|A|$. 
  \end{enumerate}
\end{theorem}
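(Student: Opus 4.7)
The plan is to reduce both parts to a distance analysis in subdivided graphs, in each case using the same kind of first-order formula
\[
\phi_r(x,y) \;:=\; \dist(x,y) \le r+1,
\]
which for fixed $r$ is expressible as a bounded existential quantification over walks of length at most $r+1$. The right value of $r$ is extracted from the standard characterizations of the obstructions to nowhere denseness and to bounded expansion in the subgraph-closed setting.

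For part~(1), the characterization states that a subgraph-closed class which is not nowhere dense contains, for some fixed $r \ge 0$, the exact $r$-subdivision of every clique $K_N$. Fix this $r$ and set $\phi := \phi_r$. Given $n$, consider the bipartite incidence graph $H_n$ on $A \sqcup B$, where $A = \{a_1, \ldots, a_n\}$, $B = \{b_S \suchthat S \subseteq [n]\}$, and $a_i b_S \in E(H_n)$ iff $i \in S$. Since $H_n$ is a subgraph of $K_{n+2^n}$, its exact $r$-subdivision $G_n$ is a subgraph of the exact $r$-subdivision of $K_{n+2^n}$, and so $G_n \in \CCC$ by subgraph closure. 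In $G_n$ the distance from $b_S$ to $a_i$ equals $r+1$ when $i \in S$ (via the direct subdivision path) and is at least $2(r+1)$ otherwise, since every alternative route traverses at least two whole subdivision paths. Hence the trace $\{v \in A \suchthat \phi(b_S, v)\}$ equals $\{a_i \suchthat i \in S\}$; every subset of $A$ is realized, so $|S^\phi(G_n/A)| = 2^{|A|}$.

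For part~(2), the topological analogue of the characterization yields $r \ge 0$ such that, for every $c' > 0$, some graph $H$ with $|E(H)|/|V(H)| > c'$ sits inside a graph of $\CCC$ after each edge is replaced by a path of length at most $r+1$; by subgraph closure, this whole $\le r$-subdivision $\tilde H$ already lies in $\CCC$. A pigeonhole over the $r+1$ possible subdivision lengths lets me pass to a subset of edges $E' \subseteq E(H)$ all subdivided into paths of some \emph{fixed} length $r'+1 \le r+1$, with $|E'| \ge |E(H)|/(r+1)$; then the exact $r'$-subdivision $G$ of $H' := (V(H), E')$ is a subgraph of $\tilde H$, hence in $\CCC$, and $|E(H')|/|V(H')| > c'/(r+1)$. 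Set $\phi := \phi_{r'}$ and $A := V(H')$. For each $e = \{a,b\} \in E'$, pick any interior vertex $w_e$ of the $r'$-long subdivision path; then $\dist(w_e, a), \dist(w_e, b) \le r'$, while for any $v \in A \setminus \{a,b\}$ every $w_e$-to-$v$ path must cross an entire additional subdivision path, giving $\dist(w_e, v) \ge r'+2$. Thus the trace of $w_e$ on $A$ equals precisely the edge $\{a,b\}$; distinct edges give distinct traces, so $|S^\phi(G/A)| \ge |E(H')| > (c'/(r+1))\cdot|A|$, which is $> c|A|$ as soon as $c' > c(r+1)$. The only genuinely non-trivial step is this pigeonhole reduction to exact subdivisions of a common length in part~(2); everything else is a direct distance calculation in a subdivided graph.
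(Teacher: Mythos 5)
Your Part~(1) matches the paper's proof almost verbatim: same incidence graph (your $H_n$ is the paper's $P(n)$), same exact-subdivision reduction via the characterization of somewhere dense subgraph-closed classes, same distance formula; the only differences are off-by-one conventions on subdivision lengths. That part is fine.

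Part~(2) takes a genuinely different route — you replace the paper's invocation of the quantitative bound of Reidl et al.\ relating the density of shallow topological minors to normed $r$-neighborhood complexity (\cref{lem:lower-be}) with a pigeonhole over subdivision lengths and a bare-hands construction of many traces from interior vertices. Where this works, it is cleaner and more elementary than the paper's argument. However, it has a genuine gap: the pigeonhole can return $r'=0$, i.e., a constant fraction of edges of the dense witness $H$ may be \emph{unsubdivided} inside $\tilde H$. In that case your set $E'$ gives the dense graph $H'=(V(H),E')$ sitting directly as a subgraph of $\CCC$, and the subdivision path of an edge $\{a,b\}$ has length $1$ and therefore \emph{no interior vertex} $w_e$; the construction collapses. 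This is not a removable corner case. The case $r'=0$ arises precisely when $\CCC$ itself has unbounded edge density, and there exist subgraph-closed, nowhere dense classes of unbounded density (density growing like $n^{o(1)}$ but unboundedly, e.g.\ $\Theta(\log n)$), so you cannot route around it through Part~(1). Moreover, for a dense graph in the class (e.g.\ a complete graph or complete bipartite graph), the traces $N[v]\cap A$ are \emph{not} automatically many — $K_n$ gives only $|A|+1$ distinct closed-neighborhood traces for every $A$ — so showing that some subgraph of the dense witness already has superlinearly many traces is exactly the nontrivial combinatorial content of \cref{lem:lower-be}, which your argument does not reprove. To close the gap you would either need to invoke that lemma (as the paper does) or supply a separate argument for the unsubdivided case, for instance by restricting $E'$ to the subdivided edges after first bounding $|E_1|$ by the density of $\CCC$ — but this last step requires bounded density, which you do not have.

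A secondary, easily-repaired issue: your formula is $\phi_{r'}$, but $r'$ is produced by the pigeonhole and hence depends on $c'$. The statement requires a single $\phi$ working for all $c$. This is fixed by noting that $r'$ ranges over the finite set $\{0,\ldots,r\}$, so one value $r^*$ is selected for arbitrarily large $c'$; fix $\phi=\phi_{r^*}$. Worth stating explicitly.
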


\paragraph{Neighborhood complexity.}
To illustrate \cref{thm:vc-density}, consider the case when
$G$ is a graph and  $\phi(x,y)$ is the formula with two variables $x$ and $y$ expressing that the distance between $x$ and $y$
is at most $r$, for some fixed integer $r$. In this case, $S^\phi(G/A)$ is the family consisting of all intersections $U\cap A$, for $U$ ranging over all balls of radius $r$ in $G$,
and  $|S^\phi(G/A)|$ is called the \emph{$r$-neighborhood complexity} of $A$.
The concept of $r$-neighborhood complexity in sparse graph classes has already been studied before.
In particular, it was proved by Reidl et al.~\cite{reidl2016characterising} that in any graph class of bounded expansion, the $r$-neighborhood complexity of any set of vertices $A$ is $\Oof(|A|)$.
Recently, Eickmeyer et al.~\cite{eickmeyer2016neighborhood} generalized this result to an upper bound of $\Oof(|A|^{1+\epsilon})$ in any nowhere dense class of graphs.
Note that these results are special cases of \cref{thm:vc-density}.

The study of $r$-neighborhood complexity on classes of bounded expansion and nowhere dense classes was motivated by algorithmic questions from the field of parameterized complexity.
More precisely, the usage of this notion was crucial for the development of a linear kernel for the {\sc{$r$-Dominating Set}} problem on any class of bounded expansion~\cite{drange2016kernelization},
and of an almost linear kernel for this problem on any nowhere dense class~\cite{eickmeyer2016neighborhood}.
We will use the results of~\cite{drange2016kernelization,eickmeyer2016neighborhood,reidl2016characterising} on $r$-neighborhood complexity in sparse graphs in our proof of \cref{thm:vc-density}.

\paragraph{Uniform quasi-wideness.}
One of the main tools used in our proof 
is the notion of \emph{uniform quasi-wideness},
introduced by Dawar~\cite{dawar2010homomorphism}
in the context of homomorphism preservation theorems.

Formally, a class of graphs $\CCC$  is \emph{uniformly quasi-wide} if for each integer $r\in\N$ there is a function
 $N\from \N\rightarrow \N$ and a constant  $s\in \N$ such
that for every $m\in \N$, graph $G\in \CCC$, and vertex subset $A\subseteq V(G)$ of size $\abs{A}\geq N(m)$,
there is a set $S\subseteq V(G)$ of size $\abs{S}\leq s$ and a set
$B\subseteq A\setminus S$ of size $\abs{B}\geq m$ which is $r$-independent in
$G-S$. Recall that a set $B\subseteq V(G)$ is {\em{$r$-independent}} in $G$ if all
distinct $u,v\in B$ are at distance 
larger than $r$ in $G$.

Ne\v{s}et\v{r}il and Ossona de Mendez proved that
the notions of uniform quasi-wideness and nowhere denseness coincide for 
classes of finite graphs~\cite{nevsetvril2011nowhere}. 
The proof of Ne\v{s}et\v{r}il 
and Ossona de Mendez goes back to a construction
of Kreidler and Seese~\cite{kreidler1998monadic} (see also Atserias et al.~\cite{atserias2006preservation}), 
and uses iterated Ramsey arguments. Hence the original bounds on 
the function $N_r$ are non-elementary. Recently, Kreutzer, Rabinovich and the second author
 proved that for each radius $r$, we may always choose the function~$N_r$ to be a polynomial~\cite{siebertz2016polynomial}. 
 However, the exact dependence of the degree of the polynomial on $r$ and on the class $\CCC$ itself
 was not specified in~\cite{siebertz2016polynomial}, as the existence of a polynomial bound is derived
from non-constructive arguments used by Adler and Adler in~\cite{adler2014interpreting} when showing that every nowhere dense class of graphs
is stable. We remark that polynomial bounds for uniform quasi-wideness are essential for some of its applications:
the fact that $N_r$ can be chosen to be polynomial was crucially used by Eickmeyer et al.~\cite{eickmeyer2016neighborhood} both to establish an almost linear upper bound on the
$r$-neighborhood complexity in nowhere dense classes, and to develop an almost linear kernel for the {\sc{$r$-Dominating Set}} problem.
We use this fact in our proof of \cref{thm:vc-density} as well.

In our quest for constructive arguments, we give a new construction giving polynomial bounds for uniform quasi-wideness.
The new proof is considerably simpler than that of~\cite{siebertz2016polynomial}
and gives explicit and computable bounds on the degree of the polynomial.
More precisely, we prove the following theorem; here, the notation $\Oof_{r,t}(\cdot)$ hides computable factors depending on $r$ and $t$.

\newcounter{uqw}
\setcounter{uqw}{\thetheorem}
\begin{theorem}\label{thm:new-uqw}
For all $r,t\in \N$ there is a polynomial  $N\colon \N\to \N$ with $N(m)=
\Oof_{r,t}{(m^{{(4t+1)}^{2rt}})}$, such that the following holds.
Let $G$ be a graph such that $K_t\not\minor_{\lfloor 9r/2\rfloor} G$, and
let $A\subseteq V(G)$ be a vertex subset of size at least $N(m)$, for a given $m$.
Then there exists a set $S\subseteq V(G)$ of size $|S|<t$ and a set $B\subseteq A\setminus S$ 
of size $|B|\geq m$ which is $r$-independent in $G-S$.
Moreover, given~$G$ and $A$, such sets $S$ and $B$ can be computed in time $\Oof_{r,t}(|A|\cdot |E(G)|)$. 
\end{theorem}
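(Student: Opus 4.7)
The plan is to prove the theorem by an iterative cluster-extraction procedure that builds the separator $S$ one vertex at a time. The key engine is a single reduction step: given a sufficiently large set $A$ in $G$, the step either finds a ``cluster'' (a common centre for many $A$-vertices, which is then added to $S$) or already extracts a large $r$-independent subset of $A$.

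For the reduction step, the plan is to look for a vertex $u$ that lies on shortest paths between many pairs of $A$-vertices. Concretely, for each pair $a, a' \in A$ with $d_G(a, a') \leq r$ fix a shortest path $P_{a, a'}$; a pigeonhole over the midpoints of these paths gives one of two alternatives: (a) there exists a vertex $u$ together with a subset $A' \subseteq A$ of size roughly $|A|^{1/(4t+1)}$ such that every vertex of $A'$ lies within distance $\lceil r/2 \rceil$ of $u$ in $G$; or (b) a large subset $B \subseteq A$ is already $r$-independent in $G$, in which case the procedure terminates with $S = \emptyset$ and this $B$.

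The iteration plan: in case (a), add $u$ to $S$, replace $A$ by $A'$ and $G$ by $G - u$, and repeat the reduction step. Each successful iteration shrinks $|A|$ by a $(4t+1)$-th root factor, so starting from $|A| = N(m) = \Oof_{r, t}(m^{(4t+1)^{2rt}})$ the iteration must terminate in case (b) within $\Oof(rt)$ rounds. The crucial point is that case (a) cannot succeed $t$ times: if $t$ cluster centres $u_1, \ldots, u_t$ are ever collected in $S$, then picking one $A$-vertex that survives every round gives, through its length-$\leq \lceil r/2 \rceil$ paths to each $u_i$ together with additional pairwise-connecting short paths between the $u_i$'s, a depth-$\lfloor 9r/2 \rfloor$ $K_t$-minor in $G$, contradicting the hypothesis. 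This forces $|S| < t$ automatically.

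The main obstacle will be the careful construction of this forbidden $K_t$-minor so that its depth genuinely stays within $\lfloor 9r/2 \rfloor$: each cluster centre contributes paths of length up to $\lceil r/2 \rceil$, but one must arrange the $\binom{t}{2}$ pairwise-connecting paths between branch sets without excessive overlaps, and the constant $9/2$ appears to come from chaining up to three path segments of length $\lceil r/2 \rceil$ (plus lower-order corrections) when routing adjacencies between distinct branch sets via the cluster vertices and the surviving $A$-vertex. The bound $N(m) = \Oof_{r, t}(m^{(4t+1)^{2rt}})$ then aggregates the per-iteration $(4t+1)$-power root losses over the $\Oof(rt)$ iterations, while the algorithmic claim follows since each iteration reduces to a BFS from $A$ together with counting incidences, costing $\Oof(|E(G)|)$ per iteration, and there are $\Oof(|A|)$ iterations in total.
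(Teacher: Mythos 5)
Your proposal aims for a one-pass iterative cluster-extraction working directly at radius $r$, whereas the paper's proof has a nested two-level structure: an inner iteration over cluster centres (the Case $r=2$ argument in \cref{sec:engine}) applied \emph{after contracting balls of radius $\lfloor (r-1)/2\rfloor$ around $A$}, and an outer induction on the radius, applying \cref{lem:engine} for each $r' = 1,2,\ldots,r$ in turn (\cref{sec:uqw}). The exponent $(4t+1)^{2rt}$ in the theorem arises precisely from compounding $t$ inner steps over $r$ outer steps, so your one-pass bookkeeping cannot reproduce it; indeed your own accounting is internally inconsistent, claiming both that case (a) cannot succeed $t$ times and that the iteration runs for $\Oof(rt)$ rounds.

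The central gap, however, is the source of the cluster centre. A plain pigeonhole over path midpoints does \emph{not} yield a vertex with roughly $|A|^{1/(4t+1)}$ close $A$-vertices: the pairs at small distance could have pairwise-distinct midpoints, and the sum $\sum_{u} |N_{\lceil r/2\rceil}[u] \cap A|$ has no a priori bound. The paper's proof obtains this via two non-trivial ingredients that are entirely absent from your sketch: (i) the \emph{type-tree} construction and the alternation-rank bound (\cref{lem:apex}, \cref{thm:alternation-rank-type-tree}), which is what makes the argument polynomial rather than Ramsey-exponential (the paper explicitly remarks that a Ramsey argument ``does not give a bound which is polynomial''); and (ii) the density bounds of Alon et al.\ on graphs excluding a shallow $K_t$-minor (\cref{lem:densitynd}--\cref{lem:diversity}), which are what force the existence of a high-degree apex when $A$ is large and $1$-independent. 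Without these, ``pigeonhole over midpoints'' is a wish, not an argument.

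Finally, the forbidden-$K_t$-minor step — which you flag as the main obstacle — is not resolved in your proposal, and it is not clear it can be, without the contraction trick. The paper's \cref{claim:at-most-t} works because after contracting radius-$\lfloor(r-1)/2\rfloor$ balls the cluster centres $v_i$ are genuine \emph{neighbours} of all surviving $A$-vertices, so the two-vertex branch sets $G[\{w_i, v_i\}]$ are trivially disjoint and pairwise adjacent, yielding a depth-$1$ minor. In your setting the $u_i$ are only at distance $\lceil r/2 \rceil$, the connecting paths may share vertices, the paths from the surviving $A$-vertex to the various $u_i$ are not disjoint (they share that vertex), and routing $\binom{t}{2}$ pairwise adjacencies through overlapping neighbourhoods is exactly the kind of mess that the paper's reduction to $r\in\{1,2\}$ after contraction is designed to sidestep. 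Your proposal identifies the right target but does not supply the mechanisms that make the proof work.
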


We remark
that even though the techniques employed to prove \cref{thm:new-uqw} are inspired by methods from stability theory, 
at the end we conduct an elementary graph theoretic reasoning. In particular, as asserted in the statement, the
proof be turned into an efficient algorithm.

We also prove a result extending~\cref{thm:new-uqw}
to the case where $A\subset V(G)^d$ is a set of \emph{tuples} of vertices, of any fixed length $d$.
This result is essentially an adaptation of an analogous result due to Podewski and Ziegler~\cite{podewski1978stable} in the infinite case,
but appears to be new in the context of finite structures.
This more general result turns out to be necessary in the proof of \cref{thm:vc-density}.

\paragraph{Local separation.}
A simple, albeit important notion which permeates our proofs
is a graph theoretic concept of \emph{local separation}.
Let $G$ be a graph, $S\subset V(G)$ a set of vertices,
and let $r\in \N$ be a number. We say that two  sets of vertices $A$ and $B$  are \emph{$r$-separated} by $S$ (in $G$) if every path from a vertex in $A$ to a vertex in $B$
of length at most $r$ contains a vertex from $S$ (cf.~Fig.~\ref{fig:sep}).
 \begin{figure}[h!]
 	\centering
 		\includegraphics[scale=0.3,page=1]{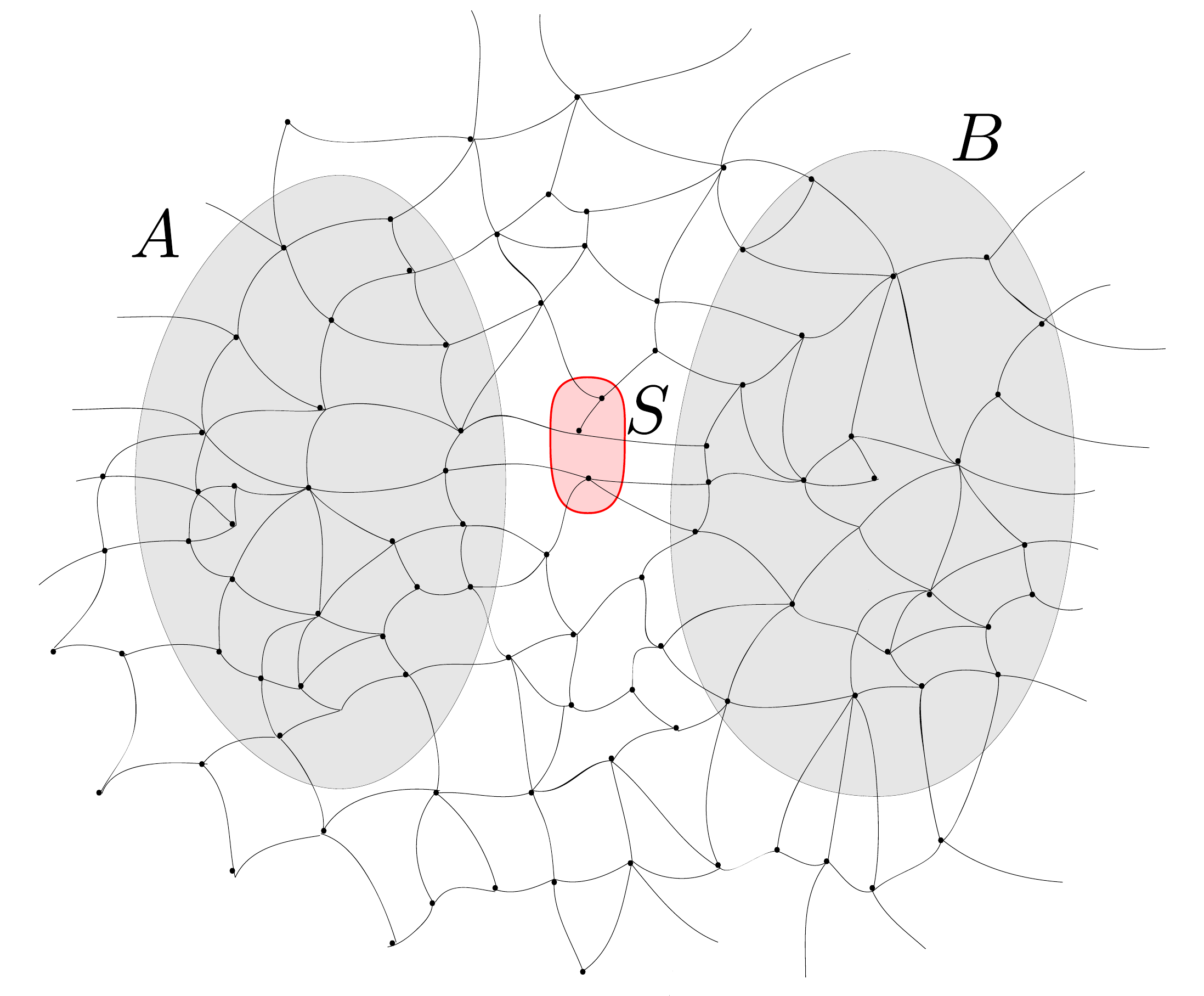}
 	\caption{The sets $A$ and $B$ are $2$-separated by $S$.
 	}
 	\label{fig:sep}
 \end{figure}
Observe that taking $r=\infty$ in $r$-separation yields the familiar notion of a separation in graph theory.
From the perspective of stability, separation (for $r=\infty$) characterizes \emph{forking independence} in superflat graphs~\cite{ivanov}. Therefore,
$r$-separation can be thought of as a local analogue of forking independence, for nowhere dense graph classes.

A key lemma concerning $r$-separation (cf.~\cref{cor:bound}) states that if $A$
and $B$ are $r$-separated by a set $S$ of size $s$ in $G$,
then for any fixed formula $\phi(\bar x,\bar y)$
of quantifier rank $\Oof(\log r)$,
the set  $\{\{\tup v\ \in B^{|\bar y|} : G\models\phi(\tup u,\tup v)\} : \tup u\in A^{|\bar x|}\}$ has cardinality bounded by a constant depending on $s$ and $\phi$ only (and not on $G,A,$ and $B$). 
This elementary result combines Gaifman's locality of first order logic (cf.~\cite{gaifman1982local}) and a Feferman-Vaught compositionality argument. This, in combination with the polynomial bounds 
for uniform quasi-wideness (\cref{thm:new-uqw}, and its extension to tuples~\cref{thm:uqw-tuples}), 
as well as the previous results on neighborhood complexity~\cite{drange2016kernelization,eickmeyer2016neighborhood}, are the main ingredients of our main result,~\cref{thm:vc-density}.

\paragraph{A duality theorem.}
As an example application of our main result,~\cref{thm:vc-density}, we prove the following
 result.
 
\newcounter{ep}
\setcounter{ep}{\thetheorem}
\begin{theorem}\label{thm:erdos-posa}
	Fix a nowhere dense class of graphs $\CCC$ and a 
	formula $\phi(x,y)$ with two free variables $x,y$.
	Then there is a function $f\from \N\to\N$ with the following property.
	Let $G\in \CCC$ be a graph and let $\cal G$
	be a family of subsets of $V(G)$ consisting of sets of the form $\setof{v\in V(G)}{\phi(u, v)}$, where~$u$ is some vertex of $V(G)$.
Then~$\tau({\cal G})\le f(\nu(\cal G))$.
\end{theorem}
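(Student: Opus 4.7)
The plan is to combine \cref{thm:vc-density} with classical tools from combinatorial geometry on set systems of small VC-density. Apply \cref{thm:vc-density}(1) to $\phi(x,y)$, viewed with $|\bar x|=1$, to obtain: for every $\epsilon>0$ there exists $c_\epsilon$ such that for every $G\in\CCC$ and nonempty $A\subseteq V(G)$, $|\{F\cap A:F\in\mathcal{G}\}|\le c_\epsilon|A|^{1+\epsilon}$.
In other words, the set system $\mathcal{G}$ has an almost-linear shatter function. In particular, $\mathcal{G}$ has bounded VC-dimension (either directly from the sub-quadratic shatter function, or by \cref{thm:adleradler} together with the fact that the ladder index of $\phi$ over $\CCC$ bounds the VC-dimension of $\mathcal{G}$).

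Let $k=\nu(\mathcal{G})$. The condition $\nu(\mathcal{G})\le k$ is exactly the \emph{$(k+1,2)$-property}: among any $k+1$ sets in $\mathcal{G}$, at least two intersect. Via the fractional Helly theorem of Matou\v sek~\cite{Matousek:2004:BVI:1005787.1005789}, the almost-linear shatter function of $\mathcal{G}$ yields that $\mathcal{G}$ has \emph{fractional Helly number $2$}: if a positive fraction of pairs of sets in $\mathcal{G}$ have nonempty intersection, then a positive fraction of sets in $\mathcal{G}$ share a common point. Combined with the $(k+1,2)$-property, the Alon--Kleitman $(p,q)$-theorem (see the survey~\cite{matouvsek1998geometric}) yields an upper bound on the fractional transversal $\tau^*(\mathcal{G})\le g(k)$ for a function $g$ depending only on $\phi$ and $\CCC$. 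Finally, the $\epsilon$-net rounding scheme of Br\"onnimann--Goodrich~\cite{Bronnimann1995}, applicable to set systems of bounded VC-dimension, converts this to an integer transversal of size $\tau(\mathcal{G})=O(\tau^*\log\tau^*)\le f(k)$ for an appropriate function $f$.

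The main obstacle is the quantitative refinement of Matou\v sek's fractional Helly theorem: proving that the almost-linear shatter function forces the fractional Helly number to be exactly $2$, not merely some larger constant depending on the VC-dimension. This refinement is essential, because without it one would need the $(p,q)$-property for $q\ge 3$, which is strictly stronger than the assumption $\nu(\mathcal{G})\le k$. The almost-linearity provided by \cref{thm:vc-density} (exponent $1+\epsilon$ rather than some larger polynomial) is precisely what rules out pathological behavior such as that of lines in a projective plane, where the VC-density is $2$, the fractional Helly number is $3$, and the Erd\H os--P\'osa property fails ($\nu=1$ yet $\tau=\Theta(\sqrt{n})$).
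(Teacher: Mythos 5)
Your approach is the same as the paper's: bound a shatter function of the definable set system via \cref{thm:vc-density} and feed it to Matou\v sek's $(p,q)$-machinery with $q=2$. The difference is that the paper invokes Matou\v sek's result (\cref{thm:pq}) as a packaged $(p,q)$-theorem for set systems whose \emph{dual} shatter function is $o(m^k)$, taking $k=2$ and $p=\nu(\GGG)+1$; you instead unpack this into its ingredients (fractional Helly, Alon--Kleitman, Br\"onnimann--Goodrich rounding), which is perfectly valid and makes the dependence on $\nu$ transparent, at the cost of more moving parts.

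There is, however, one concrete slip. You apply \cref{thm:vc-density} with $x$ as the object variable, which bounds the \emph{primal} shatter function $\pi_{\GGG}(m)=\max_{|A|=m}|\{F\cap A: F\in\GGG\}|$. But Matou\v sek's fractional Helly theorem (and the resulting $(p,q)$-theorem, \cref{thm:pq}) requires the \emph{dual} shatter function $\pi^*_{\GGG}(m)$ — the number of atoms of the Venn diagram of $m$ sets — to be $o(m^2)$. In our situation the fix is immediate: apply \cref{thm:vc-density} with $y$ as the object variable and $x$ as the parameter variable, i.e.\ to $\phi'(y,x)\equiv\phi(x,y)$; then $\pi^*_{\GGG}(m)$ is bounded by $|S^{\phi'}(G/\{u_1,\ldots,u_m\})|=\Oof(m^{1+\epsilon})$ for any choice of $m$ sets $F_{u_1},\ldots,F_{u_m}\in\GGG$. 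But as written your bound is on the wrong shatter function, and the projective-plane example you raise is precisely one where one shatter function has density $2$ while the Erd\H{o}s--P\'osa property fails, so the primal/dual distinction is not cosmetic.

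Finally, the ``main obstacle'' you flag — that an almost-linear shatter function should force fractional Helly number exactly $2$ — is not an open gap but exactly the content of the theorem in~\cite{Matousek:2004:BVI:1005787.1005789} that the paper cites as \cref{thm:pq}: $\pi^*_{\cal F}(m)=o(m^k)$ gives fractional Helly number $k$ and hence a $(p,k)$-theorem for all $p\ge k$. Once the dual shatter function bound is in place, the step you worry about is a black box.
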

\noindent Above, $\tau(\cal G)$ denotes the \emph{transversality} of $\cal G$, i.e., the least number of elements of a set $X$ which intersects every set in $\cal G$,
and $\nu(\cal G)$ denotes the \emph{packing number} of $\cal G$, i.e., the largest number of pairwise-disjoint subsets of $\cal G$.~\cref{thm:erdos-posa} is an immediate consequence of the bound given by~\cref{thm:vc-density} and a result of Matou{\v s}ek~\cite{Matousek:2004:BVI:1005787.1005789}.

We remark that a similar, but incomparable result
is proved by Bousquet and Thomass{\'e}~\cite{BousquetT15}.
In their result, the assumption on $\CCC$ is weaker, since they just require that it has \emph{bounded distance VC-dimension}, 
but the assumption on   $\cal G$ is stronger, as it is required to be the set of all balls of a fixed radius.

\paragraph{Stability.}
Finally, we observe that we can apply our  tools to give a constructive proof of the result of Adler and Adler~\cite{adler2014interpreting}
that every nowhere dense class is stable, which yields computable upper bounds on ladder indices.
More precisely, we translate the approach of Podewski and Ziegler~\cite{podewski1978stable} to the finite
and replace the key non-constructive application of compactness with a combinatorial argument based on Gaifman's locality,
in the flavor served by our observations on $r$-separation (\cref{cor:bound}).
The following theorem summarizes our result.

 \newcounter{stable}
 \setcounter{stable}{\thetheorem}
 \begin{theorem}\label{thm:new-stable}
There are computable functions $f\colon \N^3\to\N$ and $g\colon\N\to\N$ with the following property.
Suppose $\phi(\bar x,\bar y)$ is a formula of quantifier rank at most $q$ and with $d$ free variables.
Suppose further that $G$ is a graph excluding $K_t$ as a depth-$g(q)$ minor. Then the ladder index of $\phi(\bar x,\bar y)$ in $G$ is at most $f(q,d,t)$.
 \end{theorem}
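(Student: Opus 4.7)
The strategy is a contradiction argument that mirrors the classical Podewski--Ziegler reduction, but replaces compactness by an explicit, finitary combination of Gaifman's locality theorem, the tuple version of uniform quasi-wideness, and the $r$-separation bound (\cref{cor:bound}). Suppose for contradiction that in a graph $G$ excluding $K_t$ as a depth-$g(q)$ minor there is a $\phi$-ladder $\bar u_1,\ldots,\bar u_n,\bar v_1,\ldots,\bar v_n$ of length $n$, where $n$ will be chosen as an explicit function of $q,d,t$.

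First I would apply Gaifman's locality theorem to rewrite $\phi(\bar x,\bar y)$ as a Boolean combination of basic local sentences (whose truth value is determined by $G$ alone) and local formulas of locality radius $r$ around $\bar x\bar y$, where $r = \Theta(7^q)$. Set $g(q)$ to be a constant multiple of $r$ large enough that \cref{thm:new-uqw} and its tuple-extension \cref{thm:uqw-tuples} can be invoked on $G$ at scale $2r+1$, i.e., to obtain a set which is $(2r{+}1)$-independent after removing a small separator. With $g(q)$ chosen this way, the hypothesis of depth-$g(q)$ minor-freeness translates directly into the minor exclusion hypothesis required by uniform quasi-wideness.

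Next I would apply \cref{thm:uqw-tuples} to the family $A := \{\bar u_1,\ldots,\bar u_n,\bar v_1,\ldots,\bar v_n\}$ of $2n$ tuples: provided $n$ is large enough (polynomial of computable degree in the parameter $m$ to be chosen below), we obtain a set $S\subseteq V(G)$ of size less than $t$ and a subfamily $A'\subseteq A$ of size $m$ whose tuples are pairwise $(2r{+}1)$-independent in $G-S$. Classify each surviving tuple by its \emph{local $r$-$S$-type}: the isomorphism type of its $r$-neighbourhood in $G-S$ together with the pattern of adjacencies between the tuple and $S$ and the $r$-distance profile to $S$. The number of such types is bounded by a computable function $h(r,d,|S|) \le h(r,d,t)$. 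By a two-step Ramsey/pigeonhole argument — first fix a single $r$-$S$-type $\tau_u$ for all surviving $\bar u_i$'s, then a single type $\tau_v$ for all surviving $\bar v_j$'s, while retaining the ladder indexing — we extract a sub-ladder of length $k \ge m/h(r,d,t)^2$ in which all $\bar u$-tuples share type $\tau_u$ and all $\bar v$-tuples share $\tau_v$.

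Finally, inside this homogeneous, $(2r{+}1)$-separated sub-ladder, the Feferman--Vaught style composition that underlies \cref{cor:bound} forces $\phi(\bar u_{i_p},\bar v_{i_q})$ to depend only on the quadruple $(\tau_u,\tau_v,S,\phi)$, and not on the indices $p,q$. Hence the value of $\phi(\bar u_{i_p},\bar v_{i_q})$ is constant across all $p,q$, contradicting the ladder property as soon as $k\ge 2$. The explicit bound $f(q,d,t)$ is then obtained by composing the polynomial bound of \cref{thm:new-uqw}, the bound $h(r,d,t)$ on the number of local $S$-types, and the Ramsey step; since $r=\Theta(7^q)$ and $h$ is elementary in $r,d,t$, the resulting $f$ is manifestly computable. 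The main obstacle I anticipate is the combinatorial bookkeeping around the tuple--separator interface: to make the final locality argument go through, the $r$-$S$-type must record enough information about how each tuple's $r$-ball interacts with $S$ (and not just the ball in $G-S$) so that Feferman--Vaught composition across $S$ is legitimate, and this interface needs to be part of the type classified in the Ramsey step.
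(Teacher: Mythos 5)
There is a genuine gap in the middle step. You apply \cref{thm:uqw-tuples} to the family $A=\{\bar u_1,\ldots,\bar u_n,\bar v_1,\ldots,\bar v_n\}$ of $2n$ \emph{separate} tuples and then speak of ``retaining the ladder indexing'' to produce a sub-ladder. But nothing forces the surviving subfamily $A'$ to contain, for even a single index $i$, \emph{both} $\bar u_i$ and $\bar v_i$; in the worst case all $m$ surviving tuples could be $\bar u$'s, leaving no $\bar v$'s to fuel the ladder contradiction. More generally, even if some $\bar u$'s and some $\bar v$'s survive, the surviving index sets $I_u$ and $I_v$ need not interleave (e.g.\ $\max I_u\le\min I_v$), in which case all off-diagonal pairs $(i,j)$ with $i\in I_u,\,j\in I_v$ have the same sign $i\le j$ and constancy of $\phi(\bar u_i,\bar v_j)$ yields no contradiction. (There is also the cosmetic issue that $\bar u_i$ and $\bar v_j$ may have different lengths, so $A\not\subset V(G)^d$ for a single $d$, but this is easily padded.)

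The paper avoids both problems by a different decomposition: it applies the tuple-wideness theorem to the $n$ \emph{concatenated} $d$-tuples $\bar u_i\bar v_i$, so every surviving index carries both halves simultaneously, and then partitions the surviving index set $J$ into two \emph{interleaved} subsets $J_1,J_2$, letting $X$ collect the $\bar u$-vertices from $J_1$ and $Y$ the $\bar v$-vertices from $J_2$. Because $J_1\cap J_2=\emptyset$, $X$ and $Y$ are $r$-separated by $S$, so \cref{cor:bound} bounds $|S^\phi(X/Y)|$ by a constant $T$; pigeonhole on $T+1$ indices in $J_1$ then produces $i<k$ with equal $\phi$-types over $Y$, and the index $j\in J_2$ wedged between them gives the contradiction. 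If you adopt the concatenated-tuple formulation, your local-$S$-type plus Ramsey variant would also go through, because on any homogeneous set of size three, two off-diagonal pairs $(i_1,j)$ and $(i_3,j)$ with $i_1\le j<i_3$ already collide; the diagonal pairs $\bar u_i,\bar v_i$ (which need not be $S$-separated) are never used. The two final steps are thus interchangeable once the concatenation is in place; the paper's interleaved-$J_1/J_2$ argument is arguably cleaner because it invokes \cref{cor:bound} directly and avoids classifying local types explicitly. Your observation that $g(q)$ should be $\Theta(7^q)$ and that the colored-graph bookkeeping at the $S$-boundary is the point where Feferman--Vaught must be applied carefully are both correct and match the paper.
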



\pagebreak

\paragraph{Organization.} In \cref{sec:prelim} we recall some standard concepts from the theory of sparse graphs.
In~\cref{sec:uqw} we  prove \cref{thm:new-uqw}, improving the previously known bounds and making them constructive. We remark that this result is not needed in the proof of our main result,~\cref{thm:vc-density}. The following two sections
 contain the main tools needed in the proof of the main result:
in~\cref{sec:uqw-tuples} we formulate and prove the generalization of uniform quasi-wideness to tuples,~\cref{thm:uqw-tuples}, and 
in \cref{sec:gaifman} we discuss Gaifman locality for first order logic and derive an elementary variant concerning local separators. 
In \cref{sec:types} we prove our main result, \cref{thm:vc-density}, and the corresponding lower bounds, \cref{thm:vc-density-lower-bound}.
Finally, in \cref{sec:stable} we provide a constructive proof of the result of Adler and Adler, \cref{thm:new-stable}.

\paragraph{Acknowledgments.} We would like to
thank Patrice Ossona de Mendez for pointing us to the
question of studying VC-density of nowhere dense graph
classes.

\section{Preliminaries}\label{sec:prelim}
We  recall some basic notions from graph theory. 

\medskip
All graphs in this paper are finite, undirected and simple, that is, 
they do not have loops or parallel edges. Our notation is standard,
we refer to~\cite{diestel2012graph} for more background on 
graph theory. 
We write $V(G)$ for the vertex set of a graph $G$ and
$E(G)$ for its edge set. 
The {\em{distance}} between vertices $u$ and $v$ in $G$, denoted $\dist_G(u,v)$, is the length of a shortest path between $u$ and $v$ in~$G$.
If there is no path between $u$ and $v$ in $G$, we put $\dist_G(u,v)=\infty$.
The {\em{(open) neighborhood}} of a vertex $u$, denoted $N(u)$, is the set of neighbors of $u$, excluding $u$ itself.
For a non-negative integer $r$, by $N_r[u]$ we denote the {\em{(closed) $r$-neighborhood}} of $u$ which comprises vertices at distance at most $r$ from $u$; 
note that $u$ is always contained in its closed $r$-neighborhood. The \emph{radius} of a connected graph $G$ is the least integer $r$ such that there is some vertex $v$ of $G$ with $N_r[v]=V(G)$.

A {\em{minor model}} of a graph $H$ in $G$ is a family $(I_u)_{u\in V(H)}$ of pairwise vertex-disjoint connected subgraphs of $G$, called {\em{branch sets}},
such that whenever $uv$ is an edge in~$H$, there are $u'\in V(I_u)$ and $v'\in V(I_v)$ for which $u'v'$ 
is an edge in $G$.
The graph $H$ is a {\em{depth-$r$ minor}} of $G$, denoted $H\minor_rG$, if there is a minor model
$(I_u)_{u\in V(H)}$ of~$H$ in $G$ such that each $I_u$ has radius at most $r$.

A class $\CCC$ of graphs is \emph{nowhere dense} if there is a function 
$t\colon \N\rightarrow \N$ such that for all $r\in \N$ it holds that $K_{t(r)}\not\minor_r G$
for all $G\in \CCC$, where $K_{t(r)}$ denotes the clique on $t(r)$ vertices.
The class~$\CCC$ moreover has \emph{bounded expansion}
if there is a function $d\colon\N\rightarrow\N$ such that for all 
$r\in \N$ and all $H\minor_rG$ with $G\in\CCC$, the {\em{edge density}}
of $H$, i.e. $|E(H)|/|V(H)|$, is bounded by $d(r)$. Note that every 
class of bounded expansion is nowhere dense. The converse is not necessarily true in general~\cite{sparsity}.


A set $B\subseteq V(G)$ is called {\em{$r$-independent}} in a graph $G$ if  $\dist_G(u,v)>r$ for all
distinct $u,v\in B$. 
A class $\CCC$ of graphs is \emph{uniformly quasi-wide} if for every $r\in \N$ there is a number $s\in \N$
and a function $N\from\N\to\N$ such that
for every $m\in \N$, graph $G\in \CCC$, and vertex subset $A\subseteq V(G)$ of size $\abs{A}\geq N(m)$, there is a set
$S\subseteq V(G)$ of size $\abs{S}\leq s$ and a set
$B\subseteq A-S$ of size $\abs{B}\geq m$ 
which is $r$-independent in $G-S$.
Recall that Ne\v set\v ril and Ossona de 
Mendez proved~\cite{nevsetvril2011nowhere} that nowhere dense graph classes are exactly the same as uniformly quasi-wide classes. 
The following result of Kreutzer, Rabinovich and the second author~\cite{siebertz2016polynomial}
improves their result, by showing that the function $N$ can be taken polynomial:

\begin{theorem}[\cite{siebertz2016polynomial}]\label{thm:krs}
For every nowhere dense class $\CCC$
and for all $r\in \N$ there is a polynomial $N\from \N\to\N$ 
and a number $s\in \N$ such that the following holds.
Let be $G\in \CCC$ be aa graph and let $A\subset V(G)$ be a vertex subset of size at least $N(m)$, for a given $m$.
Then there exists a set $S\subset V(G)$ of size $|S|<s$
and a set $B\subset A-S$ of size $|B|\ge m$ which is $r$-independent in $G-S$.	
\end{theorem}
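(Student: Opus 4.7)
The plan is to derive the theorem from a constructive quantitative version (the one formalized as \cref{thm:new-uqw}): since $\CCC$ is nowhere dense, for the given $r$ there exists $t=t(\CCC,r)$ such that no $G\in\CCC$ contains $K_t$ as a depth-$\lfloor 9r/2\rfloor$ minor. It therefore suffices to prove a polynomial bound $N(m)$ under the hypothesis that $G$ excludes $K_t$ as a depth-$\lfloor 9r/2\rfloor$ minor, and set $s=t$. This is exactly the form of~\cref{thm:new-uqw}, from which the present theorem then follows as a direct specialization.

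For the quantitative statement, the strategy is induction on $r$. For the base case $r=1$, excluding $K_t$ as a shallow minor bounds the degeneracy of $G$ by a function $d=d(t)$. A greedy argument on $G[A]$ then extracts a large independent set after deleting at most $t-1$ high-degree vertices, which play the role of the separator $S$; this already gives a linear bound in $m$ at the base.

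For the inductive step, the first move is to apply the inductive hypothesis to obtain a large $(r-1)$-independent set $B'\subseteq A\setminus S_1$ in $G-S_1$, with $|S_1|<t$. Within $B'$, any violation of $r$-independence is witnessed by a path of length exactly $r$ in $G-S_1$ joining two vertices of $B'$. A counting argument on the internal vertices of such witnessing paths then splits into two cases: either (i) many of these paths pass through a common vertex $v$, which can be added to the separator to kill a constant fraction of bad pairs and recurse on a smaller~$B'$, or (ii) the witnessing paths give pairwise-disjoint connections realizing $K_t$ as a shallow minor, contradicting the hypothesis. Iterating (i) at most $t-1$ times yields a separator $S\supseteq S_1$ of size $<t$ and an $r$-independent subset $B\subseteq A\setminus S$ of the required cardinality.

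The main obstacle is controlling the exponent blow-up so that $N(m)$ remains polynomial. Each pass through the inductive step amplifies the size required of the intermediate set $B'$, and an additional multiplicative loss comes from the internal counting that distinguishes case (i) from case (ii). Tracking these losses carefully shows that each level of the recursion multiplies the exponent by a factor on the order of $4t+1$, ultimately producing a bound of the shape $N(m)=\Oof_{r,t}(m^{(4t+1)^{\Theta(rt)}})$, while ensuring that the separator size never exceeds $t-1$. Verifying that this bookkeeping indeed yields a genuinely polynomial (in $m$, with constant degree once $r,t$ are fixed) bound is the delicate part, and is what distinguishes the present constructive argument from the Ramsey-based proofs that produce only non-elementary bounds.
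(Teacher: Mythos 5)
Your opening move---reducing the theorem to the quantitative, excluded-shallow-minor statement that the paper records as \cref{thm:new-uqw} and observing that nowhere denseness supplies the needed $t$---is sound, and is indeed the intended relationship between the two statements in the paper (which cites~\cite{siebertz2016polynomial} for \cref{thm:krs} and then proves the stronger, constructive \cref{thm:new-uqw} from scratch). If you had stopped there, this would be a clean, if thin, derivation. The trouble is in the sketch of the quantitative argument that follows.

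The inductive step you describe is essentially the Ramsey-style argument of Ne\v{s}et\v{r}il and Ossona de Mendez, and it is precisely what the paper is at pains to \emph{replace} in order to obtain polynomial bounds. Two concrete gaps. First, the dichotomy you propose---``many witnessing paths pass through a common vertex'' versus ``the witnessing paths give pairwise-disjoint connections realizing $K_t$''---is not a genuine dichotomy without more work: a large set of length-$r$ paths among vertices of $B'$ can fail both horns (lots of partial overlaps, but no single heavily-used vertex and no clean disjoint clique model). The paper's way around this is the type-tree / alternation-rank construction (\cref{lem:apex} together with \cref{lem:number-of-nodes,thm:alternation-rank-type-tree}), which organises the set $A$ into a binary tree whose long branches are guaranteed to yield either a $2$-independent set, or a set of vertices pairwise at distance exactly $2$ to which \cref{lem:diversity} applies. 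Without that organisational step, ``tracking the losses'' does not obviously give a bound of the shape $m^{(4t+1)^{\Theta(rt)}}$; the naive counting gives tower-type bounds, and your paragraph asserts rather than argues the polynomial growth. Second, the claim that iterating case (i) ``at most $t-1$ times'' suffices, and that this yields $|S|<t$, is unsupported. In the paper's argument the iteration does not merely kill a fraction of bad pairs: after finding a high-degree apex $v$ it \emph{restricts} $A$ to $N(v)\cap A$, so that every previously chosen apex is adjacent to all surviving vertices of $A$; only this structural invariant lets one build a depth-$1$ model of $K_k$ from $k$ apices and force termination in $<t$ steps (\cref{claim:at-most-t}), and a separate geometric argument (in the wrap-up of \cref{thm:new-uqw}, using that apices sit at distance exactly $i/2$ from the $i$th independent set) is needed to show the \emph{global} separator across all radii has size $<t$ rather than $\Oof(rt)$. ``Kill a constant fraction of bad pairs and recurse on a smaller $B'$'' provides neither invariant, so neither the $t-1$ bound on iterations nor the final $|S|<t$ follows. (For \cref{thm:krs} as stated one could live with $|S|=\Oof(rt)$, but the polynomial-in-$m$ bound still rests on the unjustified dichotomy.)

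Minor remark: your base case via degeneracy is fine as an alternative to the Ramsey argument the paper uses for $r=1$, provided you use the shallow-minor hypothesis at depth $\geq 1$ (mere absence of a $K_t$ subgraph does not bound degeneracy, as $K_{n,n}$ shows), but since this is the easy case it does not affect the verdict.
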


As we mentioned, the proof of Kreutzer et al.~\cite{siebertz2016polynomial} relies on non-constructive arguments and does not yield explicit bounds on $s$ and (the degree of) $N$. 
In the next section, we discuss a further strengthening of this result, by providing explicit, computable bounds on $N$ and $s$.
\section{Bounds for uniform quasi-wideness}\label{sec:uqw}

In this section we prove \cref{thm:new-uqw}, which strengthens \cref{thm:krs} by providing an explicit polynomial $N$ and bound $s$,
whereas the bounds in~\cref{thm:krs} rely on non-constructive arguments. 
We note that~\cref{thm:krs} is sufficient to prove our main result,~\cref{thm:vc-density}, but is required in our proof of~\cref{thm:new-stable}, which is the effective 
variant of the result of Adler and Adler, \cref{thm:new-stable}.

%

\paragraph{General strategy.}
Our proof follows the same lines as the original proof of Ne\v set\v ril and Ossona de Mendez~\cite{nevsetvril2011nowhere}, with the difference that in the key technical lemma (\cref{lem:apex} below), 
we improve the bounds significantly by replacing a Ramsey argument with a refined combinatorial analysis.
The new argument essentially originates in the concept of {\em{branching index}} from stability theory. 

We first prove a restricted variant,~\cref{lem:engine} below, in which we assume that $A$ is already $(r-1)$-independent. Then, in order to derive
\cref{thm:new-uqw}, we apply the lemma iteratively for $r$ ranging from $1$ to the target value.

\begin{lemma}\label{lem:engine}
For every pair of integers $t,r\in \N$ there exists an integer $d<9r/2$ and a function $L\colon \N\to \N$ with $L(m)=\Oof_{r,t}{(m^{{(4t+1)}^{2rt}})}$ such that the following holds.
For each $m\in \N$, graph~$G$ with $K_t\not\minor_{d} G$, and
$(r-1)$-independent set $A\subseteq V(G)$ of size at least $L(m)$, there is a set $S\subseteq V(G)-A$ of size less than $t$ such that $A$ contains a subset $B$ of size $m$ which is $r$-independent in $G-S$.
Moreover, if $r$ is odd then $S$ is empty, and if $r$ is even,
then every vertex of $S$ is at distance exactly $r/2$ from every vertex of $B$.
Finally, given $G$ and $A$, the sets $B$ and $S$ can be computed in time $\Oof_{r,t}(|A|\cdot |E(G)|)$.
\end{lemma}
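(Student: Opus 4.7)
The plan is to adapt the stability-theoretic concept of \emph{branching index} to this combinatorial setting, iteratively extracting vertices that serve as midpoints of many distance-$r$ pairs within our candidate set. Since $A$ is $(r-1)$-independent, all pairwise distances in $A$ are already at least $r$, so the task reduces to eliminating pairs at distance exactly $r$. For such a pair $u,v$, a shortest $u$--$v$ path $p_0 p_1 \cdots p_r$ has a natural midpoint: a single vertex $p_{r/2}$ when $r$ is even, or an edge when $r$ is odd. In the even case, removing the midpoint vertex from $G$ destroys the path and is permitted via $S$; in the odd case no deletion is allowed, so instead we prune candidates directly.

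I construct a descending chain $A = A_0 \supseteq A_1 \supseteq \cdots \supseteq A_k$ together with an apex set $S = \{s_1, \ldots, s_k\} \subseteq V(G) \setminus A$. At step $i$, I search for a vertex $s$ that is the $r/2$-midpoint (or, more generally, lies at some fixed split of distances $(d_1, d_2)$ with $d_1 + d_2 = r$) of shortest $r$-paths joining a polynomial fraction of pairs from $A_i$. If such a dominating apex $s_{i+1}$ exists, I add it to $S$ (for even $r$) or exploit its role to refine the candidate set (for odd $r$), letting $A_{i+1}$ be the subset of $A_i$ attached to $s_{i+1}$ at the correct distance and retaining at least a $|A_i|^{1/c}$ fraction for a constant $c$ depending on $r,t$. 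If no such apex dominates, then a pigeonhole argument over midpoint-distance types extracts a subset of $A_i$ of size $m$ that is $r$-independent in $G - S$.

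Termination of this iteration at depth $k < t$ is forced by the depth-$d$ minor exclusion. The key observation is that the accumulated apexes $s_1, \ldots, s_k$, together with carefully chosen witnesses in $A_k$, form a minor model of $K_{k+1}$ of radius bounded by $\lfloor 9r/2 \rfloor - 1$: each branch set $I_i$ is rooted at $s_i$ and contains paths of length at most $\lceil r/2 \rceil$ to a dedicated $A_k$-witness for every other apex, with the adjacency between $I_i$ and $I_j$ realized through a witness attached at distance roughly $r/2$ from both. Carefully managing the overlaps between paths, so that branch sets remain pairwise disjoint while still connected, is the main technical obstacle in the proof and is what inflates the radius bound from $r/2$ up to the stated $\lfloor 9r/2 \rfloor$; under $K_t \not\minor_d G$ we therefore conclude $k < t$ as required, and the position-type conditions in the construction ensure that each $s_i \in S$ is at distance exactly $r/2$ from every $v \in B$ when $r$ is even.

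The polynomial bound $L(m) = \Oof_{r,t}(m^{(4t+1)^{2rt}})$ comes from iterating the size loss across at most $2rt$ levels, where the branching factor at each level is at most $4t+1$, accounting for the choice of distance split $(d_1, d_2)$ with $d_1 + d_2 = r$ and for the position of the new apex within the partially built minor model. Algorithmically, each iteration requires only BFS explorations to radius $r$ around each element of $A_i$ and a counting of how often each candidate midpoint is used, so the total running time is $\Oof_{r,t}(|A| \cdot |E(G)|)$ as claimed. The final recovery step extracts $B$ greedily by iterating through the surviving $A_k$ and accepting a vertex whenever it violates no distance-$r$ relation with previously chosen vertices in $G - S$; because the midpoint analysis exhausted all ways distance-$r$ pairs can concentrate, the greedy process succeeds in producing the required $m$-element $r$-independent set.
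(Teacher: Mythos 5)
Your proposal captures the general flavor (iteratively extracting apexes, termination via a clique minor model, a polynomial tower from iterated size loss) but it has two genuine gaps, and one of its explanations is simply wrong.

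\textbf{The missing heart of the argument.} The crucial step is what happens when ``no apex dominates''. You say a ``pigeonhole argument over midpoint-distance types'' then extracts an $r$-independent set, but this is not a proof, it is the statement of the hard direction. The paper's Lemma~\ref{lem:apex} proves exactly this dichotomy for the case $r=2$: if $K_t\not\minor_4 G$, a $1$-independent set $A$ of size $(m+\ell)^{2t}$ contains a $2$-independent subset of size $m$ or a vertex with $\ell^{1/4}$ neighbors in $A$. The proof of that lemma is non-trivial: it builds a binary \emph{type tree} over $A$ indexed by ``distance $\le 2$ vs.\ distance $>2$'', bounds its alternation rank by $2t-1$ via an explicit depth-$2$ minor model of $K_k$ (Claim~\ref{claim:minor}), counts nodes with Lemma~\ref{lem:number-of-nodes} to force a long root-leaf path, and then analyzes the two kinds of long paths (many $\dau$'s vs.\ many $\son$'s), the latter requiring a density argument (Lemma~\ref{lem:diversity}) in a probabilistically constructed $1$-shallow minor. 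Nothing in your sketch substitutes for this; without it, you have no polynomial bound at all, just the statement that one should exist. An undifferentiated pigeonhole over ``midpoint-distance types'' does not give you the $2^t$-exponent polynomial because the number of midpoint configurations is not a priori polynomially bounded by anything graph-theoretic unless you exploit the excluded minor, which is exactly what the type tree does.

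\textbf{The reduction to $r=1,2$ and the radius bound.} The paper does \emph{not} work directly with general $r$. It contracts the pairwise disjoint balls of radius $\lfloor(r-1)/2\rfloor$ around $A$ to reduce odd $r$ to $r=1$ (a direct Ramsey argument with $S=\emptyset$) and even $r$ to $r=2$; the $\lfloor 9r/2\rfloor$ in the hypothesis comes entirely from propagating the excluded-minor depth through the contraction via Lemma~\ref{lem:combineminors} (from $K_t\not\minor_4 G'$ on the contracted graph $G'\minor_s G$ one needs $K_t\not\minor_{9s+4} G$ for $r=2s+2$). Your explanation --- that $\lfloor 9r/2\rfloor$ is the radius needed to disentangle overlapping paths inside branch sets of the $K_{k+1}$ model witnessing termination --- is incorrect. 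In the paper that terminating minor is at depth $1$ in the contracted graph (Claim~\ref{claim:at-most-t}: subgraphs $G[\{w_i,v_i\}]$), not radius $\Theta(r)$. Your proposal also has no workable plan for odd $r$; you say ``prune candidates directly'' but give no mechanism, whereas the paper's odd case is a clean $S=\emptyset$ Ramsey argument on the contracted graph. Finally, your parenthetical ``distance split $(d_1,d_2)$ with $d_1+d_2=r$'' suggests considering off-center apexes, which the contraction trick renders unnecessary and which, if pursued directly, would need its own combinatorial control that you do not supply.
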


We prove~\cref{lem:engine} in \Cref{sec:engine}, but  a very rough sketch is as follows.
The  case of general~$r$ reduces to the case $r=1$ or $r=2$, depending on the parity of $r$,
by contracting the balls of radius $\lfloor \frac {r-1} 2\rfloor $ around the vertices in $A$ to single vertices.
The case of $r=1$ follows immediately from Ramsey's theorem, as in~\cite{nevsetvril2011nowhere}.
The case $r=2$ is substantially more difficult.
We start by formulating and proving the main technical result needed for proving the case $r=2$.

\subsection{The main technical lemma}
\label{sec:main-tech}

The following, Ramsey-like result is the main technical lemma used in the proof of~\cref{thm:new-uqw}. 

\pagebreak

\begin{lemma}\label{lem:apex}
Let $\ell,m,t\in \N$ and assume $\ell\geq t^{8}$. 
If~$G$ is a graph and $A$ is a $1$-independent set in~$G$
with at least $(m+\ell)^{2t}$ elements,
then at least one of the following conditions hold:
\begin{itemize}
  \item $K_t\minor_{4} G$,
\item  $A$ contains a $2$-independent set of size $m$, 
\item  some vertex $v$ of $G$
has at least $\ell^{1/4}$ neighbors in $A$.
\end{itemize}
Moreover, if $K_t\not\minor_4G$, the
structures described in the other two cases (a $2$-independent set 
of size~$m$, or a vertex $v$ as above) can be 
computed in time $\Oof_t(|A|\cdot |E(G)|)$. 
\end{lemma}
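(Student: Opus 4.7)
The proof I have in mind assumes that both the second and third alternatives fail---so $A$ has no $2$-independent subset of size $m$, and no vertex of $G$ has $\ell^{1/4}$ or more neighbors in $A$---and constructs $K_t$ as a depth-$1$ minor of $G$, which is in particular a depth-$4$ minor. I introduce the auxiliary graph $H$ on vertex set $A$ with $aa'\in E(H)$ exactly when $N_G(a)\cap N_G(a')\neq\emptyset$; the second hypothesis states $\alpha(H)<m$, and the third says that $H$ decomposes as a union of cliques $A_v=N_G(v)\cap A$ of size below $\ell^{1/4}$, indexed by the vertices $v\in V(G)\setminus A$.

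The target model will consist of vertices $a_1,\dots,a_t\in A$ together with pairwise-distinct common-neighbor witnesses $v_{ij}\in N_G(a_i)\cap N_G(a_j)$ for $i<j$, all of which are also distinct from every $a_k$. The branch sets $B_i=\{a_i\}\cup\{v_{ji}:j<i\}$ then form a valid $K_t$-model: each $B_i$ is a star around $a_i$ (hence connected and of radius $1$), and the witness $v_{ij}\in B_j$ provides the required edge to $B_i$ through its adjacency with $a_i$; disjointness is exactly the distinctness condition on the chosen vertices.

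I would produce the sequence $(a_i)$ and the witnesses greedily, one iteration per choice of $a_i$. A forbidden set $U$ collects all previously chosen vertices; it stays of size below $t^2$ throughout. At step $i$, the candidate set $A\setminus U$ still has size comparable to $(m+\ell)^{2(t-i+1)}$, and the crucial counting observation is that because every vertex of $U$ has fewer than $\ell^{1/4}$ neighbors in $A$, the set of $A$-vertices that fail to admit a fresh common neighbor with some previously chosen $a_j$ has size at most $|U|\cdot\ell^{1/4}<t^{2}\ell^{1/4}$. Vertices with no common neighbor at all with some $a_j$ are controlled by the hypothesis $\alpha(H)<m$ applied to the residual set: restricted to any such subset, $H$ still has no independent set of size $m$, and a suitable extension of $(a_1,\dots,a_{i-1})$ is extracted by a combination of Ramsey-type reasoning inside $H$ and explicit counting against the clique-cover structure.

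The main obstacle, and the step I expect to be most delicate, is calibrating the size budget so that the exponent $2t$ in the hypothesis $|A|\geq(m+\ell)^{2t}$ is achieved rather than something worse: each of the $t$ extension rounds must cost only a factor of $(m+\ell)^{2}$ of candidate-set size. This is possible precisely because the clique-cover structure of $H$ lets us extract a large enough ``compatible'' substructure at only quadratic cost per round, and because the assumption $\ell\geq t^{8}$ yields $\ell^{1/4}\geq t^{2}$, so the polynomial overheads from witness-distinctness bookkeeping (namely $t^{2}\ell^{1/4}$) are absorbed comfortably into the $(m+\ell)^{2}$ slack. Turning the argument into an algorithm is then routine: each greedy step reduces to scanning $A$ and its one- and two-step neighborhoods in $G$, costing $\Oof(|A|\cdot|E(G)|)$, and with $t$ iterations the total running time is $\Oof_t(|A|\cdot|E(G)|)$ as required.
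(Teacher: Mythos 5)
Your approach is genuinely different from the paper's: you try to directly construct a $K_t$ depth-$1$ minor greedily, whereas the paper builds an auxiliary binary ``type tree'' over $A$ (branching on whether the next vertex is at distance $\le 2$ from the current tree label), argues that $K_t\not\minor_2 G$ bounds the alternation rank of this tree by $2t-1$, and then extracts from a deep path either a $2$-independent set ($m$ ``daughter'' steps) or a set of $\ell$ vertices pairwise at distance exactly $2$ ($\ell$ ``son'' steps), the latter case being fed to a separate probabilistic density lemma.

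The gap in your argument is the bookkeeping for \emph{pairwise distinct} witnesses $v_{ij}$, and I believe it is a genuine one, not merely an omitted calculation. Your counting bounds the set of candidates $a$ for which, for some $j<i$, \emph{every} common neighbour of $a$ and $a_j$ lies in $U$: that set has size $<|U|\cdot\ell^{1/4}$ because any such $a$ is a neighbour of some $u\in U\cap N(a_j)$. But avoiding this set only guarantees, for each $j$, the existence of \emph{one} fresh common neighbour; it does not guarantee that when you pick $v_{1i},v_{2i},\dots,v_{(i-1)i}$ sequentially (adding each to $U$ as you go), the later sets $N(a_i)\cap N(a_j)\setminus U$ stay nonempty. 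Concretely, it is compatible with your bad-set analysis that $N(a_i)\cap N(a_1)=N(a_i)\cap N(a_2)=\{w\}$ for a single $w\notin U$: then $a_i$ is ``good'' for your criterion, yet the two required witnesses collide. Ruling this out would require bounding the set of $a$ for which some $N(a)\cap N(a_j)\setminus U$ is \emph{small} (say of size $\le t^2$), but that set is not controlled by the maximum-degree hypothesis alone: it can be as large as $|N(a_j)|$, which is unbounded, since only neighbourhoods restricted to $A$ are small. One can phrase the obstruction as a failure of Hall's condition for a system of distinct representatives for the pairs $\{i,j\}$, and your hypotheses do not obviously enforce Hall's condition.

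The paper sidesteps exactly this issue through the tree structure. In its minor-model claim (\cref{claim:minor}), the common neighbour $z_{ij}$ of $a_i$ and $b_j$ is shown to be \emph{not} a neighbour of any $b_s$ with $s\neq j$, because $b_s$ and $b_j$ are forced to be at distance $>2$ by the daughter/descendant relationship in the tree. This automatically makes witnesses for distinct $j$'s distinct (and the paper explicitly notes it is fine for $z_{ij}=z_{i'j}$ with the same $j$, since both belong to the same branch set $B_j$). Your greedy has no analogous mechanism. You would either need to import something like the type-tree idea, or find a different way to force witness disjointness; a plain Ramsey extraction of a large clique in $H$ does not work within the budget $(m+\ell)^{2t}$, since you would need a clique of size roughly $t^3\ell^{1/4}$, which pushes the Ramsey bound to an exponent of order $t^3\ell^{1/4}\gg 2t$.

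Two smaller remarks. First, you claim $K_t\minor_1 G$, which would be strictly stronger than the statement's $K_t\minor_4 G$; the paper only obtains $\minor_2$ from the tree branch and needs $\minor_4$ for the density lemma, so if your route worked it would slightly sharpen the conclusion in the first bullet, though the other lemmas still need $\minor_4$. Second, the third alternative asks for a vertex with at least $\ell^{1/4}$ neighbours \emph{in $A$}; the paper proves it via a separate probabilistic argument (\cref{lem:diversity}) applied to the distance-exactly-$2$ set $Y$ extracted from the tree, rather than by contradiction as in your outline, so the two proofs really do use that bullet differently.
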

We remark that a statement similar to that of \cref{lem:apex}
can be obtained by employing Ramsey's theorem, as has been done in~\cite{nevsetvril2011nowhere}. This, however, 
does not give a bound which is polynomial in $m+\ell$, and thus cannot be used to prove~\cref{thm:new-uqw}.

\medskip
The remainder of this section is devoted to the proof of~\cref{lem:apex}.
We will use the following bounds on the edge density
of graphs with excluded shallow minors obtained
by Alon et al.~\cite{alon2003turan}. 

\begin{lemma}[Theorem 2.2 in~\cite{alon2003turan}]\label{lem:densitynd}
Let $H$ be a bipartite graph with maximum degree
$d$ on one side. Then there exists a constant $c_H$, depending 
only on $H$, such that every $n$-vertex graph $G$
excluding~$H$ as a subgraph has at most $c_H\cdot n^{2-1/d}$
edges. 
\end{lemma}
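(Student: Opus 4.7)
The plan is to use dependent random choice (DRC), which is the standard technique for Tur\'an-type bounds on bipartite graphs with restricted degree on one side. Let $H$ have bipartition $A\cup B$ with $|A|=s$, $|B|=t$, and every vertex of $A$ of degree at most $d$ in $H$; I may assume $d\le t$ by replacing $d$ with $\min(d,t)$, which only strengthens the target bound. Suppose $G$ has $n$ vertices and more than $c_H n^{2-1/d}$ edges, where $c_H$ depends on $s,t,d$ and will be fixed at the end; the goal is to embed $H$ as a subgraph of $G$.

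The heart of the argument is to extract a set $U\subseteq V(G)$ with $|U|\ge s+t$ such that every $d$-subset of $U$ has at least $s+t$ common neighbors in $G$. To this end, sample vertices $v_1,\ldots,v_d$ independently and uniformly from $V(G)$ and set $T=N_G(v_1)\cap\cdots\cap N_G(v_d)$. Then $\mathbb{E}[|T|]=\sum_w (\deg(w)/n)^d\ge (2|E(G)|/n)^d/n^{d-1}\ge (2c_H)^d$ by Jensen's inequality and the density assumption. Call a $d$-subset $S\subseteq V(G)$ \emph{bad} if its common neighborhood has size less than $s+t$; each bad $S$ satisfies $\Pr[S\subseteq T]<((s+t)/n)^d$, so the expected number of bad subsets lying inside $T$ is at most $\binom{n}{d}((s+t)/n)^d\le (s+t)^d/d!$. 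Choosing $c_H$ so that $(2c_H)^d\ge (s+t)^d/d!+(s+t)$, some outcome of the sample satisfies $|T|-\#\{\text{bad subsets of }T\}\ge s+t$; deleting one vertex from each bad subset of $T$ leaves the desired $U$.

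Given $U$, the embedding is elementary. Place the $t$ vertices of $B$ at any $t$ distinct elements of $U$, fixing $\phi\colon B\to U$, and then process the vertices of $A$ in any order. When embedding $a\in A$, the image $\phi(N_H(a))$ lies in $U$ and has size at most $d$, hence is contained in some $d$-subset of $U$ with at least $s+t$ common neighbors in $G$. Since at most $t+(s-1)<s+t$ vertices of $G$ have been used so far, a fresh common neighbor remains to serve as $\phi(a)$, and the final map realises $H$ as a subgraph of $G$, contradicting the assumption that $G$ excludes $H$.

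The conceptual core is the DRC inequality; once it is in place, the embedding step is automatic. The only delicate point is the bookkeeping that shows $c_H$ can be chosen to depend only on $s$, $t$, and $d$, which the computation above makes explicit. I do not foresee any serious obstacle beyond choosing constants carefully.
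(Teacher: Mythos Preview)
The paper does not prove this lemma; it is quoted verbatim as Theorem~2.2 of Alon, Krivelevich, and Sudakov and used as a black box. Your dependent random choice argument is correct and is precisely the method used in that reference, so there is nothing to compare---you have supplied the proof the paper chose to cite rather than reproduce.
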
 

Observe that if $K_t\not\minor_1G$, then in particular
the $1$-subdivision of $K_t$ is excluded as a subgraph
of $G$ (the $1$-subdivision of a graph $H$ is obtained by 
replacing every edge of $H$ by a path of length $2$). 
Moreover, the $1$-subdivision of 
$K_t$ is a bipartite graph with maximum degree $2$ on one
side. Furthermore, it is easy to check in the 
proof of Theorem 2.2 in~\cite{alon2003turan} 
that $c_H\leq |V(H)|$
in case $d=2$. Since the $1$-subdivision of $K_t$ has 
$\binom{t+1}{2}$ vertices, we can choose $c_{K_t}=\binom{t+1}{2}$ and
conclude the following.   

\begin{corollary}\label{crl:densitynd}
Let $G$ be an $n$-vertex graph such that $K_t\not\minor_1 G$ for
some constant $t\in \N$. Then $G$ has at most
$\binom{t+1}{2}\cdot n^{3/2}$ edges.
\end{corollary}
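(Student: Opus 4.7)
The plan is very short, because the statement is a direct corollary of the already-stated \cref{lem:densitynd} (Theorem 2.2 of Alon et al.). The entire proof consists of choosing the right forbidden bipartite subgraph and tracking two constants.

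First I would translate the topological assumption into a subgraph assumption. Let $H$ denote the $1$-subdivision of $K_t$, obtained by replacing each edge of $K_t$ with a path of length~$2$. If $H$ occurred as a subgraph of $G$, then contracting each subdivided edge into its endpoint would produce $K_t$ as a depth-$1$ minor of $G$, using branch sets of radius at most~$1$ (each branch set being the original $K_t$-vertex together with at most $t-1$ attached subdivision vertices). Hence the assumption $K_t\not\minor_1 G$ forces $H\not\subseteq G$. This is the only structural step needed; everything else is bookkeeping.

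Next I would verify that $H$ fits the hypothesis of \cref{lem:densitynd} with parameter $d=2$. The graph $H$ is bipartite with parts $V(K_t)$ (of size $t$) and the subdivision vertices (of size $\binom{t}{2}$, one per edge of $K_t$). Every subdivision vertex has degree exactly~$2$, so $H$ has maximum degree $2$ on one side, as required. Plugging $d=2$ into \cref{lem:densitynd} gives
\[
|E(G)|\ \le\ c_H\cdot n^{2-1/2}\ =\ c_H\cdot n^{3/2}.
\]
To finish, I would invoke the remark from the excerpt that the constant $c_H$ furnished by the proof of Theorem~2.2 in~\cite{alon2003turan} satisfies $c_H\le |V(H)|$ whenever $d=2$. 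Since
\[
|V(H)|\ =\ t+\binom{t}{2}\ =\ \frac{t^2+t}{2}\ =\ \binom{t+1}{2},
\]
we may take $c_H=\binom{t+1}{2}$, which yields the claimed bound $|E(G)|\le \binom{t+1}{2}\cdot n^{3/2}$.

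There is essentially no obstacle: the only substantive content is the observation that forbidding $K_t$ as a depth-$1$ minor is equivalent (for our purposes of ruling it out as a subgraph) to forbidding its $1$-subdivision, plus the numerical identification $t+\binom{t}{2}=\binom{t+1}{2}$. The quantitative claim $c_H\le |V(H)|$ in the $d=2$ case is stated as a property of the proof of \cref{lem:densitynd} and is used as a black box.
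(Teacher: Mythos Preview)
Your proposal is correct and follows essentially the same route as the paper: exclude the $1$-subdivision of $K_t$ as a subgraph, note it is bipartite with maximum degree $2$ on the subdivision side, apply \cref{lem:densitynd} with $d=2$, and use the bound $c_H\le |V(H)|=t+\binom{t}{2}=\binom{t+1}{2}$. The paper's argument is exactly this, stated in the paragraph preceding the corollary.
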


We will use the following standard lemma saying that a shallow minor of a shallow minor is a shallow minor, where the parameters of shallowness are appropriately chosen.

\begin{lemma}[adaptation of Proposition 4.11 in~\cite{sparsity}]\label{lem:combineminors}
Suppose $J,H,G$ are graphs such that $H\minor_a G$ and $J\minor_b H$, for some $a,b\in \N$.
Then $J\minor_c G$, where $c=2ab+a+b$.
\end{lemma}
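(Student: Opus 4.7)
The plan is entirely constructive: compose the two given minor models into a single minor model of $J$ in $G$, and then estimate the radius of each resulting branch set.

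Fix a depth-$a$ minor model $(I_w)_{w \in V(H)}$ of $H$ in $G$, so each $I_w$ is a connected subgraph of $G$ of radius at most $a$ (with pairwise disjoint vertex sets), and for every edge $ww' \in E(H)$ there is an edge in $G$ with one endpoint in $V(I_w)$ and the other in $V(I_{w'})$. Similarly fix a depth-$b$ minor model $(B_v)_{v \in V(J)}$ of $J$ in $H$. For every $v \in V(J)$, define the candidate branch set $J_v$ as the subgraph of $G$ induced by $\bigcup_{w \in V(B_v)} V(I_w)$, together with one chosen connecting $G$-edge between $I_w$ and $I_{w'}$ for each edge $ww' \in E(B_v)$. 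Disjointness of the $J_v$'s is immediate from disjointness of the $B_v$'s and of the $I_w$'s, and for any edge $vv' \in E(J)$ an edge of $G$ witnessing the $H$-edge between $B_v$ and $B_{v'}$ witnesses also the required $G$-edge between $J_v$ and $J_{v'}$. The only nontrivial point is bounding the radius of each $J_v$.

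For this, fix $v \in V(J)$, let $u$ be a center of $B_v$ in $H$, and let $x_0$ be a center of $I_u$ in $G$; I claim that every vertex of $J_v$ lies at distance at most $2ab+a+b$ from $x_0$ in $J_v$. Take an arbitrary $x \in V(I_w)$ with $w \in V(B_v)$, and pick a path $u = u_0, u_1, \ldots, u_k = w$ in $B_v$ of length $k \le b$. Traversing this path in $G$ costs at most $a$ steps to leave $I_{u_0}$ from its center, one crossing edge between each consecutive pair $I_{u_i}, I_{u_{i+1}}$, at most $2a$ steps to cross each intermediate $I_{u_i}$ (since each has diameter at most $2a$), and at most $2a$ additional steps inside $I_{u_k}$ to reach $x$. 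Summing, the total length is at most
\[
a + k + 2ak \;\le\; a + b + 2ab \;=\; c,
\]
exactly as required. Thus $(J_v)_{v \in V(J)}$ is a depth-$c$ minor model of $J$ in $G$.

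The main (and really the only) obstacle is this radius computation, which must be done carefully: one must distinguish the radius bound $a$ (used for exiting $I_{u_0}$ from its designated center) from the diameter bound $2a$ (used when merely crossing the intermediate branch sets $I_{u_1}, \ldots, I_{u_k}$ from an entry point to an exit point). Once this distinction is made, the arithmetic $a + b(2a+1) = 2ab+a+b$ drops out immediately.
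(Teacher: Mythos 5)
Your proof is correct. The paper itself does not give a proof of this lemma---it simply cites Proposition~4.11 of Ne\v{s}et\v{r}il and Ossona de Mendez's book---so there is no argument in the paper to compare against, but what you write is precisely the standard composition argument behind that reference: glue the two minor models, and bound the radius of each composite branch set $J_v$ by walking from a center of $I_u$ (paying $a$ to exit the first branch set from its center, $1$ per connecting edge, and $2a$ per subsequent branch set crossed, using the fact that radius $\le a$ gives diameter $\le 2a$), yielding $a + k(2a+1) \le a + b(2a+1) = 2ab+a+b$. The distinction you flag---radius bound $a$ for the starting piece versus diameter bound $2a$ for the remaining ones---is indeed the only place the computation can go wrong, and you handle it correctly.
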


We will need one more technical lemma.

\begin{lemma}\label{lem:diversity}
  Let $G$ be a graph such that $K_t\not\minor_4G$ for some
  $t\in\N$ and let $A\subseteq V(G)$ with $|A|\geq t^{8}$. 
  Assume furthermore that every pair of elements of $A$ has a common neighbor in $V(G)\setminus A$.
  Then there exists a vertex $v$ in $V(G)\setminus A$ which has at least $|A|^{1/4}$ neighbors in $A$.
\end{lemma}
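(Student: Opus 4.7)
The plan is to argue by contradiction: assume $K_t \not\minor_4 G$ while no vertex of $B := V(G)\setminus A$ has at least $|A|^{1/4}$ neighbors in $A$, and derive a contradiction. Set $n := |A|$, $\Delta := n^{1/4}$, and $B^* := \{v \in B : |N(v) \cap A| \geq 2\}$. A straightforward double count -- every pair in $A$ has a common neighbor necessarily in $B^*$, while each $v \in B^*$ is a common neighbor for at most $\binom{\Delta}{2}$ pairs -- yields $|B^*| \geq \binom{n}{2}/\binom{\Delta}{2} = \Omega(n^{3/2})$, and the bipartite subgraph $H$ of $G$ between $A$ and $B^*$ satisfies $|E(H)| \geq n(n-1)/(\Delta-1) = \Omega(n^{7/4})$. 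In addition, every $a \in A$ has at least $\Omega(n^{3/4})$ neighbors in $B^*$, since those neighbors must reach all $n-1$ remaining elements of $A$ via common neighbors and each $B^*$-vertex provides only $\Delta-1$ such connections.

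Next, I aim to exhibit $K_t$ as a shallow minor of $H$ of depth at most $4$ (hence of $G$), contradicting the hypothesis. I pick $t$ centers $a_1,\ldots,a_t \in A$ and, for each pair $\{i,j\}$ with $i<j$, a short internally vertex-disjoint path $\pi_{ij}$ in $H$ from $a_i$ to $a_j$; placing the internal vertices of $\pi_{ij}$ (minus $a_j$) into the branch set $I_i := \{a_i\} \cup \bigcup_{j>i}(V(\pi_{ij})\setminus\{a_j\})$ makes each $I_i$ a subtree rooted at $a_i$ of radius at most $\mathrm{length}(\pi_{ij})-1$, the $I_i$'s pairwise disjoint, and the last edge of $\pi_{ij}$ witnesses an edge between $I_i$ and $I_j$. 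The cleanest choice takes $\pi_{ij} = a_i - b_{ij} - a_j$ (length $2$) through a rainbow family of distinct common neighbors $b_{ij} \in B^*$, yielding a depth-$1$ model of $K_t$; if that rainbow selection is obstructed, I fall back on length-$4$ paths $a_i-b-a'-b'-a_j$ routed through a fresh auxiliary vertex $a'\in A\setminus\{a_1,\ldots,a_t\}$, yielding a depth-$3$ model, which is within depth $4$. In either case, the slack $|A|\geq t^8$ is more than enough to support the greedy construction.

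The existence of the required paths is forced by \cref{crl:densitynd} applied to a carefully chosen subgraph of $H$. Specifically, if both constructions fail, one extracts a ``balanced'' sub-bipartite graph of $H$ (via an averaging/pigeonhole step that exploits the max-degree bound $\Delta$ on $B^*$ together with the min-degree bound $\Omega(n^{3/4})$ on $A$) whose edge count still clears the \cref{crl:densitynd} threshold $\binom{t+1}{2}|V|^{3/2}$, immediately giving $K_t \minor_1 G$. The inequality underlying the threshold reduces, at the critical scale, to $n^{1/4} \geq t^2$, which is precisely equivalent to the hypothesis $n \geq t^8$ and explains the exponents in the statement. If necessary, \cref{lem:combineminors} is invoked to bootstrap a depth-$1$ minor in the extracted subgraph into a depth-$4$ minor of $G$ itself.

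The main obstacle I anticipate is precisely this balancing step: a priori $|B^*|$ may be much larger than $|A|$, so the density $|E(H)|/|V(H)|^{3/2}$ is far too small for a direct application of \cref{crl:densitynd} to the bipartite graph $H$ as a whole. Selecting the right subgraph -- dense enough to trigger the density threshold while still faithfully inheriting the common-neighbor structure from $H$ -- is the subtle point, and it is here that the exponents $4$ and $1/4$ in the lemma become essential. Finally, the algorithmic bound $\Oof_t(|A|\cdot|E(G)|)$ follows because each of the above steps (double counting, subgraph extraction, and the greedy rainbow or disjoint-path selection) can be implemented by standard linear-time passes over adjacency lists.
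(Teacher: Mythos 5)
Your proposal has a genuine gap, which you yourself flag as the ``balancing step.'' Once you form the bipartite auxiliary graph $H$ on $A\cup B^*$, its vertex set may be far larger than $A$ (up to order $|A|^2$, e.g.\ if every $B^*$-vertex has exactly two neighbors in $A$), so the edge lower bound $|E(H)|=\Omega(|A|^{7/4})$ falls far short of the threshold $\binom{t+1}{2}|V(H)|^{3/2}$ needed to trigger \cref{crl:densitynd}. You propose extracting a ``balanced'' dense sub-bipartite graph but give no construction for it, and it is not clear one exists from the degree information you have alone: the auxiliary graph can be uniformly sparse (every $B^*$-vertex of degree $2$) while still retaining the common-neighbor covering property. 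The subsequent greedy/rainbow $K_t$-model construction is similarly left at the level of a plan, so the contradiction is never reached.

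The paper closes exactly this gap with a random contraction that your proposal lacks. For each $v\in B$ (the vertices outside $A$ with a neighbor in $A$), choose $f(v)\in N(v)\cap A$ uniformly at random, and form branch sets $I_u=\{u\}\cup f^{-1}(u)$ for $u\in A$; these are pairwise disjoint, connected, of radius at most~$1$. Contracting each $I_u$ yields a graph $H'$ on vertex set exactly $A$, so $|V(H')|=|A|$ rather than $|A|+|B^*|$, and $H'\minor_1 G$. For a pair $u,v\in A$ with common neighbor $w$, the edge $uv$ appears in $H'$ whenever $f(w)\in\{u,v\}$, which happens with probability at least $2/k$ where $k=\max_{w\in V(G)-A}|N(w)\cap A|$; by linearity of expectation the expected number of edges in $H'$ is at least $\binom{|A|}{2}\cdot\frac{2}{k}$, and some outcome achieves at least this. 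Applying \cref{crl:densitynd} to $H'$ — legitimate, since $K_t\minor_1 H'$ would yield $K_t\minor_4 G$ by \cref{lem:combineminors} — gives $\frac{|A|(|A|-1)}{k}\leq\binom{t+1}{2}|A|^{3/2}\leq\frac34|A|^{7/4}$ using $|A|\geq t^8\geq 64$, whence $k\geq|A|^{1/4}$ directly. Note that no explicit $K_t$-minor model is ever built: the point of the contraction onto $A$ is precisely to make the density bound applicable to a graph on $|A|$ vertices, and this is the missing idea in your plan.
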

\begin{proof}
Denote $k=\max\{\,|N(w)\cap A|\ \colon\ w\in V(G)-A\,\}$; our goal is to prove that $k\geq |A|^{1/4}$.

Let $B\subseteq V(G)-A$ be the set of those vertices outside of $A$ that have a  neighbor in $A$. 
Construct a function $f\colon B\to A$ by a random procedure as follows:
for each vertex $v\in B$, choose $f(v)$ uniformly and independently at random from the set $N(v)\cap A$.
Next, for each $u\in A$ define branch set $I_u=G[\{u\}\cup f^{-1}(u)]$. Observe that since, by construction, $v$ and $f(v)$ are adjacent for all $v\in B$, each branch set $I_u$ has radius at most $1$,
with $u$ being the central vertex. Also, the branch sets $\{I_u\}_{u\in A}$ are pairwise disjoint.
Finally, construct a graph $H$ on vertex set $A$ by making distinct $u,v\in A$ adjacent in $H$ whenever there is an edge in $G$ between the branch sets~$I_u$ and $I_v$.
Then the branch sets $\{I_u\}_{u\in A}$ witness that $H$ is a $1$-shallow minor of $G$.

For distinct $u,v\in A$, let us estimate the probability that the edge $uv$ appears in $H$.
By assumption, there is a vertex $w\in B$ that is adjacent both to $u$ and to $v$. Observe that if it happens that $f(w)=u$ or $f(w)=v$, then $uv$ for sure becomes an edge in $H$. 
Since $w$ has at most $k$ neighbors in $A$, the probability that $f(w)\in \{u,v\}$ is at least $\frac{2}{k}$.

By the linearity of expectation, the expected number of edges in $H$ is at least $\binom{|A|}{2}\cdot \frac{2}{k}=\frac{|A|(|A|-1)}{k}$.
Hence, for at least one run of the random experiment we have that $H$ indeed has at least this many edges. 
On the other hand, observe that $K_t\not\minor_1 H$; indeed, since $H\minor_1 G$, by Lemma~\ref{lem:combineminors} we infer that $K_t\minor_1 H$ would imply $K_t\minor_4 G$, a contradiction with the assumptions on $G$.
Then Corollary~\ref{crl:densitynd} implies $H$ has at most $\binom{t+1}{2}\cdot |A|^{3/2}$ edges.
Observe that 
$\binom{t+1}{2}\cdot |A|^{3/2}\leq 3t^2/4\cdot |A|^{3/2}\leq \frac{3}{4}|A|^{7/4}$,
where the first inequality holds due to $t\geq 2$, while the second holds by the assumption that $|A|\geq t^8$.
By combining the above bounds, we obtain
$$\frac{|A|(|A|-1)}{k}\leq \frac{3}{4}|A|^{7/4},$$
which implies $k\geq |A|^{1/4}$ due to $|A|\geq t^8\geq 64$.
\end{proof}

We proceed with the proof of \cref{lem:apex}.
The idea is to arrange the elements of $A$ in a binary tree
and prove that provided $A$ is large, this tree contains a long path. From this path, we will 
extract the set $B$. 
In stability theory, similar trees are called \emph{type trees} and they are used to extract long indiscernible sequences, see e.g.~\cite{malliaris2014regularity}.

\newcommand{\dau}{\mathrm{D}}
\newcommand{\son}{\mathrm{S}}
	
	We will work with a two-symbol alphabet $\set{\dau,\son}$, for {\em{daughter}} and {\em{son}}.
	We identify words in $\set{\dau,\son}^*$ with \emph{nodes}
	of the infinite rooted binary tree. 
  The \emph{depth} of a node $w$ is the length of $w$.
  For $w\in \set{\dau,\son}^*$,
	 the nodes $w\dau$ and $w\son$ are called, respectively, the \emph{daughter} and the \emph{son} of $w$,
	and $w$ is the \emph{parent} of both $w\son$ and $w\dau$. A node $w'$ is a {\em{descendant}} of a node $w$ if $w'$ is a prefix of $w$ (possibly $w'=w$).
	We consider
	 finite, labeled, rooted, binary trees, which are called simply trees below, and are defined as follows.
	 For a set of labels $U$, a ($U$-labeled) \emph{tree} is a partial function $\tau\from \set{\dau,\son}^*\to U$ whose domain is a finite set of nodes, 
	 called the \emph{nodes of $\tau$}, which is closed under taking parents. 
	 If $v$ is a node of $\tau$, then $\tau(v)$ is called its \emph{label}.
  
  Let $G$ be a graph, $A\subset V(G)$ be a $1$-independent set in $G$,
  and $\bar a$ be any enumeration of $A$, that is, a sequence of length $|A|$ in which every element of $A$ appears exactly once.
  We define a binary tree $\tau$ which is 
  labeled by vertices of $G$. The tree is defined by processing all elements of~$\bar a$ sequentially. 
  We start with $\tau$ being the  tree with empty domain, and for each element $a$ of the sequence $\bar a$, processed in the order given by $\bar a$, 
  execute the following procedure which results in adding a node with label $a$ to $\tau$.
  
When processing the vertex $a$, do the following. Start with $w$ being the empty word. While~$w$ is a node of $\tau$, repeat the following step: 
  if the distance from $a$ to $\tau(w)$ in the graph 
  $G$ is at most~$2$, replace $w$ by its son, otherwise, replace $w$ by its daughter.
  Once $w$ is not a node of $\tau$, extend $\tau$ by setting  $\tau(w)=a$. In this way, we have processed the element $a$, and now
    proceed to the next element of $\bar a$, until all elements are processed. This ends the construction of $\tau$.
    Thus, $\tau$ is a tree labeled with vertices of $A$, and every vertex of $A$ appears exactly once in $\tau$.

Define the
\emph{depth} of $\tau$ as 
the maximal depth of a node of $\tau$.
For a word $w$, an \emph{alternation} in~$w$ is any 
position $\alpha$, $1\leq \alpha\leq |w|$, such that $w_\alpha\neq w_{\alpha-1}$; here, $w_\alpha$ denotes the $\alpha$th symbol of~$w$, and~$w_0$ is assumed to be $\dau$.
The \emph{alternation rank} of the tree $\tau$ is the maximum of the number of alternations in $w$, over all nodes $w$ of $\tau$.

\begin{lemma}\label{lem:number-of-nodes}
Let $h,t\ge 2$.	If $\tau$ has alternation rank at most $2t-1$ and depth at most $h-1$, then~$\tau$ has fewer than $h^{2t}$ nodes.
\end{lemma}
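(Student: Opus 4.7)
The plan is to identify each node of $\tau$ with the word in $\set{\dau,\son}^*$ that names it, and count such words directly. By the depth hypothesis each such word has length at most $h-1$, and by the alternation-rank hypothesis each has at most $2t-1$ alternations with respect to the convention $w_0 = \dau$. So the number of nodes of $\tau$ is at most the number of words $w \in \set{\dau,\son}^{\le h-1}$ with at most $2t-1$ alternations.

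Such a word is uniquely determined by its length $\ell \in \{0,\ldots,h-1\}$ together with the set of alternation positions in $\{1,\ldots,\ell\}$ (a set of size at most $2t-1$), because starting from $w_0 = \dau$ the symbol at every subsequent position is forced by the parity of alternations so far. This gives the explicit bound
\[
\#\{\text{nodes of }\tau\} \;\le\; \sum_{\ell=0}^{h-1} \sum_{a=0}^{2t-1} \binom{\ell}{a} \;=\; \sum_{b=1}^{2t} \binom{h}{b},
\]
where the second equality is the hockey-stick identity $\sum_{\ell=0}^{h-1} \binom{\ell}{a} = \binom{h}{a+1}$ followed by reindexing $b = a+1$.

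The only step requiring any care is the numerical inequality $\sum_{b=1}^{2t} \binom{h}{b} < h^{2t}$ for $h, t \ge 2$, and I expect this to be the main obstacle. The crude bound $\binom{h}{b} \le h^b$ only delivers a constant multiple of $h^{2t}$ and is not strict, so one has to exploit the factorial denominator. Using $b! \ge 2^{b-1}$ gives $\binom{h}{b} \le h^b/2^{b-1}$, which for $h \ge 4$ turns the left-hand side into a geometric series with ratio $h/2 \ge 2$; summing yields a bound of at most $h^{2t}/4^{t-1}$, and $4^{t-1} \ge 4$ for $t \ge 2$ gives the strict inequality with room to spare. The sporadic cases $h \in \{2,3\}$ are handled directly, since then the sum is at most $2^h - 1 \le 7$ while $h^{2t} \ge 16$.
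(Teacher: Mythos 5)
Your proof is correct, but it takes a genuinely different route from the paper's. The paper encodes each node $w$ as the function $f_w\colon\{1,\ldots,2t\}\to\{1,\ldots,h\}$ sending $i$ to the $i$th alternation position of $w$ when $i$ is at most the number of alternations, and to the sentinel value $|w|+1$ otherwise; $f_w$ is weakly monotone, the map $w\mapsto f_w$ is injective (the sentinel $|w|+1=f_w(2t)$ recovers the length, and the strictly increasing prefix recovers the alternation set), and the number of monotone functions is trivially a strict subset of all $h^{2t}$ functions when $h,t\ge 2$. That single observation finishes the count with no arithmetic. You instead enumerate the words directly by (length, alternation set), obtaining the exact count $\sum_{b=1}^{2t}\binom{h}{b}$ via the hockey-stick identity, and then you must prove the explicit numerical inequality $\sum_{b=1}^{2t}\binom{h}{b}<h^{2t}$, for which the factorial denominator (via $b!\ge 2^{b-1}$) is indeed essential --- your geometric-series argument for $h\ge 4$ and the hand check for $h\in\{2,3\}$ are both correct. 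The trade-off: the paper's injection into monotone functions is shorter and avoids any explicit estimate, while your approach is tighter (your $\sum_{b=1}^{2t}\binom{h}{b}$ is exactly the number of admissible words, strictly smaller than the number of monotone functions $\binom{h+2t-1}{2t}$) at the price of a less elegant final step.
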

\begin{proof}		
	With each node $w$ of $\tau$ associate
	function $f_w\colon \set{1,\ldots,2t}\to\set{1,\ldots,h}$ defined as follows:
	$f_w$ maps each $i\in \set{1,\ldots,2t}$ to the $i$th alternation of $w$, provided $i$ is at most the number of alternations of $w$, and otherwise we put $f_w(i)=|w|+1$.
	It is clear that the mapping $w\mapsto f_w$ for nodes $w$ of $\tau$ is injective
and its image is contained in monotone functions from $\set{1,\ldots,2t}$ to $\set{1,\ldots,h}$, whose number is less than $h^{2t}$.
Hence, the domain of~$\tau$ 
		has fewer than $h^{2t}$ elements.
\end{proof}

\begin{lemma}\label{thm:alternation-rank-type-tree}
Suppose that  $K_t\not\minor_{2} G$.
Then $\tau$ has alternation rank at most $2t-1$.
\end{lemma}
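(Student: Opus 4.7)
The plan is to prove the contrapositive: assume that $\tau$ has a node $w$ whose word contains at least $2t$ alternations, and construct a depth-$2$ minor model of $K_t$ in $G$, contradicting the hypothesis. Fix such a $w$ of length $n = |w|$, write $\pi_0, \pi_1, \ldots, \pi_n$ for its prefixes, and let $v_i := \tau(\pi_i) \in A$. Directly from how $\tau$ is built, the character $w_{i+1}$ records the distance type between $v_i$ and every label sitting below it on the path: for all $0 \leq i < j \leq n$ one has $w_{i+1} = \son \Leftrightarrow \dist_G(v_j, v_i) \leq 2$ and $w_{i+1} = \dau \Leftrightarrow \dist_G(v_j, v_i) > 2$. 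Consequently the $2t$ alternations of $w$ correspond to $2t$ indices $\sigma_1 < \cdots < \sigma_{2t}$ at which this distance type flips, and by the convention $w_0 = \dau$ we have $w_{\sigma_{2k-1}+1} = \son$ and $w_{\sigma_{2k}+1} = \dau$ for every $k = 1, \ldots, t$.

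Setting $b_k := v_{\sigma_{2k-1}}$ and $c_k := v_{\sigma_{2k}}$ and combining the correspondence above with $1$-independence of $A$ yields the half-graph-like distance pattern that is the heart of the argument: the $b_k$'s are pairwise at distance exactly $2$ in $G$; the $c_k$'s are pairwise at distance strictly greater than $2$; and $\dist_G(c_k, b_j) = 2$ if $j \leq k$ while $\dist_G(c_k, b_j) > 2$ if $j > k$. For every $k$ and every $j \leq k$, the distance-$2$ condition produces a common neighbor of $c_k$ and $b_j$ in $V(G)$; I fix one such $\mu_{k,j}$ arbitrarily and declare
\[ I_k := \{c_k,\, b_k,\, \mu_{k,1},\, \mu_{k,2},\, \ldots,\, \mu_{k,k}\}, \qquad k = 1, \ldots, t. \]

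Finally, I will verify that $(I_1, \ldots, I_t)$ is a depth-$2$ model of $K_t$ in $G$. Connectivity and radius are immediate: $I_k$ is a star at $c_k$ with leaves $\mu_{k,1}, \ldots, \mu_{k,k}$ together with the pendant edge $\mu_{k,k} b_k$, so the radius of $I_k$ from $c_k$ is at most $2$. Pairwise adjacency is also easy: for $k < k'$, the vertex $\mu_{k',k} \in I_{k'}$ is adjacent to $b_k \in I_k$ by the very choice of $\mu_{k',k}$. The step that I expect to require the most care is pairwise disjointness, and this is precisely where the $c_k$'s are introduced: the $\mu_{k,j}$'s cannot belong to $A$ (otherwise $1$-independence of $A$ would be violated at $c_k$), while an equality $\mu_{k,j} = \mu_{k',j'}$ with $k \neq k'$ would witness $\dist_G(c_k, c_{k'}) \leq 2$, contradicting pairwise farness of the $c_k$'s. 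Combined with distinctness of the labels $b_k, c_k \in A$, this gives full disjointness, hence $K_t \minor_2 G$, yielding the desired contradiction.
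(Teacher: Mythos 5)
Your proof is correct and follows essentially the same argument as the paper: extract the $2t$-alternation positions, read off the half-graph-like distance pattern (your $b_k$, $c_k$ play exactly the roles of the paper's $a_k$, $b_k$), build the same radius-$2$ branch sets via common neighbors, and derive disjointness from the fact that the $c_k$'s are pairwise far, yielding $K_t \minor_2 G$. The only cosmetic difference is that the paper first isolates the intermediate claim that each $z_{ij}$ is non-adjacent to $b_s$ for $s\neq j$ and then concludes disjointness, whereas you deduce it directly from the same pairwise-far fact; the content is identical.
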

\begin{proof}
	Let $w$ be a node of $\tau$ with at least $2k$ alternations, for some $k\in \N$.
	Suppose $\alpha_1,\beta_1,\ldots,\alpha_k,\beta_k$ be the first $2k$ alternations of $w$.
	By the assumption that $w_0=\dau$ we have that~$w$ contains symbol $\son$ at all positions $\alpha_i$ for $i=1,\ldots,k$, and symbol $\dau$ at all positions $\beta_i$ for $i=1,\ldots,k$.
	For each $i\in \set{1,\ldots,k}$, define $a_i\in V(G)$ to be the label in $\tau$ of the prefix of $w$ of length $\alpha_i-1$, and similarly define $b_i\in V(G)$ to be the label in $\tau$ of the prefix of $w$
	of length $\beta_i-1$. 
	It follows that for each $i\in \set{1,\ldots,k}$, the following assertions hold:
	the nodes in $\tau$ with labels $b_i,a_{i+1},b_{i+1},\ldots,a_k,b_k$ are  descendants of the son of the node with label $a_i$,
	and the nodes with labels $a_{i+1},b_{i+1},\ldots,a_k,b_k$
	are descendants of the daughter of the node with label $b_i$.
	
	\begin{claim}\label{claim:minor}
		For every pair $a_i,b_j$ with $1\le i\le j\le k$, there is a vertex $z_{ij}\not\in A$	 which is a common neighbor of $a_i$ and $b_j$,
		and is not a neighbor of any $b_s$ with $s\neq j$.
	\end{claim}
	\begin{clproof}
		Note that since $i\le j$, the node with label $b_j$ is a descendant of the son of the node with label $a_i$, hence we have $\dist_G(a_i,b_j)\le 2$ by the construction of $\tau$.
		However, we also have $\dist_G(a_i,b_j)>1$ since $A$
		is $1$-independent. Therefore $\dist_G(a_i,b_j)=2$, so there is a vertex $z_{ij}$ which is a common neighbor of $a_i$ and $b_j$. 
		Suppose that $z_{ij}$ was a neighbor of $b_s$, for some $s\neq j$. This would imply that $\dist_G(b_j,b_s)\le 2$, which is impossible, 
because
		 the nodes with labels~$b_s$ and $b_j$ in $\tau$ are such that one is a descendant of the daughter of the other, implying that $\dist_G(b_s,b_j)>2$.
	\end{clproof}
  
Note that whenever $i\leq j$ and $i'\leq j'$ are such that $j\neq j'$, the vertices $z_{ij}$ and $z_{i'j'}$ are different, because $z_{ij}$ is adjacent to $b_{j}$ but not to $b_{j'}$, and the converse holds for $z_{i'j'}$.
However, it may happen that $z_{ij}=z_{i'j}$ even if $i\neq i'$. This will not affect our further reasoning.

For each $j\in\set{1,\ldots,k}$, let $B_j$
be the subgraph of $G$ induced by the set
$\set{a_j,b_j}\cup\set{z_{ij}\colon 1\le i\le  j}$.
Observe that $B_j$ is connected and has radius at most $2$, with $b_j$ being the central vertex.
By \cref{thm:alternation-rank-type-tree} and the discussion from the previous paragraph, the graphs $B_j$ for $j\in \set{1,\ldots,k}$
are pairwise disjoint.
Moreover, for all $1\le i\le j\le k$, there is an edge between $B_i$
and $B_j$, namely, the edge between $z_{ij}\in B_j$
and $a_i\in B_i$.
Hence, the graphs $B_j$, for $j\in \set{1,\ldots,k}$, define a depth-$2$ minor model of $K_k$ in $G$. Since $K_t\not\minor_{2}G$, this implies that $k<t$, proving~\cref{thm:alternation-rank-type-tree}.
\end{proof}

We continue with the proof of~\cref{lem:apex}. 
Fix integers $\ell\ge t^8$ and~$m$, and define $h=m+\ell$.
Let $A$ be a $1$-independent set in $G$
of size at least $h^{2t}$.

Suppose that the first case of \cref{lem:apex} does not hold. In particular $K_t\not\minor_2 G$, so by \cref{thm:alternation-rank-type-tree},~$\tau$ has alternation rank at most $2t-1$. From \cref{lem:number-of-nodes} 
we conclude that $\tau$  has depth at least~$h$.
As $h=m+\ell$, it follows that either $\tau$  has a node $w$ which contains at least $m$ letters~$\dau$, or $\tau$ has a node $w$ which contains  at least $\ell$ letters $\son$.

Consider the first case, i.e., there is a node $w$ of $\tau$
which contains at least $m$ letters $\dau$, and let $X$
be the set of all vertices $\tau(u)$ such that $u\dau$ is a prefix of $w$. Then, by construction, $X$ is a $2$-independent set in $G$ of size at least $m$, so the second case of the lemma holds.

Finally, consider the second case, i.e., there is a node $w$ in $\tau$ which contains at least $\ell$ letters~$\son$. Let 
$Y$ be the set of all vertices $\tau(u)$ such that $u\son$ is a prefix of $w$. Then, by construction, $Y\subset A$ is a set of at least $\ell$ vertices which are mutually at distance exactly $2$ in $G$. 
Since $K_t\not\minor_4 G$ and $\ell\geq t^8$, by~\cref{lem:diversity} we infer that there is a vertex $v\in G$
with at least $\ell^{1/4}$ neighbors in $Y$.
This finishes the proof of the existential part of~\cref{lem:apex}.

For the algorithmic part, the proof above yields an algorithm which first constructs the tree $\tau$, by 
iteratively processing each vertex $w$ of $A$ and testing whether the distance between $w$ and each vertex processed already is equal to $2$.
This amounts to running a breadth-first search from every vertex of $A$, which can be done in time $\Oof(|A|\cdot |E(G)|)$.
Whenever a node with $2t$ alternations 
is inserted to $\tau$, we can exhibit in $G$ a depth-$2$ minor model of $K_t$.
Whenever a node with least $m$ letters $\dau$ is added to~$\tau$,
we have constructed an $m$-independent set. Whenever a node with at least $\ell$ letters $\son$ is added to $\tau$, as argued, there must be some vertex $v\in V(G)-A$ with at least $\ell^{1/8}$ neighbors in~$A$. 
To find such a vertex, scan through all neighborhoods of vertices $v\in A$ in the graph $G$, and then select a vertex $w\in V(G)$
which belongs to the largest number of those neighborhoods; this can be done in time $\Oof(|E(G)|)$.
The overall running time is $\Oof(|A|\cdot |E(G)|)$, as required.


\subsection{Proof of~\cref{lem:engine}}
\label{sec:engine}

With \cref{lem:apex} proved, we can proceed with~\cref{lem:engine}. 
We start with the case $r=1$, then we move to the case $r=2$. 
Next, we show how the general case reduces to one of those two cases.

\paragraph{Case $r=1$.}
We put $d=0$, thus we assume that $K_t\not\minor_0 G$; that is, $G$ does not contain a clique of size $t$ as a subgraph. By Ramsey's Theorem, in every graph every vertex subset of size $\binom{m+t-2}{t-1}$ contains an
independent set of size $m$ or a clique of size $t$. Therefore, 
taking $L(m)$ to be the above binomial coefficient yields~\cref{lem:engine} in case $r=0$, for $S=\emptyset$. Note here that $\binom{m+t-2}{t-1}\in\Oof_{t}{(m^{{(4t+1)}^{2t}})}$.
Moreover, such independent set or clique can be computed from $G$ and $A$ in time~$\Oof(|A|\cdot |E(G)|)$ by simulating the proof of Ramsey's theorem.

\paragraph{Case $r=2$.}
We put $d=2$, thus we assume that $K_t\not\minor_4 G$.
We show that if $A$ is a sufficiently large $1$-independent set in a graph $G$ such that $K_t\not\minor_4 G$, 
then there is a set of vertices~$S$ of size less than $t$ such that $A\setminus S$ contains a subset of size $m$ which is $2$-independent in $G-S$. 
Here, by ``sufficiently large'' we mean of size of size at least $L(m)$, for $L(m)$ emerging from the proof.
To this end, we shall iteratively apply \cref{lem:apex} as long as  it results in the third case, 
yielding a vertex $v$ with many neighbors in $A$. In this case, we add $v$ vertex to the set $S$, and apply the lemma again,
restricting $A$ to $A\cap N(v)$. 
Precise calculations follow.

\newcommand{\mbull}{\widehat{m}}

Fix a number $\beta>4t$. For $k\ge 0$,
define $m_k=((k+1)\cdot m)^{(2\beta)^k}$.
In the following we will always assume that $m\geq t^8$. 
We will apply~\cref{lem:apex} in the following form.
\begin{claim}\label{cor:apex}
	If $G$ is a graph such that $K_t\not\minor_4 G$, and
	$A\subset V(G)$ is an $1$-independent set in $G$ which does not contain a $2$-independent set of size $m$ and satisfies $|A|\ge m_k$, for some $k\geq 1$,
	then there exists a vertex $v\in V(G)-A$ such that $|N_G(v)\cap A| \ge m_{k-1}$.
\end{claim}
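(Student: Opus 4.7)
The plan is to apply \cref{lem:apex} to the graph $G$ and the $1$-independent set $A$ with the choice of parameter $\ell := m_{k-1}^4$. This choice is forced by the third alternative in the lemma's conclusion, which produces a vertex with at least $\ell^{1/4}$ neighbors in $A$: setting $\ell = m_{k-1}^4$ makes $\ell^{1/4}$ equal to $m_{k-1}$, exactly the quantity demanded by the claim. Once \cref{lem:apex} is invoked, the first two of its alternatives will be ruled out by the hypotheses of the claim, so the third must hold.

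Two numerical hypotheses of \cref{lem:apex} must first be checked. The condition $\ell \geq t^8$ is immediate from
\[
\ell = m_{k-1}^4 \geq m^4 \geq t^{32} \geq t^8,
\]
using $m_{k-1} \geq m_0 = m$ and the standing assumption $m \geq t^8$. For the size condition $|A| \geq (m+\ell)^{2t}$, I would first observe that, since $m_{k-1}^4 \geq m^4 \gg m$,
\[
(m+\ell)^{2t} = m_{k-1}^{8t}\bigl(1 + m/m_{k-1}^4\bigr)^{2t} \leq m_{k-1}^{8t}\cdot \exp(2t/m^3) \leq 2\,m_{k-1}^{8t},
\]
where the last estimate uses $m \geq t^8$ to make the correction factor negligible. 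It then suffices to verify $m_k \geq 2\,m_{k-1}^{8t}$, i.e.\ $((k+1)m)^{(2\beta)^k} \geq 2\,(km)^{8t(2\beta)^{k-1}}$. Taking the $(2\beta)^{k-1}$-th logarithmic root, this rearranges to
\[
(2\beta - 8t)\log(km) + 2\beta\log\tfrac{k+1}{k} \geq \tfrac{\log 2}{(2\beta)^{k-1}},
\]
where both summands on the left are nonnegative. The first alone is bounded below by $(2\beta-8t)\cdot 8\log t$ because $m \geq t^8$; since $\beta > 4t$ provides a positive gap $2\beta - 8t$, this comfortably dominates the right-hand side for any $k \geq 1$.

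This arithmetic bookkeeping — balancing the tower exponents $(2\beta)^k$ and $8t(2\beta)^{k-1}$ and absorbing the harmless constants into the slack provided by $\beta > 4t$ and $m \geq t^8$ — is the step I expect to be the main obstacle; everything else is essentially setup. With both hypotheses verified, \cref{lem:apex} returns one of three alternatives. The first, $K_t \minor_4 G$, is excluded by the hypotheses of the claim; the second, that $A$ contains a $2$-independent set of size $m$, is also excluded. Hence the third alternative must hold, yielding a vertex $v \in V(G)$ with $|N_G(v) \cap A| \geq \ell^{1/4} = m_{k-1}$. Finally, since $A$ is $1$-independent in $G$, no element of $A$ has any neighbor in $A$, so the existence of at least $m_{k-1} \geq 1$ neighbors of $v$ inside $A$ forces $v \in V(G) \setminus A$, completing the proof.
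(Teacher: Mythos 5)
Your proof follows the same strategy as the paper's: pick $\ell = m_{k-1}^4$, verify the two numerical hypotheses of \cref{lem:apex}, and observe that the first two alternatives of that lemma are ruled out by the hypotheses of the claim, leaving a vertex with $\ell^{1/4} = m_{k-1}$ neighbors in $A$. The difference is entirely in how the inequality $m_k \geq (m+\ell)^{2t}$ is verified. The paper does it with a short algebraic chain that uses only $2\beta \geq 8t$:
\[
m_k = ((k+1)m)^{(2\beta)^k} \geq \bigl((m+km)^{4(2\beta)^{k-1}}\bigr)^{2t} \geq \bigl(m + (km)^{4(2\beta)^{k-1}}\bigr)^{2t} = (m+\ell)^{2t},
\]
with no logarithms, no exponential approximation, and no hidden constants. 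Your route — replacing $(m+\ell)^{2t}$ by $2m_{k-1}^{8t}$ and then showing $m_k\geq 2m_{k-1}^{8t}$ — reaches the same conclusion, but it is more work, and the final step as written has a small error. You claim that the first summand $(2\beta - 8t)\log(km)$ alone dominates $\frac{\log 2}{(2\beta)^{k-1}}$ for all $k\geq 1$. For $k=1$ the right-hand side is $\log 2$, so your claim needs $(2\beta-8t)\cdot 8\log t \geq \log 2$, which fails whenever $\beta$ is close enough to $4t$ (say $\beta = 4t + 10^{-3}$), even though $\beta>4t$. The inequality you want \emph{is} true for all $\beta>4t$, but the term that saves the $k=1$ case is the second summand $2\beta\log\frac{k+1}{k}\geq 2\beta\log 2$, not the first. (Alternatively, since the paper ultimately sets $\beta = 4t+1$, your estimate does hold for that specific value.) So: same approach, slightly more roundabout arithmetic, and a misattribution of which term dominates that should be fixed.
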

\begin{clproof}
Let $\ell=(k\cdot m)^{4\cdot(2\beta)^{k-1}}$.
Then $m\ge t^8$ implies that $\ell\ge t^8$.
Observe that
\[|A|\ge \left((k+1)\cdot m\right)^{(2\beta)^k}\ge\left ((m+ k\cdot m)^{4\cdot(2\beta)^{k-1}} \right)^{2t}
\ge \left(m+(k\cdot m)^{4\cdot (2\beta)^{k-1}}\right)^{2t}=(m+\ell)^{2t}.\]
Therefore, we may  apply \cref{lem:apex}, yielding a vertex $v$ with at least $\ell^{1/4}=(k\cdot m)^{(2\beta)^{k-1}}=m_{k-1}$ neighbors in~$A$.
\end{clproof}

We will now find 
a subset of $A$ of size $m$ which is $2$-independent in $G-S$, for some $S$ with $|S|<t$.
Assume that $|A|\ge m_t$. By induction, we
 construct a sequence  $A=A_0\supseteq A_1\supseteq\ldots$ 
of \mbox{$1$-independent} vertex subsets of $G$
of length at most $t$
such that $|A_i|\ge m_{t-i}$,
 as follows. Start with $A_0=A$. We maintain a set $S$ of vertices of $G$ which is initially empty, and we maintain the invariant that $A_i$ is disjoint with $S$ at each step of the construction.

For $i=0,1,2,\ldots$ do as follows.
If $A_{i}$ contains a subset of size $m$ which is $2$-independent set in $G-S$, terminate.
 Otherwise, 
 apply~\cref{cor:apex} to the graph $G-S$ with $1$-independent set
 $A_{i}$ of size $|A_i|\ge m_{t-i}$. This yields a vertex $v_{i+1}\in V(G)-(S\cup A_i)$
 whose neighborhood in $G-S$ contains at least
 $m_{t-i-1}$ vertices of $A_{i}$.
 Define $A_{i+1}$ as the set of neighbors of $v_{i+1}$ in $A_i$, and add $v_{i+1}$
 to the set~$S$.  
  Increment $i$ and repeat.

\begin{claim}\label{claim:at-most-t}
	The construction halts after less than $t$ steps.
\end{claim}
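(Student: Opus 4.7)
The plan is to prove the claim by contradiction: assume the construction performs at least $t$ iterations and use the accumulated structure to exhibit $K_t$ as a depth-$1$ (and hence depth-$4$) minor of $G$, contradicting $K_t\not\minor_{4}G$.

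Concretely, suppose the loop does not halt at any of the steps $i=0,1,\ldots,t-1$. Then vertices $v_1,\ldots,v_t$ and sets $A_1\supseteq A_2\supseteq\ldots\supseteq A_t$ are successfully constructed, with $|A_t|\geq m_0=m$. Since we have assumed $m\geq t^{8}\geq t$, we may pick $t$ distinct vertices $a_1,\ldots,a_t\in A_t$. The key structural observation is that by construction each $v_j$ is adjacent, in $G$, to every vertex of $A_j$; since $A_t\subseteq A_j$ for all $j\leq t$, each $v_j$ is adjacent to every $a_i$. Moreover, the vertices $v_1,\ldots,v_t,a_1,\ldots,a_t$ are pairwise distinct: the $a_i$'s are distinct by choice; the $v_j$'s are distinct because $v_{j+1}$ is picked from $V(G)-S$ and $v_1,\ldots,v_j$ have been added to $S$ in earlier iterations; and no $v_j$ equals any $a_i$ because $v_j$ was chosen from $V(G)-(S\cup A_{j-1})$ while $a_i\in A_t\subseteq A_{j-1}$.

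Now consider the family of branch sets $B_i=\{v_i,a_i\}$ for $i=1,\ldots,t$. Each $B_i$ induces a connected subgraph of radius at most $1$, since $v_i$ is adjacent to $a_i\in A_t\subseteq A_i$. The sets $B_1,\ldots,B_t$ are pairwise disjoint by the argument above. Finally, for every pair $i\neq j$ the branch sets $B_i$ and $B_j$ are joined by the edge $a_iv_j$ (as $v_j$ is adjacent to $a_i\in A_t\subseteq A_j$). Therefore $(B_i)_{i\in\{1,\ldots,t\}}$ is a depth-$1$ minor model of $K_t$ in $G$, so $K_t\minor_1 G$ and a fortiori $K_t\minor_4 G$, contradicting our standing assumption.

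Hence the construction must halt at some step $i<t$, which is exactly the statement of the claim. The argument is quite short; the only mildly subtle point is the bookkeeping that guarantees distinctness of the combined vertex set $\{v_1,\ldots,v_t,a_1,\ldots,a_t\}$, which follows directly from the invariants maintained by the loop (namely that $A_i$ stays disjoint from $S$, and that $v_{i+1}$ is selected outside $S\cup A_i$).
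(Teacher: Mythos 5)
Your proof is correct and takes essentially the same approach as the paper: both arguments suppose the loop runs too long, use that each $v_j$ is adjacent (in $G$) to all vertices of every later $A_i$, pick $t$ (respectively $k$) distinct vertices from the final set $A_t$, and exhibit the pairs $\{v_i,a_i\}$ as branch sets of a depth-$1$ minor model of a large clique, contradicting the exclusion assumption. The only difference is cosmetic: you spell out the distinctness/disjointness bookkeeping, which the paper leaves implicit, and you reach the contradiction via $K_t\minor_1 G\Rightarrow K_t\minor_4 G$, whereas the paper phrases it as building $K_k$ and invoking $K_t\not\minor_2 G$ to force $k<t$.
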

\begin{clproof}
Suppose that the construction proceeds for $k\le t$ steps.
By construction, each vertex~$v_i$, for $i\le k$, is adjacent in $G$
 to all the vertices of $A_{j}$, for each $i\le j\le k$. In particular, all the vertices $v_1,\ldots,v_k$ are adjacent to all the vertices of $A_{k}$
 and $|A_k|\ge m_{t-k}\ge m\ge t$.
Choose any pairwise distinct vertices $w_1,\ldots,w_k\in A_k$ and observe that the connected subgraphs $G[\set{w_i,v_i}]$ of~$G$ yield a depth-$1$ minor model of $K_k$ in $G$.
 Since $K_t\not\minor_2 G$, we must have $k<t$.
 \end{clproof}
 
 Therefore, at some step $k<t$ of the construction we must have obtained a $2$-independent subset $B$ of $G-S$ of size $m$. Moreover, $|S|\le k<t$.

 This proves~\cref{lem:engine} in the case $r=2$, for the function $L(m)$ defined as $L(m)=m_t=((t+1)\cdot m)^{\beta^{2t}}$
 for $m\ge t^8$, and $L(m)=L(t^8)$ for $m<t^8$, where $\beta>4t$ is any fixed constant.
 It is easy to see that then $L(m)\in \Oof_{t}{(m^{{(4t+1)}^{2t}})}$, provided we put $\beta=4t+1$.
 Also, the proof easily yields an algorithm constructing the sets~$B$ and $S$,
 which amounts to applying at most $t$ times the algorithm of~\cref{lem:apex}.
 Hence, its running time  is $\Oof_{r,t}(|A|\cdot |E(G)|)$, as required.

\paragraph{Odd case.}
We now prove~\cref{lem:engine} in the case when $r=2s+1$, for some integer $s\geq 1$. We put $d=s=\frac{r-1}{2}$.
Let $G$ be a graph such that $K_t\not\minor_s G$, and 
 let $A$ be a $2s$-independent set in $G$. Consider the graph $G'$ obtained from $G$
by contracting the (pairwise disjoint) balls of radius $s$ around each vertex $v\in A$.
 Let $A'$ denote the set of vertices of $G'$ corresponding to the contracted balls. There is a natural correspondence (bijection) between $A$ and $A'$, where each vertex $v\in A$ is associated with the
 vertex of $A'$ resulting from contracting the ball of radius $s$ around $v$.
From $K_t\not\minor_s G$ it follows that~$G'$ does not contain $K_t$ as a subgraph. Applying the already proved case $r=1$ to $G'$ and $A'$, we conclude that 
provided $|A|=|A'|\ge {m+t-2\choose t-1}$, the set
 $A'$ contains a $1$-independent subset $B'$ of size $m$,
 which corresponds to a $(2s+1)$-independent set $B$ in $G$ that is contained in $A$; thus, we may put $S=\emptyset$ again.
 Hence, the obtained bound is $L(m)={m+t-2\choose t-1}$, and we have already argued that then $L(m)\in \Oof_{r,t}{(m^{{(4t+1)}^{2t}})}$.

 \paragraph{Even case.}
 Finally,
 we prove~\cref{lem:engine} in the case $r=2s+2$, for some integer $s\geq 1$. We put $d=9s+4=9r/2-5$.
Let $G$  be such that 
 $K_t\not\minor_{d} G$, and
let $A$ be a $(2s+1)$-independent set in~$G$. Consider the graph $G'$ obtained from $G$
by contracting the (pairwise disjoint) balls of radius $s$ around each vertex $v\in A$.
 Let $A'$ denote the set of vertices of $G'$ corresponding to the contracted balls. Again, there is a natural correspondence (bijection) between $A$ and $A'$. Note that
this time, $A'$ is a $1$-independent set in $G'$.
Since $G'\minor_s G$, from $K_t\not\minor_{9s+4} G$ it follows by \Cref{lem:combineminors} that $K_t\not\minor_4 G'$. Apply the already proved case $r=2$ to $G'$ and $A'$. 
Then, provided $|A|=|A'|\ge L_t(m)$, where $L_t(m)$ is the function as defined in the case $r=2$, we infer that
 $A'$ contains a subset $B'$ of size $m$
which is  $2$-independent in $G'-S'$, for some $S'\subset V(G')-A'$ of size less than $t$.
Since $S'\cap A'=\emptyset$, each vertex of $S'$ originates from a single vertex of $G$ before the contractions yielding $G'$; thus, $S'$ corresponds to a
set $S$ consisting of less than $t$ vertices of~$G$ which are at distance at least $s+1$ from each vertex in $A$.
In turn, the set $B'$ corresponds to some subset $B$ of $A$
which is $(2s+2)$-independent in $G-S$. Moreover, as in $G'$ each vertex of~$S'$
is a neighbor of each vertex of $B'$,  each vertex of $S$
has distance exactly $s+1=r/2$ from each vertex of $B$.

\medskip
An algorithm computing the sets $B$ and $S$ (in either the odd or even case) can be given as follows:
simply run a breadth-first search from each vertex of $A$ to compute the graph $G'$ with the balls of radius  $\lfloor \frac{r-1}2 \rfloor$  around the vertices in $A$ contracted to single vertices, 
and then run the algorithm for the case $r=1$ or $r=2$.
This yields a running time of  $\Oof_{r,t}(|A|\cdot |E(G)|)$.
 \medskip
  
This finishes  the proof of~\cref{lem:engine}.

\subsection{Proof of \cref{thm:new-uqw}}
We now wrap up the proof of \cref{thm:new-uqw} by iteratively applying~\cref{lem:engine}. 
We repeat the statement for convenience.

 \setcounter{aux}{\thetheorem}
 \setcounter{theorem}{\theuqw}
 \begin{theorem}
 For all $r,t\in \N$ there is a polynomial  $N\colon \N\to \N$ with $N(m)=
 \Oof_{r,t}{(m^{{(4t+1)}^{2rt}})}$, such that the following holds.
 Let $G$ be a graph such that $K_t\not\minor_{\lfloor 9r/2\rfloor} G$, and
 let $A\subseteq V(G)$ be a vertex subset of size at least $N(m)$, for a given $m$.
 Then there exists a set $S\subseteq V(G)$ of size $|S|<t$ and a set $B\subseteq A\setminus S$
 of size $|B|\geq m$ which is $r$-independent in $G-S$.
 Moreover, given~$G$ and $A$, such sets $S$ and $B$ can be computed in time $\Oof_{r,t}(|A|\cdot |E(G)|)$.
 \end{theorem}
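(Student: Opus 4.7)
The plan is to prove \cref{thm:new-uqw} by iteratively applying \cref{lem:engine} for $i = 1, 2, \ldots, r$, bootstrapping from the given set $A$ (which is trivially $0$-independent) to an $r$-independent subset. Starting from $A_0 := A$ and $S^{(0)} := \emptyset$, in each round $i$ we invoke \cref{lem:engine} in the subgraph $G^{(i-1)} := G - S^{(i-1)}$ applied to the $(i-1)$-independent set $A_{i-1}$ with parameters $t$ and $i$. The required hypothesis $K_t \not\minor_{d_i} G^{(i-1)}$, with $d_i < 9i/2 \le \lfloor 9r/2 \rfloor$, follows from $K_t \not\minor_{\lfloor 9r/2 \rfloor} G$, since induced subgraphs cannot introduce new shallow minors. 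The lemma yields a set $S_i$ with $|S_i| < t$ (empty for odd $i$) and $A_i \subseteq A_{i-1}$ which is $i$-independent in $G^{(i-1)} - S_i$; we put $S^{(i)} := S^{(i-1)} \cup S_i$ and continue. After $r$ rounds, $B := A_r$ is $r$-independent in $G - S^{(r)}$. Composing the polynomial size-bounds $L_i$ from \cref{lem:engine} (whose degree is $(4t+1)^{2it}$ for the even cases and only $t-1$ for the odd ones) shows that $|A| \ge N(m)$ suffices to ensure $|A_r| \ge m$, with $N$ a polynomial of degree at most $\prod_{i=1}^r (4t+1)^{2it}$, which fits within the claimed $(4t+1)^{2rt}$ up to $r,t$-dependent factors absorbed by $\Oof_{r,t}$.

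The main obstacle is establishing the sharp bound $|S^{(r)}| < t$, rather than the naive $|S^{(r)}| < t \cdot \lceil r/2 \rceil$ obtained by summing sizes across iterations. Here we crucially leverage the extra structural information provided by \cref{lem:engine}: for even $i$, each vertex $v \in S_i$ lies at distance exactly $i/2$ in $G^{(i-1)}$ from every vertex of $A_i$, hence at distance at most $i/2 \le \lfloor r/2 \rfloor$ from every vertex of $A_r \subseteq A_i$ in $G$ itself. Consequently, every vertex of $S^{(r)}$ lies within $G$-distance $\lfloor r/2 \rfloor$ of every vertex of the large set $A_r$. Suppose toward contradiction that $|S^{(r)}| \ge t$, and pick $t$ distinct vertices $v_1, \ldots, v_t \in S^{(r)}$ together with distinct connectors $a_{ij} \in A_r$ for each pair $i < j$ (made possible by enlarging $N$ by a constant depending on $r$ and $t$). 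Each pair $v_i, v_j$ is joined, via $a_{ij}$, by a walk of length at most $r$ in $G$; using that distinct $a_{ij}$'s have pairwise disjoint $\lfloor r/2 \rfloor$-balls in $G - S^{(r)}$ by the $r$-independence of $A_r$, we can organise these walks into pairwise disjoint branch sets of radius $\lfloor r/2 \rfloor$ centered at the $v_i$'s, exhibiting $K_t$ as a depth-$\lfloor r/2 \rfloor$ minor of $G$ and contradicting $K_t \not\minor_{\lfloor 9r/2 \rfloor} G$. The technical delicacy lies in making the branch sets genuinely disjoint; this is where the depth slack between $\lfloor r/2 \rfloor$ and $\lfloor 9r/2 \rfloor$ is spent, allowing us to reroute interfering path segments through the large reservoir $A_r$ and its $r$-independent neighbourhoods.

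The algorithmic claim is immediate: each invocation of \cref{lem:engine} runs in time $\Oof_{r,t}(|A| \cdot |E(G)|)$ by its algorithmic part, and since there are at most $r$ invocations (a constant once $r$ is fixed), the total running time is $\Oof_{r,t}(|A| \cdot |E(G)|)$ as claimed. The minor-based bound on $|S^{(r)}|$ is a verification step and does not affect the runtime.
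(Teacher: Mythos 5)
Your high-level architecture matches the paper's: iterate \cref{lem:engine} to bootstrap from a $0$-independent set to an $r$-independent one, compose the polynomial size bounds across the $r$ rounds, and verify the runtime by noting that the number of invocations of \cref{lem:engine} is bounded by $r$. You also correctly recognize that the naive bound $|S| < t\lceil r/2\rceil$ must be sharpened to $|S| < t$ by a minor-model argument that exploits the extra distance information supplied by the even case of \cref{lem:engine}. So far, so good.

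However, your construction of the $K_t$ minor model is genuinely different from the paper's, and the version you sketch is not carried through. You propose to center the branch sets at the apex vertices $v_1,\ldots,v_t\in S^{(r)}$, linking each pair $v_i,v_j$ via a dedicated connector $a_{ij}\in A_r$, and you candidly flag that making these branch sets disjoint ``is where the depth slack between $\lfloor r/2\rfloor$ and $\lfloor 9r/2\rfloor$ is spent, allowing us to reroute interfering path segments.'' This is a gap, not a detail: the walks emanating from a single $v_i$ to its $t-1$ connectors traverse vertices that are not inside any of the disjoint $\lfloor r/2\rfloor$-balls around the $a_{ij}$'s, so distinct walks from distinct $v_i$'s can share vertices in ways that no amount of ``depth slack'' obviously repairs, and the two walks meeting at a common $a_{ij}$ would already collide there. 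You offer no concrete rerouting scheme, and the depth slack cannot be spent on this anyway --- the factor $\lfloor 9r/2\rfloor$ is needed earlier, in the even case of \cref{lem:engine}, where one must exclude $K_t$ at depth $9s+4$ before contraction in order to invoke the $r=2$ case at depth $4$ in the contracted graph.

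The paper's argument sidesteps these difficulties entirely by centering the branch sets at vertices of $A_r$ rather than of $S^{(r)}$. For $a\in A_r$, let $\overline{N}(a)$ be the $\lfloor r/2\rfloor$-ball around $a$ in $G-S^{(r)}$; these are pairwise disjoint because $A_r$ is $r$-independent in $G-S^{(r)}$. The crucial observation you miss is then the following strengthening of your distance claim: not only is every $v\in S^{(r)}$ at $G$-distance at most $\lfloor r/2\rfloor$ from every $a\in A_r$, but the realizing path (minus $v$) lies entirely inside $\overline{N}(a)$, because every internal vertex is at distance strictly less than $i/2$ from $a$ (where $i$ is the even step at which $v$ was added) and later steps only delete vertices at distance at least $i/2$ from $a$. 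Hence each $v\in S^{(r)}$ has a neighbor in every $\overline{N}(a)$. Choosing any distinct $a_1,\ldots,a_t\in A_r$ and $s_1,\ldots,s_t\in S^{(r)}$, the sets $\overline{N}(a_i)\cup\{s_i\}$ are pairwise disjoint (the $\overline{N}(a_i)$'s are disjoint, and the $s_i$'s are outside $G-S^{(r)}$), each is connected of radius at most $\lfloor r/2\rfloor+1$, and for $i\neq j$ the vertex $s_i$ is adjacent to a vertex of $\overline{N}(a_j)$. This yields $K_t$ as a shallow minor with no rerouting needed, and fits comfortably within depth $\lfloor 9r/2\rfloor$. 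You should replace your pairwise-connector construction by this one-to-one pairing of $a_i$ with $s_i$.
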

 \setcounter{theorem}{\theaux}
\begin{proof}
Fix integers $r,t$,  and a graph $G$ such that $K_t\not\minor_{d} G$,
for $d=\lfloor 9r/2 \rfloor$. Let $\beta>4t$ be a fixed real. As in the proof of \cref{lem:engine}, we suppose $m\geq t^8$; this will be taken care by the final choice of the function $N(m)$.
Denote $\gamma=\beta^{2t}$, and
define the function $L(m)$ as $L(m)=((t+1)\cdot m)^\gamma$.

Define sequence $m_0,m_1,\ldots,m_r$ as follows:
\begin{eqnarray*}
m_r & = & m\\
m_i & = & L(m_{i+1}) \qquad \textrm{for }0\leq i<m.
\end{eqnarray*}
A straightforward induction yields that 
\begin{equation*}
m_i=(t+1)^{\frac{\gamma^{r-i}-1}{\gamma-1}}\cdot m^{\gamma^{r-i}}\qquad \textrm{for all }i\in \set{0,\ldots,r}.
\end{equation*}

Suppose that $A$ is a set of vertices of $G$ such that $|A|\ge m_0=(t+1)^{\frac{\gamma^{r}-1}{\gamma-1}}\cdot m^{\gamma^{r}}$. 
We inductively construct sequences of sets $A= A_0\supseteq A_1\supseteq \ldots \supseteq A_r$ and $\emptyset=S_0\subseteq S_1\subseteq S_2\ldots$
satisfying the following conditions:
\begin{itemize}
	\item $|A_i|\ge m_i=L(m_{i+1})$,
	\item $A_i\cap S_i=\emptyset$ and $A_i$ is $i$-independent in $G-S_i$.
\end{itemize}
To construct $A_{i+1}$ out of $A_i$, apply~\cref{lem:engine} to the graph $G-S_i$ and 
the $i$-independent set $A_i$ of size at least $L(m_{i+1})$. This yields a set $S\subseteq V(G)$ which is disjoint from $S_i\cup A_i$, and a subset $A_{i+1}$ of $A_i-S$ of size 
at least $m_{i+1}$
which is $(i+1)$-independent in $G-S_{i+1}$, where $S_{i+1}=S\cup S_i$. This completes the inductive construction.

In particular,  $|A_r|\ge m_r=m$ and $A_r$ is a subset of $A$ which is $r$-independent in $G-S_r$.
Observe that by construction, $|S_r|<r t/2$, as in the odd steps, the constructed set $S$ is empty, and in the even steps, it has less than $t$ elements. 
We show that in fact we have $|S_r|<t$ using the following argument, similar to the one used in~\cref{claim:at-most-t}.

By the last part of the statement of~\cref{lem:engine},  at the $i$th step of the construction, each vertex of the set $S$ obtained from \cref{lem:engine}
is at distance exactly $i/2$ from all the vertices in $A_{i+1}$ in the graph 
$G-S_i$. 
For $a\in A_r$, let $\overline{N}(a)$ denote the $\lfloor r/2\rfloor$-neighborhood of $a$ in $G-S_r$; note that sets $\overline{N}(a)$ are pairwise disjoint.
The above remark implies that each vertex $v$ of the final set $S_r$ has a neighbor in the set $\overline{N}(a)$ for each $a\in A_r$.
Indeed, suppose $v$ belonged to the set $S$ added to $S_r$ in the $i$th step of the construction; i.e. $v\in S_{i+1}\setminus S_i$.
Then there exists a path in $G-S_i$ from $v$ to $a$ of length exactly~$i/2$, which traverses only vertices at distance less than $i/2$ from $a$.
Since in this and further steps of the construction we were removing only vertices at distance at least $i/2$ from $a$, this path stays intact in $G-S_r$ and hence is completely contained in $\overline{N}(a)$.

By assumption that $m\ge t$, we may choose pairwise different vertices $a_1,\ldots,a_t\in A_r$.
To reach a contradiction, suppose that $S_r$ contains $t$ distinct vertices $s_1,\ldots,s_t$. 
By the above, the sets $\overline{N}(a_i)\cup\set{s_i}$ 
form a minor model of $K_t$ in $G$ at depth-$(\lfloor r/2\rfloor+1)$.
This contradicts the assumption that $K_t\not\minor_d G$ for $d=\lfloor 9r/2 \rfloor$.
Hence, $|S|<t$.

Define the function  $N:\N\to\N$
as $N(m)=(t+1)^{\frac{\gamma^{r}-1}{\gamma-1}}\cdot m^{\gamma^{r}}$
for $m\ge t^8$ and $N(m)=N(t^8)$ for $m<t^8$; this justifies the assumption $m\geq t^8$ made in the beginning.
Recalling that $\gamma=\beta^{2t}$ and putting $\beta=4t+1$, we
note that $N(m)\in \Oof_{r,t}{(m^{{(4t+1)}^{2rt}})}$.
The argument above shows that if $|A|\ge N(m)$, then 
there is a set $S\subset V(G)$, equal to $S_r$ above,
and a set $B\subset A$, equal to $A_r$ above,
so that $B$ is $r$-independent in $G-S$.
Given~$G$ and $A$, the sets~$S$ and $B$ can be computed by applying the algorithm of \cref{lem:engine} at most~$r$ times, so in time $\Oof_{r,t}(|A|\cdot |E(G)|)$.
This finishes the proof of~\cref{thm:new-uqw}.
\end{proof}

\section{Uniform quasi-widness for tuples}\label{sec:uqw-tuples}
We now formulate and prove an extension of~\cref{thm:new-uqw}
which applies to sets of tuples of vertices, rather than sets of vertices. 
This more general result will be used later on in the paper. 
The result and its proof are essentially adaptations to the finite of their infinite analogues introduced by Podewski and Ziegler (cf.~\cite{podewski1978stable},  Corollary 3),
modulo the numerical bounds.

We generalize the notion of independence to sets of tuples of vertices.
Fix a graph $G$ and a number $r\in \N$, and let $S\subseteq V(G)$ be a subset of vertices of $G$.
We say that vertices $u$ and $v$ are {\em{$r$-separated}} by $S$ in $G$ if every path of length at most $r$ connecting $u$ and $v$ in $G$ passes through a vertex of $S$.
We extend this notion to tuples:
two tuples $\bar u,\bar v$ of vertices of $G$ are \emph{$r$-separated} by $S$ every vertex appearing in $\bar u$ is $r$-separated by $S$ from every vertex appearing in $\bar{v}$.
Finally, if $A\subseteq V(G)^d$ is a set of $d$-tuples of vertices, for some $d\in\N$,
then we say that $A$ is \emph{mutually $r$-separated} by $S$ in $G$ 
if any two distinct $\bar u,\bar v\in A$ are $r$-separated by $S$ in $G$.

\newcommand{\uqw}{\mathrm{UQW}}
\newcommand{\puqw}{\mathrm{PUQW}}
With these definitions set, we may introduce the notion of uniform quasi-wideness for tuples.

\begin{definition}
Fix a class $\cal C$ and numbers $r,d\in\N$.
For a function $N\from\N\to\N$
and number $s\in\N$,
we say that $\cal C$ satisfies property
$\uqw^d_r(N,s)$ if the following condition holds:
   \begin{quote}\itshape 
      for every $m\in \N$ and every subset 
     $A\subseteq V(G)^d$ with $|A|\ge N(m)$, there is a set $S\subset V(G)$ with $|S|\le s$ and a subset $B\subset A$ with $|B|\ge m$ which is mutually $r$-separated by $S$ in $G$.
   \end{quote}   
    We say that $\cal C$ satisfies property $\uqw^d_r$ if  $\cal C$ satisfies $\uqw^d_r(N,s)$ for 
	some $N\from\N\to\N$ and $s\in\N$.
	If moreover one can take $N$ to be a polynomial,
	then we say that $\cal C$ satisfies property $\puqw^d_r$.
\end{definition}

When $d=1$, we omit it from the superscripts.
  Note that there is a slight discrepancy 
  in the definition of uniform quasi-wideness 
  and the property of satisfying $\uqw_r$, for all $r\in \N$.
  This is due to the fact that in the original definition,
  the set $B$ must be disjoint from $S$,
  whereas in the property $\uqw_r$, 
  some vertices of $S$ may belong to $B$. This distinction is inessential when it comes to dimension $1$, since $|S|\le s_r$ for some constant $s_r$,
  so passing from one definition to the other requires 
  modifying the function $N_r$ by an additive constant $s_r$.
In particular, a class of graphs $\cal C$ is uniformly quasi-wide if and only if it 
	satisfies $\uqw_r$, for all $r\in \N$.  
  However, generalizing to tuples of dimension $d$ requires the use of the definition above, where the tuples in~$B$ are allowed to contain  vertices which occur in $S$. 
  For example, if the graph $G$ is a star with many arms and $A$ consists of all pairs of adjacent vertices in $G$, then $S$
  needs to contain the central vertex of $G$,
  and therefore $S$ will contain a vertex from every tuple in $A$. We may take $B$ to be equal to $A$ in this case.
  
\medskip
	Using the above terminology, \cref{thm:new-uqw}
	states that for every fixed $r\in\N$, if there is a number $t\in\N$
	such that $K_t\not\minor_{\lfloor 9r/2\rfloor} G$ for all $G\in \cal C$,
	then $\cal C$ satisfies $\puqw_r$,
	and more precisely $\uqw_r(N_r,s_r)$
	for a polynomial $N_r\from\N\to\N$ and number $s_r\in \N$, where $N_r$ and $s_r$ can be computed from $r$ and $t$.
	The following result provides a generalization to higher dimensions.

\begin{theorem}\label{thm:uqw-tuples}If $\cal C$
	is a nowhere dense class of graphs,
	then for all $r,d\in\N$,
	the class $\cal C$ satisfies
	 $\puqw^d_r$.
	More precisely, for any class of graphs $\cal C$ and numbers $r,t\in\N$,
	if  	$K_t\not\minor_{18r} G$ for all $G\in \cal C$,
then for all $d\in \N$ the class $\cal C$ satisfies $\uqw^d_{r}(N^d_r,s^d_r,)$	for 
some number $s^d_r\in \N$ and polynomial $N^d_r\from \N\to\N$ that can be computed given $r$, $t$, and $d$.
\end{theorem}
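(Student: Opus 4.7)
The plan is to prove \cref{thm:uqw-tuples} by induction on $d$, with the base case $d=1$ given by \cref{thm:new-uqw}. The stronger minor-exclusion assumption $K_t\not\minor_{18r}G$ in the statement accommodates the iterative application of UQW at larger separation parameters within the induction, since the base case at parameter $r'$ requires only $K_t\not\minor_{\lfloor 9r'/2\rfloor}G$.

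For the inductive step from $d$ to $d+1$, given $A\subseteq V(G)^{d+1}$ of size at least $N^{d+1}_r(m)$, I consider the projection $\rho\colon V(G)^{d+1}\to V(G)^d$ onto the first $d$ coordinates and case-split on the fibers. If some fiber $A_{\bar u}=\{v:(\bar u,v)\in A\}$ has size at least $N_r(m)$, then \cref{thm:new-uqw} applied to $A_{\bar u}$ yields a small separator $S'$ and an $r$-independent $B'\subseteq A_{\bar u}$; setting $S=S'\cup\{u_1,\ldots,u_d\}$ and $B=\{(\bar u,v):v\in B'\}$ gives mutual $r$-separation directly, since any cross-tuple vertex pair either meets a coordinate of $\bar u$ (which lies in $S$) or is handled by $S'$. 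Otherwise every fiber is small, so $|\rho(A)|\ge|A|/N_r(m)$ remains large, and the inductive hypothesis applied to $\rho(A)$ with the enlarged separation parameter $2r+1$ gives a set $B_\rho\subseteq\rho(A)$ mutually $(2r+1)$-separated by some small $S_\rho$.

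The remaining task is to pick last-coordinate representatives $v_{\bar u}$ for each $\bar u\in B_\rho$ and verify all cross-coordinate separations. I classify the resulting tuples according to whether $v_{\bar u}$ lies within distance $r$ of $\bar u$ in $G-S_\rho$: in the ``close'' case the triangle inequality combined with the $(2r+1)$-separation of $B_\rho$ forces $v_{\bar u}$ to be at distance greater than $r$ from every vertex of every $\bar u'\neq\bar u$ in $G-S_\rho$, yielding all cross-coordinate separations for free, while the within-last-coord separation is obtained by one more invocation of \cref{thm:new-uqw} on the set of representatives. The ``far'' case, where some $v_{\bar u}$ is more than $r$ away from $\bar u$ in $G-S_\rho$ and hence already $r$-separated from $\bar u$ itself by $S_\rho$, is handled by a symmetric dichotomy on whether the representatives repeat many times (fold into the separator) or diverge (apply UQW), after which cross-tuple separations reduce to the already-controlled first-$d$-coordinate separations.

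The main obstacle, and the crux of the proof, is the cross-coordinate separation between the last coordinate of one tuple and the first $d$ coordinates of another; the naive combination of separators does not control it. The idea of enlarging the separation parameter in the inductive hypothesis to $2r+1$ and exploiting the triangle inequality is precisely what resolves this. Standard bookkeeping then shows that $N^{d+1}_r$ remains polynomial in $m$ with a degree computable from $r$, $t$, and $d$, while $s^{d+1}_r$ grows by at most a constant per inductive level, yielding the bounds asserted in the theorem.
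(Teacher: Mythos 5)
Your induction on $d$ is a genuinely different route from the paper's, which instead applies $1$-dimensional UQW once per coordinate, always at the fixed parameter $2r$ (\cref{lem:step1}), and then performs a single greedy cleanup (\cref{lem:step2}, losing only a factor $d^2+1$) to convert coordinate-wise $2r$-separation into full mutual $r$-separation. As written, though, your inductive step has a real gap in the ``far'' case: when $v_{\bar u}$ is at distance $>r$ from $\bar u$ in $G-S_\rho$, nothing you describe prevents $v_{\bar u}$ from being at distance $\le r$ in $G-S_\rho$ from some \emph{other} $\bar u'\in B_\rho$. The extra UQW call on the representatives controls the separation of $v_{\bar u}$ from $v_{\bar u'}$ but says nothing about $v_{\bar u}$ versus $\bar u'$, and the claim that this ``reduces to the already-controlled first-$d$-coordinate separations'' does not hold. (It is true that the $(2r+1)$-separation of $B_\rho$ forces $v_{\bar u}$ to be close to at most one $\bar u'\ne\bar u$, so a greedy selection discarding a constant fraction would repair this---but that is exactly the cleanup step you have not spelled out, and it is the content of \cref{lem:step2} in the paper.)

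Second, and separately, invoking the inductive hypothesis at the enlarged parameter $2r+1$ means that unwinding the recursion down to dimension $1$ applies \cref{thm:new-uqw} at radius of order $2^{d-1}r$, which requires $K_t\not\minor_{\Theta(2^d r)}G$. This exceeds the fixed depth $18r$ asserted in the ``more precisely'' clause already at $d=3$, where the radius is $4r+3$ and \cref{thm:new-uqw} needs depth $18r+13$. The paper keeps the minor-exclusion depth independent of $d$ precisely because every $1$-dimensional UQW invocation is at the same radius $2r$; your induction does not, so even the base-case applications need progressively stronger hypotheses as $d$ grows. The first (qualitative) clause of the theorem survives your approach, but the quantitative clause you claim to obtain by ``standard bookkeeping'' does not.
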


\cref{thm:uqw-tuples} is an immediate consequence  of~\cref{thm:new-uqw} (or~\cref{thm:krs} if only the first part of the statement is concerned)
and of the following result.

\begin{proposition}\label{prop:uqw-tuples}
For all $r,d\in\N$,
if $\cal C$ satisfies $\uqw_{2r}(N_{2r},s_{2r})$ 
for some $s_{2r}\in\N$ and \mbox{$N_{2r}\colon\N\to \N$},
then $\cal C$
satisfies $\uqw^d_r(N^d_r,s^d_r)$
for  $s^d_r=d\cdot s_{2r}$ and 
function $N^d_r\colon \N\to \N$ defined as $N^d_r(m)=f^d((d^2+1)\cdot m)$, where $f(m')=m'\cdot N_{2r}(m')$ and $f^d$ is the $d$-fold composition of $f$ with itself.
\end{proposition}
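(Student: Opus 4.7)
The plan is to proceed by induction on $d$. The base case $d = 1$ follows immediately from $\uqw_{2r}$ applied with parameter $m$: since $|A| \ge N^1_r(m) \ge N_{2r}(m)$, one obtains $S$ with $|S| \le s_{2r}$ and $B \subseteq A$ of size at least $m$ mutually $2r$-separated by $S$, which is in particular mutually $r$-separated by $S$.

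For the inductive step, assume the statement holds for dimension $d-1$, and let $A \subseteq V(G)^d$ with $|A| \ge N^d_r(m) = f(M_d)$, where $M_d := f^{d-1}((d^2+1)m)$. I would split each tuple into its first coordinate and its length-$(d-1)$ tail, and then perform a two-level case analysis.

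The primary split depends on whether some vertex dominates the first coordinate. In \emph{Case A}, some $v \in V(G)$ is the first coordinate of at least $M_d$ tuples of $A$. Their tails then form a subset of $V(G)^{d-1}$ of size at least $M_d \ge N^{d-1}_r(m)$ (by monotonicity of $f$), so the inductive hypothesis yields $S'$ of size at most $(d-1)s_{2r}$ and $B'$ of at least $m$ tails mutually $r$-separated by $S'$. Returning $S = S' \cup \{v\}$ and $B = \{(v, \bar t) : \bar t \in B'\}$ completes this case, since the single vertex $v \in S$ takes care of every separation involving the first coordinate. In \emph{Case B}, every first coordinate appears fewer than $M_d$ times, so the set of distinct first coordinates has size at least $|A|/M_d \ge N_{2r}(M_d)$; applying $\uqw_{2r}$ with parameter $M_d$ yields $S_1$ with $|S_1| \le s_{2r}$ and a set $A'_1$ of size at least $M_d$ mutually $2r$-separated by $S_1$. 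For each $v \in A'_1$ I pick arbitrarily a tuple $\bar a_v = (v, \bar t_v) \in A$ and split again on the multiplicity of these chosen tails. In \emph{Case B1}, some tail $\bar t$ coincides with $\bar t_v$ for at least $m$ vertices $v$; I return $B = \{\bar a_v : \bar t_v = \bar t\}$ together with $S = S_1 \cup \{u : u \text{ occurs in } \bar t\}$, which has size at most $s_{2r} + (d-1) \le d \cdot s_{2r}$, and separation is clear since first coordinates are $2r$-separated by $S_1$ and every tail vertex lies in $S$. In \emph{Case B2}, every tail has multiplicity below $m$, so the set of distinct tails has size at least $M_d/m$; I apply the inductive hypothesis to this set (as $(d-1)$-tuples), obtaining $S_T$ of size at most $(d-1) s_{2r}$ and $B_T$ of size at least $m$ mutually $r$-separated by $S_T$. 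For each $\bar t \in B_T$ I then select a matching first coordinate $v_{\bar t} \in A'_1$ and return $B = \{(v_{\bar t}, \bar t) : \bar t \in B_T\}$ with $S = S_1 \cup S_T$.

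The main obstacle is verifying cross-separations in Case B2: for distinct $\bar t, \bar t' \in B_T$, one needs $v_{\bar t}$ to be $r$-separated by $S$ from every tail coordinate $t'_j$ of $\bar t'$, a condition which does not follow formally from the $r$-separation of the tails together with the $2r$-separation of first coordinates. The driving idea will be that any path of length at most $r$ from $v_{\bar t}$ to $t'_j$ avoiding $S$, if concatenated with a sufficiently short path from $t'_j$ to the home first coordinate $v_{\bar t'}$, would produce a walk of length at most $2r$ from $v_{\bar t}$ to $v_{\bar t'}$ avoiding $S_1$, contradicting the $2r$-separation given by $\uqw_{2r}$. Turning this into a proof requires selecting the representatives $v_{\bar t}$ with care and exploiting the slack present both in $|A'_1|$ and in the tail multiplicities; the factor $(d^2+1)$ in the argument of $f^d$ is precisely what supplies this slack, ensuring that the inductive call on tails in Case B2 has enough room and that the cross-separation argument can be carried through. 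Once this is done the final separator has size at most $s_{2r} + (d-1)s_{2r} = d \cdot s_{2r}$, as required.
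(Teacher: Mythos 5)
Your plan has a real gap, and it is exactly where you flag it: the cross-separation claim in Case~B2 cannot be closed by the "driving idea" you sketch. You want to argue that a short path from $v_{\bar t}$ to a tail coordinate $t'_j$ of $\bar t'$ (avoiding $S$) could be extended by a short path from $t'_j$ to $v_{\bar t'}$ (avoiding $S_1$) to contradict the $2r$-separation of the first coordinates. But the tuples in $A$ are arbitrary: a tail coordinate $t'_j$ need not lie anywhere near $v_{\bar t'}$, so no such extending path exists in general. The slack supplied by the $(d^2+1)$ factor cannot rescue this, because the obstruction is not a counting shortfall but the absence of any a priori geometric relationship between a tuple's coordinates. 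No careful choice of representatives $v_{\bar t}$ fixes this either, since $\uqw_{2r}$ only controls the first coordinates among themselves, and the inductive hypothesis only controls the tails among themselves; neither gives any handle on the cross terms.

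The paper avoids this trap by separating the two concerns. Its first step (\cref{lem:step1}) iterates over all $d$ coordinates and produces a set $B$ together with a separator $S$ such that for \emph{each fixed coordinate} $i$, the $i$th entries of the tuples in $B$ are mutually $2r$-separated by $S$. Crucially, no cross-coordinate separation is demanded at this stage, which is why a simple coordinate-by-coordinate reduction (with the same two subcases you use — apply $\uqw_{2r}$ if the projection is large, else absorb a high-multiplicity vertex into $S$) goes through cleanly, accumulating $d\cdot s_{2r}$ separator vertices. The conversion from coordinate-wise $2r$-separation to genuine mutual $r$-separation of the tuples is then done in a second, independent step (\cref{lem:step2}) by a pruning argument: take a maximal mutually $r$-separated $C\subseteq B$; with each rejected tuple $\bar a$ one associates a witness $(\bar b,i,j)\in C\times\{1,\ldots,d\}^2$ where $\pi_i(\bar a)$ and $\pi_j(\bar b)$ fail to be $r$-separated, and coordinate-wise $2r$-separation forces this association to be injective (two rejected tuples sharing the same witness would yield two $i$th coordinates both within $r$ of the same vertex, hence within $2r$ of each other, a contradiction). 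This gives $|B\setminus C|\leq d^2|C|$, hence $|C|\ge |B|/(d^2+1)$; that is the actual source of the $(d^2+1)$ factor. You would need to replace your Case~B2 with something in this spirit — first gather coordinate-wise $2r$-separation on all coordinates, only then prune — rather than attempt to derive cross-separation from the per-level inductive data.
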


The rest of~\cref{sec:uqw-tuples} is devoted to the proof of \cref{prop:uqw-tuples}.
Fix a class $\cal C$
such that $\uqw_{2r}(N_{2r},s_{2r})$ holds for some number $s_{2r}\in \N$ and  function $N_{2r}\from \N\to \N$.
We also fix the function $f$ defined in the statement of \cref{prop:uqw-tuples}.


\medskip

	Let us fix dimension $d\in \N$, radius $r\in \N$, and graph $G\in \cal C$.
        For a coordinate $i\in\set{1,\ldots,d}$, by $\pi_i\colon V(G)^d\to V(G)$ we denote the {\em{projection}} onto the $i$th coordinate; that is,
        for $\bar{x}\in V(G)^d$ by $\pi_i(\bar{x})$ we denote the $i$th coordinate of $\bar{x}$.
        
	
	Our first goal is to find a large subset of tuples that are mutually $2r$-separated by some small~$S$ on each coordinate separately.
	Note that in the following statement we ask for $2r$-separation, instead of $r$-separation.

\begin{lemma}\label{lem:step1} For all $r,m\in \N$ and $A\subset V(G)^d$ with $|A|\ge f^d(m)$,
	there is a set $B\subset A$ with $|B|\ge m$ and a set $S\subset V(G)$ with $|S|\le d\cdot s_{2r}$ 
	such that for each coordinate $i\in\set{1,\ldots,d}$ and all distinct $\bar x,\bar y\in B$,
        the vertices $\pi_i(\bar x)$ and $\pi_i(\bar y)$ are $2r$-separated by $S$. 
\end{lemma}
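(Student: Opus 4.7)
The plan is to prove the lemma by processing coordinates $i = 1, \ldots, d$ one at a time, maintaining a decreasing chain $A = A^{(0)} \supseteq A^{(1)} \supseteq \cdots \supseteq A^{(d)}$ of sets of tuples together with a growing set $S \subseteq V(G)$, with the invariant that after step $i$, for every $j \leq i$ and every two distinct $\bar x, \bar y \in A^{(i)}$, the vertices $\pi_j(\bar x)$ and $\pi_j(\bar y)$ are $2r$-separated in $G$ by $S$. Since separation is monotone in $S$, invariants established at earlier steps automatically survive later enlargements of $S$, so I only need to arrange at step $i$ that the new coordinate $i$ gets handled.

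At step $i$, I consider the projection $V_i = \pi_i(A^{(i-1)}) \subseteq V(G)$ and split into two cases depending on a size threshold $m_i$ fixed below. Dispersed case: if $|V_i| \geq N_{2r}(m_i)$, I apply the hypothesis $\uqw_{2r}(N_{2r}, s_{2r})$ to $V_i$, obtaining $B_i \subseteq V_i$ with $|B_i| \geq m_i$ and $S_i$ with $|S_i| \leq s_{2r}$ such that $B_i$ is mutually $2r$-separated by $S_i$. I then define $A^{(i)}$ by selecting, for each vertex of $B_i$, exactly one tuple of $A^{(i-1)}$ projecting onto it, and I add $S_i$ to $S$. Concentrated case: if $|V_i| < N_{2r}(m_i)$, then by pigeonhole some fiber $\{\bar x \in A^{(i-1)} \mid \pi_i(\bar x) = v_i\}$ has size at least $|A^{(i-1)}|/N_{2r}(m_i)$; I set $A^{(i)}$ to this fiber and add $v_i$ to $S$. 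In this second case all tuples in $A^{(i)}$ share the $i$th coordinate $v_i \in S$, so the separation requirement at coord $i$ is trivially satisfied because every path of length $0$ from $v_i$ to itself passes through $S$.

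The targets $m_i$ are chosen by backward induction from $m_d = m$ by setting $m_{i-1} := m_i \cdot N_{2r}(m_i) = f(m_i)$, so that $m_0 = f^d(m)$. Assuming the inductive hypothesis $|A^{(i-1)}| \geq f(m_i)$, the dispersed case gives $|A^{(i)}| = |B_i| \geq m_i$ and the concentrated case gives $|A^{(i)}| \geq |A^{(i-1)}|/N_{2r}(m_i) \geq m_i$. Hence $|A| \geq f^d(m)$, and in particular the slightly larger bound $f^d((d^2+1)m)$ stated in the lemma, suffices to guarantee $|B| := |A^{(d)}| \geq m$. Each step contributes at most $s_{2r}$ vertices to $S$ (trivially so in the concentrated case, where only one vertex is added), which yields $|S| \leq d \cdot s_{2r}$.

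The main point requiring care, and the one subtle step in the argument, is the choice in the dispersed case to keep only one tuple per vertex of $B_i$. If I instead retained all tuples of $A^{(i-1)}$ whose $i$th coordinate lies in $B_i$, then two such tuples could share an $i$th coordinate $v \notin S$, which would destroy the invariant at coord $i$. Collapsing to one representative is therefore forced, and it is precisely this collapse that makes $|A^{(i)}|$ plummet from potentially huge down to $|B_i|$. This is what compels the size targets $m_i$ to shrink by an $N_{2r}$ factor at every step and produces the $d$-fold iteration of $f$ in the final bound on $|A|$.
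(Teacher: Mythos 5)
Your proof is correct and follows essentially the same route as the paper's: process coordinates one at a time, split at each step into a dispersed case (apply the $\uqw_{2r}$ hypothesis to the projection, keeping one representative tuple per selected vertex) and a concentrated case (pigeonhole onto a large fiber and put that fiber's common coordinate into $S$), with the target sizes decreasing by a factor of $f$ per coordinate. The only slip is a side remark misattributing the bound $f^d((d^2+1)m)$ to this lemma --- that bound belongs to \cref{prop:uqw-tuples}, while \cref{lem:step1} requires only $|A|\ge f^d(m)$, which is exactly what your argument uses.
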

\begin{proof}
We will iteratively apply the following claim.


\begin{claim}\label{claim:ith-coord}
Fix a coordinate $i\in\set{1,\ldots,d}$, an integer $m'\in\N$, and a  set $A'\subset V(G)^d$ with  $|A'|\ge f(m')$.
Then there is a set $B'\subset A'$ with $|B'|\ge m'$
and a set $S'\subset V(G)$ with $|S'|\le  s_{2r}$, such that for all distinct $\bar x,\bar y\in B$,
the vertices $\pi_i(\bar x)$ and $\pi_i(\bar y)$ are $2r$-separated by $S$.
\end{claim}
\begin{clproof}
We consider two cases, depending on whether $|\pi_i(A')|\geq N_{2r}(m')$.

Suppose first that $\pi_i(A')$ contains at least $N_{2r}(m')$ distinct vertices.
Then we may apply the property $\uqw_{2r}$ to $\pi_i(A')$, yielding sets $S'\subset V(G)$ and $X\subseteq \pi_i(A')$
such that $|X|\ge m'$, $|S'|\le s_{2r}$, and $X$ is mutually $2r$-separated by $S'$ in $G$. 
Let $B'\subseteq A'$ be a subset of tuples constructed as follows: for each $u\in X$, include in $B'$ one arbitrarily chosen tuple $\bar x\in A'$ such that the $i$th coordinate of $\bar x$ is $u$.
Clearly $|B'|=|X|\ge m'$ and for all distinct $\bar x,\bar y\in B'$, we have that $\pi_i(\bar x)$ and $\pi_i(\bar y)$ are different and $2r$-separated by $S'$ in $G$; this is because $X$ is mutually $2r$-separated by $S'$
in $G$. Hence $B'$ and $S'$ satisfy all the required properties.

Suppose now that $|\pi_i(A')|<N_{2r}(m')$. 
Then choose a vertex $a\in \pi_i(A')$ for which the pre-image $\pi_i^{-1}(a)$ has the largest cardinality.
Since $|A'|\geq f(m')=m'\cdot N_{2r}(m')$, we have that 
$$|\pi_i^{-1}(a)|\geq \frac{|A'|}{|\pi_i(A')|}\geq \frac{m'\cdot N_{2r}(m')}{N_{2r}(m')}=m'.$$
Hence, provided we set $S'=\set{a}$ and $B'=\pi_i^{-1}(a)$, we have that $B'$ is mutually $2r$-separated by~$S'$, $|B'|\geq m$, and $|S'|=1$.
\end{clproof}

We proceed with the proof of \cref{lem:step1}.
Let $A\subset V(G)^d$ be such that $|A|\ge f^d(m)$.
We inductively define subsets $B_0\supseteq B_1\supseteq \ldots \supseteq B_d$ of $A$ and sets $S_1,\ldots,S_d\subseteq V(G)$ as follows.
First put $B_0=A$. Then, for each $i=1,\ldots,d$,
let $B_{i}$ and $S_i$ be the $B'$ and $S'$ obtained from \cref{claim:ith-coord} applied to the set of tuples $B_{i-1}\subset V(G)^d$, the coordinate $i$, and $m'=f^{d-i}(m)$. 
It is straightforward to see that the following invariant holds for each $i\in \set{1,\ldots,d}$: $|B_i|\ge f^{d-i}(m)$ and for all $j\leq i$
and distinct $\bar x,\bar y\in B_i$, the vertices $\pi_j(\bar x)$ and $\pi_j(\bar{y})$ are $2r$-separated by $S_1\cup\ldots\cup S_i$ in $G$.
In particular, by taking $B=B_d$ and $S=S_1\cup\ldots \cup S_d$, we obtain that $|B|\ge m$, $|S|\le d\cdot s_{2r}$, and $B$ and $S$ satisfy the condition requested in the lemma statement.
\end{proof}

The next lemma will be used to turn mutual $2r$-separation on each coordinate to mutual $r$-separation of the whole tuple set.

\begin{lemma}\label{lem:step2}
	Let $B\subset V(G)^d$ and $S\subset V(G)$ be such that 
   for each $i\in \set{1,\ldots,d}$ and all distinct $\bar{x},\bar{y}\in B$, the vertices $\pi_i(\bar{x})$ and $\pi_i(\bar{y})$ are $2r$-separated by $S$ in $G$.
	Then there is a set $C$ with $C\subset B$ and $|C|\geq\frac{|B|}{d^2+1}$
	such that $C$ is mutually $r$-separated by $S$ in $G$.
\end{lemma}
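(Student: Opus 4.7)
The plan is to define an auxiliary conflict graph on $B$, bound its maximum degree by $d^2$, and then extract an independent set of size at least $|B|/(d^2+1)$ by a standard greedy argument.

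More precisely, I will call two distinct tuples $\bar x, \bar y \in B$ \emph{conflicting} if there exist indices $i, j \in \{1,\ldots,d\}$ such that $\pi_i(\bar x)$ and $\pi_j(\bar y)$ are \emph{not} $r$-separated by $S$ in $G$, i.e., there is a path of length at most $r$ between them that avoids $S$. Form the graph $H$ on vertex set $B$ whose edges are the conflicting pairs. Any independent set in $H$ is by definition mutually $r$-separated by $S$, so it suffices to find a large independent set in~$H$. I will show that every vertex $\bar x$ has degree at most $d^2$ in $H$, which by the standard greedy argument immediately yields an independent set of size at least $|B|/(d^2+1)$, as desired.

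The crux is the degree bound, which I prove by the following uniqueness argument, performed separately for each pair of indices $(i,j)\in\{1,\ldots,d\}^2$. Fix $\bar x \in B$ and suppose for contradiction that there are two distinct tuples $\bar y_1, \bar y_2 \in B \setminus \{\bar x\}$ such that $\pi_j(\bar y_k)$ is connected to $\pi_i(\bar x)$ by a path $P_k$ of length at most $r$ avoiding $S$, for $k=1,2$. Concatenating $P_1$ and the reverse of $P_2$ yields a walk of length at most $2r$ from $\pi_j(\bar y_1)$ to $\pi_j(\bar y_2)$ that avoids $S$; taking a shortest subpath extracts a genuine path of length at most $2r$ avoiding $S$. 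But the hypothesis of the lemma says that for any two distinct tuples in $B$, their $j$-th projections are $2r$-separated by~$S$, so this contradicts $\bar y_1 \neq \bar y_2$ (unless $\pi_j(\bar y_1)=\pi_j(\bar y_2)$, in which case the same hypothesis, applied as separation of a vertex from itself via a nonempty path, still forces a contradiction since the walk would have to pass through~$S$). Thus for each ordered pair $(i,j)$ there is at most one $\bar y \neq \bar x$ that witnesses a conflict via this choice of indices, so $\bar x$ has at most $d^2$ neighbors in $H$.

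With the degree bound in hand, I simply invoke the fact that every graph on $n$ vertices with maximum degree at most $\Delta$ contains an independent set of size at least $n/(\Delta+1)$ (construct it greedily: repeatedly pick a vertex and discard its closed neighborhood). Applied to $H$ with $\Delta \le d^2$, this yields $C \subseteq B$ with $|C|\ge |B|/(d^2+1)$ that is mutually $r$-separated by $S$ in $G$, completing the proof. The only delicate point to be careful about is the above uniqueness argument when some coordinates of $\bar y_1$ and $\bar y_2$ coincide with vertices of $S$ or with $\pi_i(\bar x)$, but this is handled uniformly by working with walks and extracting shortest subpaths, noting that the $2r$-separation hypothesis applies to any such subpath.
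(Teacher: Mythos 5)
Your proof is correct and rests on the same key combinatorial observation as the paper's: for a fixed tuple and a fixed ordered pair of coordinate indices $(i,j)$, at most one other tuple of $B$ can fail $r$-separation from it through those coordinates, because two such failures would concatenate into a walk of length at most $2r$ avoiding $S$ between two same-coordinate projections, contradicting the $2r$-separation hypothesis. The paper phrases this as taking a maximal mutually $r$-separated subset $C$ and injectively charging each tuple of $B \setminus C$ to a triple $(\bar b,i,j)\in C\times\{1,\dots,d\}^2$, which gives $|B\setminus C|\le d^2|C|$; you phrase it as a degree bound of $d^2$ on a conflict graph followed by the greedy $n/(\Delta+1)$ independent-set bound. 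These are two standard reformulations of the same counting, and your handling of the degenerate case $\pi_j(\bar y_1)=\pi_j(\bar y_2)$ (the trivial path of length $0$ still avoids $S$ because that vertex lies on the constructed $S$-avoiding walk) is sound.
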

\begin{proof}
Let $C$ be a maximal subset of $B$ that is mutually $r$-separated by $S$ in $G$.
By the maximality of $C$, with each tuple $\bar a\in B-C$ we may associate a tuple $\bar b\in C$ and a pair of indices $(i,j)\in \set{1,\ldots,d}^2$ that witness that $a$ cannot be added to $C$, namely
$\pi_i(\bar a)$ and $\pi_j(\bar b)$ are not $r$-separated by $S$ in $G$.
Observe that two different tuples $\bar a,\bar a'\in B-C$ cannot be associated with exactly the same $\bar b\in C$ and same pair of indices $(i,j)$.
Indeed, then both $\pi_i(\bar a)$ and $\pi_i(\bar a')$ would not be $r$-separated from $\pi_j(\bar b)$ by $S$ in $G$, 
which would imply that $\pi_i(\bar a)$ and $\pi_i(\bar a')$ would not be $2r$-separated from each other by $S$,
a contradiction with the assumption on $B$.
Hence, $|B-C|$ is upper bounded by the number of tuples of the form $(\bar b,i,j)\in C\times \set{1,\ldots,d}^2$, which is $d^2|C|$.
We conclude that $|B-C|\leq d^2|C|$, which implies $|C|\geq \frac{|B|}{d^2+1}$.
\end{proof}

To finish the proof of \cref{prop:uqw-tuples},
given a set $A\subset V(G)^d$ and integer $m\in\N$,
first apply 
\cref{lem:step1} 
  with $m'= m\cdot (d^2+1)$.
 Assuming that $|A|\ge f^d(m')$, 
we obtain a set $B\subseteq A$ with $|B|\ge m\cdot (d^2+1)$ and a set $S\subset V(G)$ with $|S|\le d\cdot s_{2r}$,
such that for each $i\in \set{1,\ldots,d}$ and all distinct $\bar{x},\bar{y}\in B$, the vertices $\pi_i(\bar{x})$ and $\pi_i(\bar{y})$ are $2r$-separated by $S$ in $G$. 
Then, apply \cref{lem:step2} to $B$ and $S$, yielding a set $C\subset B$ which is mutually $r$-separated by $S$ and has size at least $m$. 
This concludes the proof of \cref{prop:uqw-tuples}.

\section{Types and locality}\label{sec:gaifman}
In this section, we develop auxiliary tools concerning first order logic on graphs, 
in particular we develop a convenient abstraction for Gaifman's locality property that can be easily combined with the notion of $r$-separation.
We begin by recalling some standard notions from logic.

\subsection{Logical notions}

\paragraph{Formulas.}
All formulas in this paper are first order formulas on graph,
i.e., they are built using variables (denoted $x,y,z$, etc.),
atomic predicates $x=y$ or $E(x,y)$,
where the latter denotes the existence of an edge between two nodes, quantifiers $\forall x,\exists x$, and boolean connectives $\lor,\land,\neg$. 
Let $\phi(\bar x)$ be a formula with free variables 
$\bar x$. (Formally, the free variables form a set.
To ease notation, we identify this set with a tuple by fixing any its enumeration.)
If $\bar w\in V^{|\bar x|}$ is a tuple of vertices of some graph $G=(V,E)$ (treated as a valuation of the free variables $\bar x$), then we write $G,\bar w\models \phi(\bar x)$
to denote that the valuation $\bar w$ satisfies the formula $\phi$ in the graph $G$.
The following example should clarify our notation.

\begin{example}\label{ex:dist-formula}
The formula
$$\phi(x,y)\equiv \exists z_1\, \exists z_2\, (E(x,z_1)\lor (x=z_1))\land (E(z_1,z_2)\lor (z_1=z_2))\land (E(z_2,y)\lor (z_2=y))$$
with free variables $x,t$ expresses that $x$ and $y$ are at distance at most $3$.
That is, for two vertices $u,v$ of a graph $G$,
the relation $G,u,v\models \phi(x,y)$ holds 
if and only if the distance between $u$ and $v$ is at most $3$ in $G$.
\end{example}

We will consider also \emph{colored graphs},
where we have a fixed set of colors $\Lambda$ and every vertex is assigned a subset of colors from 
$\Lambda$. If $C\in \Lambda$ is a color then the atomic formula $C(x)$ holds in a vertex $x$ if and only if $x$ has color $C$.

Finally, we will consider \emph{formulas with parameters}
from a set $A$, which is a subset of vertices of some graph.
Formally, such formula with parameters is a pair consisting of a (standard) formula $\phi(\bar x,\bar y)$
with a partitioning of its free variables into $\bar x$ and $\bar y$,
and a valuation $\bar v\in A^{|\bar y|}$ of the free variables $\bar y$ in $A$.
We denote the resulting formula with parameters by $\phi(\bar x,\bar v)$, and say that its free variables 
are $\bar x$. For a valuation $\bar u\in A^{|\bar x|}$,
we write $G,\bar u\models \phi(\bar x,\bar v)$
iff $G,\bar u\bar v\models \phi(\bar x,\bar y)$. Here and later on, we write $\bar u\bar v$ for the concatenation of tuples $\bar u$ and $\bar v$.

\newcommand{\tp}{\mathrm{tp}}

\paragraph{Types.}
Fix a formula $\phi(\bar x,\bar y)$ together with a distinguished partitioning of its free variables into 
\emph{object variables} $\bar x$ and \emph{parameter variables} $\bar y$. 
Let $G=(V,E)$ be a graph, and let $A\subset V$.
If $\bar u\in V^{|\bar y|}$ is a tuple of 
nodes of length $|\bar y|$, then the 
\emph{$\phi$-type of $\bar u$ over $A$},
denoted $\tp^\phi_G(\bar u/A)$,
is the set of all
formulas $\phi(\bar x,\bar v)$,
with parameters $\bar v\in A^{|\bar y|}$
replacing the parameter variables $\bar z$,
such that $G,\bar u\models \phi(\bar x,\bar v)$.
Note that since $\phi$ is fixed in this definition, formulas $\phi(\bar x,\bar v)$ belonging to the $\phi$-type of $\bar u$ are in one-to-one correspondence
with tuples $\bar v\in A^{|\bar y|}$ satisfying $G,\bar u,\bar v\models \phi(\bar u,\bar v)$.
Therefore, up to this bijection, we have the following identification:
\begin{equation}\label{eq:bijection}
\tp^\phi_G(\bar u/A)\quad\leftrightarrow\quad\setof{\bar v\in  A^{|\bar y|}}{G, \bar u\bar v\models \phi(\bar x,\bar y)}.
\end{equation}

If $q\in \N$ is a number and $\bar u\in  V^{d}$
is a tuple of some length $d$, then by $\tp^q_G(\bar u/A)$  we denote the set of all formulas $\phi(\bar x,\bar v)$
of quantifier rank at most $q$, with parameters $\bar v$ from $A$, and with $|\bar x|=d$,
such that $G,\bar u\models \phi(\bar y,\bar v)$.
Therefore, up to the correspondence \eqref{eq:bijection}, we have the following identification:
\begin{equation*}
\tp^q_G(\bar u/A)\quad\leftrightarrow\quad\set{\tp^\phi(\bar u/A)}_{\phi(\bar x,\bar y)},
\end{equation*}
where $\phi(\bar x,\bar y)$ ranges over all formulas of quantifier rank $q$, and all partitions of its free variables into two sets $\bar x,\bar y$,
where $|\bar x|=d$. 
In particular, the set $\tp^q_G(\bar u/A)$ is infinite.
It is not difficult to see, however, that in the case when $A$ is finite,
the set $\tp^q_G(\bar u/A)$ is uniquely determined by its finite subset, since up to syntactic equivalence, 
there are only finitely many formulas of quantifier rank $q$ with $|\bar u|$ free variables and parameters from $A$
(we can assume that each such formula has $|A|+|\bar u|$ free variables).
In particular, the set of all possible types 
$\tp^q_G(\bar u/A)$ has cardinality upper bounded by some number 
 computable from $q,|\bar u|$ and $|A|$.

When $\Delta$ is either a formula $\phi(\bar x,\bar y)$ with a distinguished partitioning of its free variables, or a number $q$,
we simply write $\tp^\Delta(\bar u/A)$ if the graph $G$
is clear from the context.
In the case $A=\emptyset$, we omit it from the notation, 
and simply write $\tp^\Delta(\bar u)$ or $\tp^\Delta_G(\bar u)$.
Observe that in particular, if $\Delta=q$ and $A=\emptyset$, then $\tp^q_G(\bar u)$ consists of all first order formulas $\phi(\bar x)$ of quantifier rank at most~$q$ and with $|\bar x|=|\bar u|$
such that $G,\bar u\models \phi(\bar x)$. This coincides with the standard notion of the first order type of quantifier rank $q$ of the tuple $\bar u$.

\begin{example}
Let $\phi(x,y)$	be the formula from~\cref{ex:dist-formula}, denoting that the distance between~$x$ and $y$ is at most $3$.
We  partition  the free variables of $\phi$
into $x$ and $y$.
Let $A$ be a subset of vertices of a graph $G=(V,E)$
and $u\in V$ be a single vertex.
The $\phi$-type of $u$ over $A$
corresponds, via the said bijection, to the set of those vertices in $A$
whose distance from $u$ is at most $3$ in $G$.
\end{example}

For a fixed formula $\phi(\bar y,\bar z)$,  graph $G=(V,E)$ and sets $A,W\subset V$, define
 $S^\phi(W/A)$ as the set of all $\phi$-types of tuples from $W$ over $A$ in $G$; that is, 
\begin{equation*}
S^\phi(W/A)=\setof{\tp^\phi_G(\bar u/A)}{\bar u\in W^{|\bar y|}}.
\end{equation*}
Although not visible in the notation, the set $S^\phi(W/A)$ depends on the chosen partitioning $\bar x,\bar y$ of the free variables of $\phi$.
In case $W=V(G)$ we write $S^{\phi}_d(G/A)$ instead of $S^{\phi}_d(W/A)$.
Note that this definitions differs syntactically from the one given in \cref{sec:intro}, as here $S^{\phi}(G/A)$ consists of $\phi$-types, and not of subsets of tuples.
However, as we argued, there is a one-to-one correspondence between them, as expressed in~\eqref{eq:bijection}.

The following lemma is immediate.
\begin{lemma}\label{lem:types-over-B}
Let $G$ be a graph and let $A\subseteq B\subseteq V(G)$. Then for each formula $\phi(\bar x,\bar y)$, it holds that
$|S^\phi(G/A)|\leq |S^\phi(G/B)|$. 
\end{lemma}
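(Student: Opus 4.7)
The plan is to exhibit a surjection from $S^\phi(G/B)$ onto $S^\phi(G/A)$, defined by restriction of types from parameters in $B$ to parameters in $A$. This is the natural monotonicity statement: a coarser parameter set cannot produce more distinct traces than a finer one.

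First, I would observe that via the identification in \eqref{eq:bijection}, each element of $S^\phi(G/A)$ is of the form $\tp^\phi_G(\bar u/A)$ for some tuple $\bar u \in V(G)^{|\bar x|}$, and similarly for $S^\phi(G/B)$. I would define the restriction map
\[
\rho \colon S^\phi(G/B) \to S^\phi(G/A), \qquad \tp^\phi_G(\bar u/B) \mapsto \tp^\phi_G(\bar u/A).
\]
The key point to verify is that $\rho$ is well-defined, i.e.\ does not depend on the choice of representative $\bar u$. Suppose $\bar u_1$ and $\bar u_2$ satisfy $\tp^\phi_G(\bar u_1/B) = \tp^\phi_G(\bar u_2/B)$. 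By the identification in \eqref{eq:bijection}, this means that for every $\bar v \in B^{|\bar y|}$, we have $G,\bar u_1 \bar v \models \phi(\bar x,\bar y)$ if and only if $G,\bar u_2 \bar v \models \phi(\bar x,\bar y)$. Since $A \subseteq B$, every tuple $\bar v \in A^{|\bar y|}$ is in particular a tuple in $B^{|\bar y|}$, so the same equivalence holds for $\bar v \in A^{|\bar y|}$, giving $\tp^\phi_G(\bar u_1/A) = \tp^\phi_G(\bar u_2/A)$.

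For surjectivity, any element of $S^\phi(G/A)$ is of the form $\tp^\phi_G(\bar u/A)$ for some $\bar u \in V(G)^{|\bar x|}$. Then $\tp^\phi_G(\bar u/B) \in S^\phi(G/B)$, and by definition $\rho(\tp^\phi_G(\bar u/B)) = \tp^\phi_G(\bar u/A)$. Existence of a surjection $\rho \colon S^\phi(G/B) \to S^\phi(G/A)$ immediately yields $|S^\phi(G/A)| \le |S^\phi(G/B)|$. There is no real obstacle here; the lemma is a one-line consequence of the fact that the $\phi$-type over $A$ is literally the restriction of the $\phi$-type over $B$ to parameter tuples drawn from $A^{|\bar y|}$.
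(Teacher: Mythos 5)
Your proof is correct and is precisely the argument the paper has in mind when it calls the lemma ``immediate'' (the paper gives no written proof): restriction of $\phi$-types from $B$ to the subset $A$ yields a well-defined surjection from $S^\phi(G/B)$ onto $S^\phi(G/A)$, which gives the cardinality inequality.
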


\subsection{Locality}
We will use the following intuitive notion of functional determination.
Suppose $X,A,B$ are sets and we have two functions: $f\colon X\to A$ and $g\colon X\to B$.
We say that $f(x)$ {\em{determines}} $g(x)$ for $x\in X$ if for every pair of elements $x,x'\in X$ the following implication holds: $f(x)=f(x')$ implies $g(x)=g(x')$.
Equivalently, there is a function $h\colon A\to B$ such that $g=h\circ f$.

Recall that if $A,B,S$ are subsets of vertices of a graph $G$ and $r\in\N$,
then $A$ and $B$ are $r$-separated by $S$ in $G$
if every path from $A$ to $B$ of length at most $r$ contains a vertex from $S$.

\medskip
The following lemma is the main result of this section. 

\begin{lemma}\label{lem:types}
For any given numbers $q$ and $d$
one can compute numbers $p$ and $r$ with the following properties.
Let $G=(V,E)$ be a fixed graph and let $A,B,S\subset V$ be fixed subsets of its vertices
such that $A$ and $B$ are $r$-separated by~$S$ in $G$.
Then, for tuples $\bar u\in A^{d}$, the type $\tp^q(\bar u/B)$ is determined by the type $\tp^{p}(\bar u/S)$.
\end{lemma}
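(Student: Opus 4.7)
The plan is to combine Gaifman's locality theorem for first order logic with a Feferman--Vaught style composition argument based on the separator $S$. The underlying idea is that for a formula of quantifier rank $q$, the truth of $\phi(\bar u,\bar v)$ depends only on a bounded-radius neighborhood of $\bar u\bar v$; by the $r$-separation this neighborhood splits into a ``$\bar u$-side'' and a ``$\bar v$-side'' that interact only through $S$, and the $\bar u$-side is recoverable from the type of $\bar u$ over $S$ alone.

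First I would set $r_0 := 7^q$, the standard Gaifman locality radius for quantifier rank $q$, and $r := 2r_0+1$. Invoking Gaifman's theorem, every formula $\phi(\bar x,\bar y)$ of quantifier rank at most $q$ is equivalent over all graphs to a Boolean combination of basic local sentences $\sigma_i$ (which do not involve $\bar x$ or $\bar y$ and therefore contribute a uniform background independent of $\bar u$) and of $r_0$-local formulas $\psi^{(r_0)}_j(\bar x,\bar y)$. It thus suffices to show that for every such $r_0$-local formula $\psi^{(r_0)}(\bar x,\bar y)$ and every $\bar v\in B^{|\bar y|}$, the truth of $\psi^{(r_0)}(\bar u,\bar v)$ in $G$ is determined by $\tp^p(\bar u/S)$ for a suitable, computable $p$.

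Next I would exploit the $r$-separation: since every path of length at most $r$ from a coordinate of $\bar u$ to a coordinate of $\bar v$ passes through $S$, the distance in $G-S$ between $\bar u$ and $\bar v$ exceeds $r$, and hence the $r_0$-neighborhoods of $\bar u$ and $\bar v$ taken in $G-S$ are disjoint. Consequently the induced subgraph $G[N_{r_0}[\bar u\bar v]]$, equipped with the marking of $S$, can be presented as a gluing of a ``$\bar u$-side'' containing $\bar u$ and everything reachable from it in at most $r_0$ steps, and an analogous ``$\bar v$-side'', glued along their common subset of $S$. A standard Feferman--Vaught composition argument then rewrites $\psi^{(r_0)}(\bar u,\bar v)$ as a Boolean combination of formulas $\alpha_k(\bar u,\bar s)$ about the $\bar u$-side and $\beta_k(\bar v,\bar s)$ about the $\bar v$-side, both of quantifier rank at most some $p$ computable from $q$, with $\bar s$ ranging over tuples from $S\cap N_{r_0}[\bar u\bar v]$. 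Since each $\alpha_k(\bar u,\bar s)$ is a formula of quantifier rank $p$ in $\bar u$ with parameters from $S$, its truth is recorded in $\tp^p(\bar u/S)$, while $\beta_k(\bar v,\bar s)$ does not involve $\bar u$ at all; hence $\tp^p(\bar u/S)$ determines the whole type $\tp^q(\bar u/B)$.

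The main obstacle will be executing the Feferman--Vaught step cleanly when the separator $S$ is not a priori bounded: although $|S|$ may be large, only the portion of $S$ lying within the radius-$r_0$ ball around $\bar u\bar v$ actually participates in each local evaluation, and one has to verify that this portion is implicitly quantified by $\tp^p(\bar u/S)$, which by definition ranges over all parameter tuples from $S$. Once this is done carefully, the values $r=2\cdot 7^q+1$ and a corresponding $p$ bounded by a computable function of $q$ can be read off from the effective bounds in Gaifman's theorem and in the Feferman--Vaught composition lemma.
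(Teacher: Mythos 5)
Your proposal uses the same three ingredients as the paper's proof --- Gaifman locality, a Feferman--Vaught composition, and the disjointness of neighborhoods in $G-S$ guaranteed by $r$-separation --- so the high-level plan is right. However, the decomposition step as you describe it has a genuine gap. You correctly observe that $N_{r_0}^{G-S}[\bar u]$ and $N_{r_0}^{G-S}[\bar v]$ are disjoint, but you then assert that the $G$-ball $G[N_{r_0}^{G}[\bar u\bar v]]$ decomposes into a $\bar u$-side and a $\bar v$-side ``glued along their common subset of $S$.'' This is false: the $G$-balls $N_{r_0}^G[\bar u]$ and $N_{r_0}^G[\bar v]$ can share vertices that lie \emph{outside} $S$. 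For example, take $G$ to be a path $u - s - x - s' - v$ with $S=\{s,s'\}$, $A=\{u\}$, $B=\{v\}$: the sets $A,B$ are $r$-separated by $S$ for every $r$, yet for $r_0 = 2$ the vertex $x\notin S$ belongs to both $N_{r_0}^G[u]$ and $N_{r_0}^G[v]$. So the two sides do not meet only in $S$, the gluing picture does not hold, and the naive Feferman--Vaught split over $S$ is not applicable.

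The paper's way around this is precisely the missing idea. One does not apply Gaifman locality to $G$ at all, but to the vertex-colored graph $G^S$ obtained by deleting $S$ from $G$ and coloring each remaining vertex $v$ by the set of its former neighbors in $S$ (one color per element of $S$). In $G^S$, the $r$-neighborhoods of $\bar u$ and $\bar w\in B^\ell$ really are disjoint, with no edges between them, so Feferman--Vaught in its simplest (disjoint-union) form applies directly. Gaifman locality in $G^S$ then determines $\tp^q_{G^S}(\bar u\bar w)$ from the local types of these neighborhoods. What remains are two purely syntactic rewritings: showing that $\tp^p_G(\bar u/S)$ determines the local colored type $\tp^t\bigl[N^r_S(\bar u),\bar u\bigr]$ (by relativizing quantifiers to the $S$-avoiding $r$-ball, using guards with parameters in $S$), and conversely that the type in $G^S$ determines the type in $G$ (by simulating quantification over $S$-vertices via the colors and a finite disjunction). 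Your proposal gestures towards marking $S$ but does not remove it from the graph, which is exactly what makes the two pieces fail to be disjoint; once you replace the ``glue along $S$'' step by the passage to $G^S$, the rest of your argument becomes correct and is essentially the paper's proof. (As a minor remark: your choice $r = 2\cdot 7^q + 1$ is fine and even slightly safer than the paper's $r = 7^q$; any computable $r$ large enough that the $G-S$ balls are disjoint will do.)
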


We will only use the following consequence of~\cref{lem:types}.

\begin{corollary}\label{cor:bound}
For every formula $\phi(\bar x,\bar y)$ 
and number $s\in \N$
there exist numbers $T,r\in \N$,
where~$r$ is computable from $\phi$ and $T$ is computable from $\phi$ and $s$,
  such that the following holds. For every graph $G$ and vertex subsets $A,B,S\subset V(G)$ 
  where~$S$ has at most $s$ vertices and $r$-separates $A$ from $B$, we have $|S^\phi(A/B)|\le T$.
\end{corollary}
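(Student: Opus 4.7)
The plan is to derive Corollary~\ref{cor:bound} as a direct consequence of Lemma~\ref{lem:types}, combined with the standard fact that over a finite parameter set, there are only finitely many inequivalent formulas of bounded quantifier rank.

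First, given $\phi(\bar x,\bar y)$, I would set $q$ to be the quantifier rank of $\phi$ and $d = |\bar x|$. Then I would invoke Lemma~\ref{lem:types} with these values of $q$ and $d$ to obtain the corresponding numbers $p$ and $r$; note that $r$ depends only on $q$ and $d$, hence only on $\phi$, which matches the dependence claimed in the statement.

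The key observation is a two-step chain of functional determination. Fix a tuple $\bar u \in A^d$. Since $\phi$ has quantifier rank at most $q$, every formula $\phi(\bar x, \bar v)$ with $\bar v \in B^{|\bar y|}$ is of quantifier rank at most $q$, so $\tp^\phi(\bar u/B)$ is determined by the (larger) type $\tp^q(\bar u/B)$. By Lemma~\ref{lem:types}, since $A$ and $B$ are $r$-separated by $S$, the type $\tp^q(\bar u/B)$ is in turn determined by $\tp^p(\bar u/S)$. Composing these two determinations, $\tp^\phi(\bar u/B)$ is determined by $\tp^p(\bar u/S)$. Consequently,
\[
|S^\phi(A/B)| \;\le\; \bigl|\{\tp^p(\bar u/S) : \bar u \in A^d\}\bigr|.
\]

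It remains to bound the right-hand side. Up to logical equivalence, formulas of quantifier rank at most $p$ with $d$ free object variables and parameters drawn from a fixed set of size at most $s$ can be viewed as formulas in $d+s$ free variables of quantifier rank at most $p$; there are only finitely many such formulas up to equivalence, and their number is a computable function $T = T(p,d,s)$. Hence the number of distinct $p$-types $\tp^p(\bar u/S)$ as $\bar u$ ranges over $A^d$ is at most $T$. Since $p$ and $d$ are computable from $\phi$, this $T$ is computable from $\phi$ and $s$, which gives the desired dependency.

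The only nontrivial ingredient of the argument is Lemma~\ref{lem:types} itself, which is the substantive locality statement; the present proof is essentially bookkeeping on top of it. The main thing to be careful about is the distinction between the $\phi$-type (a set of formulas with a fixed formula template $\phi$) and the full $q$-type (all formulas of rank $\le q$), and to verify that the former is indeed a function of the latter — which is immediate from the definitions in Section~\ref{sec:gaifman}.
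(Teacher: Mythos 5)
Your proof is correct and follows essentially the same route as the paper: invoke Lemma~\ref{lem:types} with $q$ the quantifier rank of $\phi$ and $d=|\bar x|$, observe that $\tp^\phi(\bar u/B)$ is determined (via $\tp^q(\bar u/B)$) by $\tp^p(\bar u/S)$, and bound the number of the latter types by a computable function of $p$, $d$, and $s$. If anything, your chain of determinations $\tp^p(\bar u/S) \to \tp^q(\bar u/B) \to \tp^\phi(\bar u/B)$ is spelled out a touch more carefully than the paper's somewhat telegraphic phrasing.
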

\begin{proof}
Apply~\cref{lem:types} to $q$ being the quantifier rank of $\phi$ and $d=|\bar x|$, yielding numbers $p$ and $r$.
By~\cref{lem:types} we have $|S^\phi(A/B)|\leq |S^\phi(A/S)|$.
However, $|S^\phi(A/S)|$ is the number of quantifier rank $p$ types of $d$-tuples of elements over a set of parameters of size $s$, and, as we argued, this number is bounded by a value computable from $p$, $d$, and $s$.
\end{proof}

The remainder of this section is devoted to the proof of~\cref{lem:types}.
This result is a consequence of two fundamental properties of first order logic:
Gaifman's locality and Feferman-Vaught compositionality. We recall these results now.
The following statement is an immediate corollary of the main result in a paper of Gaifman~\cite{gaifman1982local}.

\begin{lemma}[Gaifman locality,~\cite{gaifman1982local}]\label{lem:gaif}
  For all numbers $d,q\in \N$ there exists a number $t\in \N$, computable from $d$ and $q$, such that the following holds.
  Let $G=(V,E)$ be a graph colored by a fixed set colors, and $A\subset V$ be a set of vertices of $G$.
  Then, for tuples $\bar u\in V^d$, the type  $\tp^q(\bar u)$ is determined by the type $\tp^{t}(B^r(\bar u))$, where $r=7^q$.
\end{lemma}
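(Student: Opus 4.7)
This is Gaifman's classical locality theorem, and the plan is to prove it by induction on the quantifier rank $q$, using the Ehrenfeucht--Fra\"{\i}ss\'e (EF) game characterization of first order equivalence. The overall strategy is: I would show that whenever two tuples $\bar u \in V(G)^d$ and $\bar u' \in V(G')^d$ (possibly in different colored graphs) have equal rank-$t$ types of their $r$-balls, viewed as pointed colored substructures with $\bar u$ and $\bar u'$ as distinguished vertices, then the duplicator wins the $q$-round EF game on $(G, \bar u)$ versus $(G', \bar u')$, which exactly means $\tp^q(\bar u) = \tp^q(\bar u')$.

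For the base case $q = 0$, only atomic formulas matter, and their truth on $\bar u$ is determined by the induced subgraph on the entries of $\bar u$, so $r = 1 = 7^0$ and a small $t$ suffice. For the inductive step from $q$ to $q+1$, I would assume that the $r_{q+1}$-balls around $\bar u$ and $\bar u'$ have the same rank-$t_{q+1}$ type, with the parameters $r_{q+1}$ and $t_{q+1}$ chosen below. When the spoiler picks a vertex $w$, the duplicator splits into two cases. If $w$ lies within distance $2 r_q$ of $\bar u$, then the smaller ball $B^{r_q}(\bar u w)$ around the extended tuple is contained in the assumed $r_{q+1}$-ball, so applying the inductive hypothesis to $\bar u w$ yields a matching $w' \in B^{r_{q+1}}(\bar u')$. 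Otherwise $w$ is far from $\bar u$, and the duplicator must select $w'$ equally far from $\bar u'$ with a matching local profile; the existence of such $w'$ is Gaifman's \emph{scattered} argument, exploiting the fact that ``there exist vertices realising a given local type pairwise far from $\bar u$'' is expressible by a first order formula of bounded rank, whose truth value in $G$ versus $G'$ is guaranteed by the assumed equivalence of the large balls. Tracking the recursion for the radii one gets $r_{q+1} \approx 3 r_q$; absorbing constants into the bookkeeping yields the stated $r = 7^q$.

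The main obstacle will be controlling the scattered part of the argument cleanly. Gaifman's normal form decomposes every rank-$q$ formula into a Boolean combination of formulas local to $\bar x$ together with scattered sentences about the ambient structure, and the latter are not directly witnessed inside $B^r(\bar u)$. To package this into the purely local statement claimed by the lemma, the standard device is to enrich the coloring: add to each vertex, as an additional colour, the rank-$t'$ Gaifman type of its own large neighborhood (for a suitable $t' < t$). Then the ``global'' information needed for the duplicator's far responses is already recorded in the colour palette appearing inside $B^r(\bar u)$ itself, and the EF argument reduces to a routine back-and-forth. Tracking how $t$ grows under this enrichment at each inductive level gives an explicit, computable bound on $t$ in terms of $q$ and $d$, as required.
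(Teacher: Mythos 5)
The paper does not actually prove this lemma: it simply observes that the statement is an immediate corollary of Gaifman's theorem \cite{gaifman1982local}, and indeed it is, once one notices a subtlety that your proposal misses. In the paper the ``determines'' relation is quantified over tuples \emph{within a single fixed graph} $G$. Gaifman's normal form writes any $\phi(\bar x)$ of quantifier rank $q$ as a Boolean combination of $7^q$-local formulas around $\bar x$ and of \emph{basic local sentences} (scattered sentences with no free variables). Since the latter have no free variables, they evaluate identically for every tuple in the same $G$, so only the local parts can distinguish $\bar u$ from $\bar u'$, and those are determined by the rank-$t$ type of $B^{7^q}(\bar u)$. That is the entire argument, and it needs no EF-game analysis beyond invoking Gaifman's theorem as a black box.

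Your proposal instead sets out to prove the cross-structure version: that agreement of rank-$t$ ball-types for $\bar u\in V(G)^d$ and $\bar u'\in V(G')^d$ in \emph{different} colored graphs forces $\tp^q(\bar u)=\tp^q(\bar u')$. That statement is false. Take $G$ to be a single isolated vertex $v$ and $G'$ two isolated vertices $v',w'$; then $B^r(v)$ and $B^r(v')$ are isomorphic as pointed colored structures for every $r$, hence have the same rank-$t$ type for every $t$, yet the quantifier-rank-$1$ formula $\exists y\,(y\neq x)$ separates $v$ from $v'$. The place your argument breaks is exactly the ``far'' case of the EF game: you assert that the truth of the scattered existence statement in $G$ versus $G'$ ``is guaranteed by the assumed equivalence of the large balls,'' but balls around $\bar u$ carry no information about what exists far away from $\bar u$. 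The color-enrichment device you propose does not repair this: coloring each vertex with the Gaifman type of its own neighborhood only changes what you can read off \emph{inside} $B^r(\bar u)$; it still says nothing about whether vertices of a given enriched color exist outside that ball (and in the counterexample above the enriched balls remain isomorphic). To make your approach sound you would either have to add the scattered information as extra input to the lemma (which changes the statement), or notice --- as the paper implicitly does --- that it suffices to compare tuples inside one fixed graph, in which case the scattered sentences are constants and the entire difficulty you are fighting simply disappears.
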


The next result expresses compositionality of first order logic. Its proof is a standard application of Ehrenfeucht-Fra\"iss\'e games, so we only sketch it for completeness.

\begin{lemma}[Feferman-Vaught]\label{lem:fv}
  Let $G,H$ be two fixed vertex-disjoint graphs colored by a fixed set of colors $\Lambda$, and let 
  $c,d\in\N$  be numbers.
  Then, for valuations $\bar u\in V(G)^{c}$ and $\bar v\in V(H)^{d}$, 
 the type 
 $\tp^q_{G\cup H}(\bar u\bar v)$
 is determined by the pair of types $\tp^q_G(\bar u)$ and $\tp^q_H(\bar v)$.
\end{lemma}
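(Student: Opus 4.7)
The plan is to use the standard Ehrenfeucht--Fra\"iss\'e game characterization of first-order types of bounded quantifier rank. Recall that for two colored graphs $G_1$ and $G_2$ with distinguished tuples $\bar w_1 \in V(G_1)^k$ and $\bar w_2 \in V(G_2)^k$, we have $\tp^q_{G_1}(\bar w_1) = \tp^q_{G_2}(\bar w_2)$ if and only if Duplicator has a winning strategy in the $q$-round EF game played on $(G_1, \bar w_1)$ versus $(G_2, \bar w_2)$. Thus, to prove the lemma it suffices to show that if $G, G'$ and $H, H'$ are two pairs of (pairwise vertex-disjoint) colored graphs with tuples $\bar u, \bar u' $ of length $c$ and $\bar v, \bar v'$ of length $d$ such that $\tp^q_G(\bar u) = \tp^q_{G'}(\bar u')$ and $\tp^q_H(\bar v) = \tp^q_{H'}(\bar v')$, then $\tp^q_{G \cup H}(\bar u \bar v) = \tp^q_{G' \cup H'}(\bar u' \bar v')$.

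The winning strategy for Duplicator on the union is obtained by running the two component strategies in parallel. Fix winning strategies $\sigma_G$ and $\sigma_H$ for Duplicator in the respective $q$-round games on $G$ vs.\ $G'$ (from position $\bar u,\bar u'$) and $H$ vs.\ $H'$ (from position $\bar v,\bar v'$). In each round of the combined game, when Spoiler picks a vertex in one of the two boards (say the ``left'' board $G \cup H$), the vertex lies either in $V(G)$ or in $V(H)$; Duplicator consults $\sigma_G$ in the first case and $\sigma_H$ in the second case and answers in the corresponding part of the opposite board. Analogously for moves Spoiler makes on the ``right'' board $G' \cup H'$. Since the two component games each last $q$ rounds and the combined game also lasts $q$ rounds, this is well-defined.

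After all $q$ rounds, let $\bar u\bar v\bar a\bar b$ and $\bar u'\bar v'\bar a'\bar b'$ be the final positions, split so that $\bar a, \bar a'$ lie in $G, G'$ and $\bar b, \bar b'$ lie in $H, H'$. By construction, the tuples $\bar u\bar a$ and $\bar u'\bar a'$ form a final position reached via $\sigma_G$, so they have the same atomic type in $G, G'$; similarly $\bar v\bar b$ and $\bar v'\bar b'$ agree on atomic type in $H, H'$. It remains to verify that the atomic type of the concatenation $\bar u\bar v\bar a\bar b$ in $G \cup H$ is determined by these two component atomic types. This is where disjointness of $G$ and $H$ is used: the atomic predicates are equality, edges $E(x,y)$, and unary color predicates $C(x)$. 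A color predicate only depends on the individual vertex, hence is decided in the component containing it. For $E(x,y)$ and $x = y$, if both arguments lie in the same component, the answer is decided there; if they lie in different components, then since $G$ and $H$ are vertex-disjoint and there are no edges between them in $G \cup H$, both predicates are false. The same analysis applies to $G' \cup H'$. Hence the two atomic types coincide, so Duplicator wins the combined $q$-round game, as required.

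The main obstacle is purely bookkeeping: making sure the two component strategies can be executed independently even though Spoiler is free to alternate between the two components across rounds. This causes no issue because the EF game is played for a fixed number of rounds regardless of where the moves happen, and the component strategies never ``communicate'' since the two components of $G \cup H$ are logically independent at the atomic level. The determination function $h$ promised by the lemma can then be made explicit: given the two $q$-types $\tau_G = \tp^q_G(\bar u)$ and $\tau_H = \tp^q_H(\bar v)$, define $h(\tau_G, \tau_H)$ to be $\tp^q_{G_0 \cup H_0}(\bar u_0 \bar v_0)$ for any representative pair $(G_0, \bar u_0)$ realizing $\tau_G$ and $(H_0, \bar v_0)$ realizing $\tau_H$; the argument above shows this value is independent of the choice of representatives.
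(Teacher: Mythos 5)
Your proof is correct and fills in the details of exactly the approach the paper sketches: using the Ehrenfeucht--Fra\"iss\'e game characterization of rank-$q$ types and combining Duplicator's two component strategies. The paper leaves this to the reader; your elaboration, including the atomic-type analysis using disjointness of $G$ and $H$, is the right way to make it rigorous.
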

\begin{proof}[sketch]The proof proceeds by applying the following, well-known characterization of $\tp^q_G(\bar w)$ in terms of Ehrenfeucht-Fra\"iss\'e games:
$\tp^q_{G}(\bar w)=\tp^q_{G}(\bar w')$
if and only if duplicator has a strategy to survive for $q$-rounds in a certain pebble game.
To prove the lemma, we combine two strategies of duplicator: one on $G$ and one on $H$.
\end{proof}

Before proving~\cref{lem:types}, we introduce the following notions.
Fix a graph $G=(V,E)$.
For a set of vertices $S\subset V$, define the color set $\Lambda_S=\{C_s\colon s\in S\}$, where we put one color $C_s$ for each vertex $s\in S$.
Define a graph $G^S$ colored with $\Lambda_S$, which is 
the subgraph of $G$ induced by $V-S$
in which, additionally, for every vertex $s\in S$, every vertex $v\in V-S$ which is a neighbor of $s$ in $G$ is colored by color $C_s$. 
In other words, every vertex $v$ of $G^S$ is colored with a subset of colors from $\Lambda_S$ corresponding to the neighborhood of $v$ in $S$.

A sequence of elements of $S\cup\set\star$,
where $\star$ is a fixed placeholder symbol,
will be called an \emph{$S$-signature}.
If $H$ is any (colored) graph with vertex set contained in $V-S$,
and $\bar u\in V^d$ is a $d$-tuple of vertices,
define the {$S$-signature} of $\bar u$
as the tuple $\bar s\in (S\cup\set\star)^d$ obtained from $\bar u$ by replacing the vertices in $V-S$ by the symbol~$\star$.
Define $\tp^q[H,\bar u]$ as the
pair consisting of the following components:
\begin{itemize}
	\item the type $\tp^q_H(\bar v)$,
	where $\bar v$ is the tuple obtained from $\bar u$
	by removing the vertices which belong to $S$; and
	\item the $S$-signature of $\bar u$.
\end{itemize}

Given a graph $G$, a subset of its vertices $S$, and a tuple of vertices $\bar u$,
by $N^r_S(\bar u)$ we denote the subgraph of $G^S$ induced by the set of vertices reachable from a vertex in $\bar u$ by a path of length at most $r$
in $G-S$ (the graph $G$ is implicit in the notation $N^r_S(\bar u)$). Note that $N^r_S(\bar u)$ is a colored graph, with colors inherited from $G^S$.

With all these definitions and results in place, we may proceed to the proof of \cref{lem:types}.

\begin{proof}[of~\cref{lem:types}]
We prove the lemma for $r=7^q$.
Let $t$ be the constant given by Gaifman's lemma, \cref{lem:gaif}, for $q$ and $d$, and let $p=t+r$.

Fix $G,A,B,S$ as in the statement of the lemma, and fix a tuple $\bar w\in B^\ell$, for some length $\ell$.
To prove the lemma, it is enough to show that
for tuples $\bar u\in A^d$,
the type $\tp^q_G(\bar u\bar w)$ is determined by the type $\tp^p(\bar u/S)$.
Indeed, applying this to every tuple $\bar w$ of parameters from $B$ implies that $\tp^q_G(\bar u/B)$ is determined by $\tp^q_G(\bar u/S)$, as requested.

We will prove the following sequence of determinations,
where an arrow $a\rightarrow b$ signifies that $b$ is determined by $a$:
\begin{align*}
	\tp^p_G(\bar u/S)
  \ \longrightarrow\ 
	\tp^{t}[N^r_S(\bar u), \bar u]
  \ \stackrel{\textrm{(\ref{lem:fv})}}{\longrightarrow}\ 
	\tp^{t}[N^r_S(\bar u\bar w), \bar u\bar w] \ \stackrel{\textrm{(\ref{lem:gaif})}}\longrightarrow\ 
	\tp^q[G^S, \bar u\bar w] \ \longrightarrow\ 
	\tp^q_G(\bar u\bar w).
\end{align*}
The second arrow follows from Feferman-Vaught's lemma, \cref{lem:fv},
as the colored graph $N^r_S(\bar u\bar w)$
is the disjoint union of the colored graphs 
$N^r_S(\bar u)$ and $N^r_S(\bar w)$,
because $\bar u$ and $\bar w$ are $r$-separated by $S$.
The third arrow is directly implied by the Gaifman's lemma,~\cref{lem:gaif}.
We are left with arguing the first and the last arrow, which both follow from simple rewriting arguments, presented below.

\medskip
For the first arrow, obviously already $\tp^0_G(\bar u/S)$ determines the $S$-signature of $\bar u$. Let $\bar s$ be any enumeration of $S$.
To see that $\tp^p_G(\bar u/S)$ determines $\tp^{t}_{N^r_S(\bar v)}(\bar v)$, where $\bar v$ is $\bar u$ with vertices of~$S$ removed, take any formula $\phi(\bar x)$ with $|\bar x|=|\bar v|$.
Let $\phi'(\bar x,\bar s)$ be the formula with parameters~$\bar s$ from $S$ that is syntactically derived from $\phi(\bar x)$ as follows: 
to every quantification $\exists y$ in $\phi(\bar x)$ we add a guard $\delta(y,\bar s)$ stating that there is a path from some element of $\bar x$ to $y$ that has length at most $r$ and does not pass through
any vertex of $\bar s$; it is easy to see that there is such a guard $\delta(y,\bar s)$ with quantifier rank $r$.
Then $\phi'(\bar x,\bar s)$ has quantifier rank at most $t+r=p$, and it is straightforward to see that for every $\bar v\in (A-S)^{|\bar v|}$, we have $G,\bar v\models \phi'(\bar x,\bar s)$ if and only if
$N^r_S(\bar v),\bar v\models \phi(\bar x)$. Therefore, to check whether $\phi(\bar x)$ belongs to $\tp^{t}_{N^r_S(\bar v)}(\bar v)$ it suffices to check 
whether $\phi'(\bar x,\bar s)$ belongs to $\tp^p_G(\bar u/S)$, so the latter type determines the former.

\medskip

The argument for the last arrow is provided by the following claim.

\begin{claim}\label{cl:rewrite}
  Let $\phi$ be a formula
  with $k$ free variables and quantifier rank at most $q$, 
  and let $\sigma$ be an $S$-signature of length $k$.
  One can compute a formula $\phi^S$ of quantifier rank at most $q$
  whose free variables correspond to the $\star$'s in $\sigma$,
  such that for every tuple $\bar v$ of elements of $G$
  whose $S$-signature is~$\sigma$,
   $\phi(\bar v)$ holds in $G$
  if and only if $\phi^S(\bar v^S)$ holds in $G^S$, where $\bar v^S$ is obtained from $\bar v$ by removing 
  those elements that belong to $S$.
\end{claim}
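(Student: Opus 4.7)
The plan is to prove the claim by structural induction on $\phi$, maintaining as an invariant that $\phi^S$ has quantifier rank at most that of $\phi$. Given a formula $\phi(x_1,\ldots,x_k)$ and a signature $\sigma \in (S \cup \{\star\})^k$, we will define $\phi^S$ whose free variables correspond to the $\star$-positions of $\sigma$, and show the desired equivalence.

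For atomic formulas, the signature $\sigma$ fixes some coordinates to specific elements of $S$. If an atomic formula $x_i = x_j$ or $E(x_i, x_j)$ has both of its arguments fixed by $\sigma$ to elements of $S$, its truth value is a constant determined by $G$ and $S$, so $\phi^S$ is either $\top$ or $\bot$. If exactly one argument, say $x_i$, is fixed to some $s \in S$ and the other, $x_j$, is a $\star$, then $x_i = x_j$ translates to $\bot$ (since $x_j$ ranges over $V - S$), and $E(x_i, x_j)$ translates to the color predicate $C_s(x_j)$ in $G^S$; note that this is indeed an atomic formula in the colored signature of $G^S$. If both arguments are $\star$, the atomic formula is preserved verbatim. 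Boolean connectives are handled trivially by propagating the transformation through $\wedge$, $\vee$, and $\neg$.

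The main case, and the only one that requires care, is the quantifier. Consider $\phi = \exists y \, \psi(x_1,\ldots,x_k, y)$. We would like to replace this by a quantification in $G^S$, but we must also account for the possibility that $y$ is bound to an element of $S$, which is invisible in $G^S$. We therefore set
\[
  \phi^S \;\equiv\; \Big(\bigvee_{s \in S} \psi^S_{\sigma \cdot s}\Big) \;\vee\; \exists y\, \psi^S_{\sigma \cdot \star},
\]
where $\psi^S_{\sigma \cdot s}$ denotes the inductive translation of $\psi$ under the signature $\sigma$ extended by $s$ in the coordinate for $y$, and similarly for $\star$. Correctness follows because any witness $w$ for $y$ in $G$ is either an element of $s \in S$, accounted for by the corresponding disjunct of the finite disjunction, or an element of $V - S$, accounted for by the existential quantifier in $G^S$ ranging over $V(G^S) = V - S$. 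Crucially, the finite disjunction over $s \in S$ introduces no quantifiers, so the quantifier rank of $\phi^S$ is the quantifier rank of $\psi^S_{\sigma \cdot \star}$ plus one, which by induction is at most the quantifier rank of $\psi$ plus one, i.e., at most that of $\phi$. The universal quantifier case is handled dually using a finite conjunction.

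The main obstacle, if any, is merely bookkeeping: verifying that the transformation behaves correctly when the signature fixes a free variable to an element of $S$ (so that this variable disappears from $\phi^S$'s free variable list), and that the finite disjunctions and conjunctions over elements of $S$ do not affect quantifier rank. Both points are straightforward from the definitions. An easy induction on the structure of $\phi$ then establishes the equivalence: for every $\bar v$ whose $S$-signature is $\sigma$, we have $G, \bar v \models \phi$ if and only if $G^S, \bar v^S \models \phi^S$, which is exactly the statement of the claim.
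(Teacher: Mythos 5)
Your proof is correct and takes essentially the same approach as the paper: a structural induction on $\phi$ where quantifiers are split into a finite disjunction over $s\in S$ plus a quantifier over $G^S$, and atomic formulas are translated by a case analysis depending on which arguments the signature assigns to $S$. Your write-up is actually more explicit than the paper's sketch, in particular in tracking the quantifier rank invariant (noting that the finite disjunction over $S$ adds no quantifiers), which the paper's claim statement requires but its sketch does not spell out.
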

\begin{clproof}[Sketch]
The proof proceeds by a straightforward induction on the structure of the formula $\phi$.
In essence, every quantification $\exists y$ of a vertex~$y$ in $G$ is replaced by quantification of $y$ in $G-S$ plus a disjunction over $s\in S$ of formulas where we assume $y=s$.
Atomic formulas of the form $E(x,y)$ and $x=y$ have to be replaced accordingly. Say for $E(x,y)$: if both~$x$ and~$y$ are assumed to be in $G-S$, then we leave $E(x,y)$ intact;
if $x$ is assumed to be in $S$ (say we assume $x=s$) and $y$ is assumed to be in $G-S$, then we substitute $E(x,y)$ by $C_s(y)$; and if both~$x$ and~$y$ are assumed to be in $S$, then 
we replace $E(x,y)$ by $\bot$ or $\top$ depending on whether the vertices assumed to be equal to $x$ and $y$ are adjacent or not.
We leave the details to the reader.
\end{clproof}

\cref{cl:rewrite}, applied to $k=|\bar u\bar w|$, implies that 
$\tp^q_G(\bar u\bar w)$ is determined by $\tp^q[G^S, \bar u\bar w]$, finishing the proof of~\cref{lem:types}.	
\end{proof}

We remark that in all the results of this section, whenever some type determines some other type, it is actually true that the latter type can be {\em{computed}} given the former type together with the graph $G$ 
and, if applicable, also the set of vertices $S$. For Gaifman's locality lemma, the effectiveness follows from the original proof of Gaifman~\cite{gaifman1982local}, and it is not hard to see that the proof of the Feferman-Vaught lemma (\cref{lem:fv}) can be also made effective. By examining our proofs, one can readily verify that all the stated determination relations can be made effective in this sense.

\section{Bounds on the number of types}\label{sec:types}

In this section we prove \Cref{thm:vc-density} and \Cref{thm:vc-density-lower-bound}.
	Recall that \Cref{thm:vc-density} provides upper bounds on the number of types in classes of graphs which are nowhere dense, 
	and stronger bounds for classes which have bounded expansion. 
	On the other hand, the complementary \Cref{thm:vc-density-lower-bound} shows that for subgraph-closed classes, in the absence of structural sparsity we cannot hope for such upper bounds.

\subsection{Upper bounds for sparse classes}
We first prove \Cref{thm:vc-density}, which we recall for convenience.

 \setcounter{aux}{\thetheorem}
 \setcounter{theorem}{\thevcupper}
 \begin{theorem}\label{thm:vc-density-recall}
 Let $\CCC$ be a class of graphs and let $\phi(\tup x,\tup y)$ be a first order formula
 with free variables  partitioned  into object variables $\bar x$  and parameter variables $\bar y$. Let $\ell=|\bar x|$. Then:
 \begin{enumerate}[(1)]
 \item If $\CCC$ is nowhere dense, then for every $\epsilon>0$
 there exists a constant~$c$ such that for every $G\in \CCC$ and every nonempty
 $A\subseteq V(G)$, we have $|S^\phi(G/A)|\leq c\cdot |A|^{\ell+\epsilon}.$
 \item If $\CCC$ has bounded expansion, then there exists a constant~$c$ such that for every $G\in \CCC$ and every nonempty $A\subseteq V(G)$, we have $|S^\phi(G/A)|\leq c\cdot |A|^\ell$.
 \end{enumerate}
 \end{theorem}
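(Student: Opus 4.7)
The plan is to combine three tools developed earlier in the paper: Gaifman's locality theorem (\cref{lem:gaif}) to reduce $\phi$ to a local formula, \cref{cor:bound} to bound the number of types when parameters are locally separated from the object tuple, and the neighborhood complexity results of Reidl et al.~\cite{reidl2016characterising} (bounded-expansion case) and Eickmeyer et al.~\cite{eickmeyer2016neighborhood} (nowhere dense case) to provide the leading-order count. The tuple version of uniform quasi-wideness (\cref{thm:uqw-tuples}) will supply the small separators needed to apply \cref{cor:bound}.

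First, I would apply Gaifman's theorem to rewrite $\phi(\bar x,\bar y)$ as a Boolean combination of $r$-local formulas $\psi_i(\bar x,\bar y)$ (for some $r$ computable from $\phi$) together with basic local sentences that depend only on $G$. The sentences are constants once $G$ is fixed, and the number of Boolean combinations is bounded by a function of $\phi$, so it suffices to prove the claimed bound for each $r$-local $\psi_i$ separately.

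Fix such an $r$-local $\psi$. I would assign to each tuple $\bar u\in V(G)^\ell$ a \emph{local profile} consisting of the isomorphism type of the subgraph induced on $N_{2r}[\bar u]$, decorated with membership in $A$. The $r$-locality of $\psi$ ensures that for every parameter tuple $\bar v\in A^{|\bar y|}$ each of whose coordinates lies in $N_{3r}[\bar u]$ (the ``close'' case), the value $\psi(\bar u,\bar v)$ is computable from the local profile alone. For parameter tuples whose coordinates lie outside the close range, a Feferman-Vaught-style argument (\cref{lem:fv}) shows that $\psi(\bar u,\bar v)$ factorizes across the separator formed by the boundary of $N_r[\bar u]$, so the contribution of far parameters to $\tp^\psi(\bar u/A)$ depends on $\bar u$ only through its local profile. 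Consequently, two tuples with identical local profiles realize the same $\psi$-type over $A$. The number of distinct local profiles can then be bounded by iterating the neighborhood complexity results coordinatewise: \cite{eickmeyer2016neighborhood} gives $O(|A|^{1+\delta})$ distinct intersections $N_r[u]\cap A$ over $u\in V(G)$ for any $\delta>0$ in the nowhere dense case, and \cite{reidl2016characterising} gives $O(|A|)$ in the bounded-expansion case. Iterating across the $\ell$ coordinates of $\bar u$ gives at most $O(|A|^{\ell(1+\delta)})$ local profiles in the nowhere dense case, respectively $O(|A|^\ell)$ in the bounded-expansion case; choosing $\delta=\epsilon/\ell$ in the former case yields the promised bound.

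The principal obstacle lies in showing rigorously that the local profile of $\bar u$ really is a \emph{complete} invariant of the $\psi$-type, and in justifying that the coordinatewise iteration actually produces each profile at most a bounded number of times. The subtle case is that of ``mixed'' parameter tuples $\bar v$, some of whose coordinates are close to $\bar u$ while others are far, and for which one must simultaneously control both sides. I expect to resolve this by exhibiting, for each $\bar u$, a small separator $S\subseteq V(G)$ extracted from the local profile that $r$-separates the close portion of $\bar v$ (together with $\bar u$) from the far portion; the far side is then bounded via \cref{cor:bound}, while the near side is handled directly using \cref{lem:types}. To obtain such separators uniformly over $\bar u$ and $\bar v$ I would apply \cref{thm:uqw-tuples} to the parameter tuples, reducing to a subset that is mutually $r$-separated by a bounded-size set, where the previous arguments apply.
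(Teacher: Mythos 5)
Your high-level toolkit is the right one — Gaifman locality, Feferman--Vaught, $r$-separation, uniform quasi-wideness for tuples, and the neighborhood-complexity machinery — but the central counting step has a genuine gap, and the overall architecture differs from the paper's in a way that matters.

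The problematic step is your claim that local profiles (isomorphism types of $G[N_{2r}[\bar u]]$ decorated with $A$) can be bounded ``by iterating neighborhood complexity coordinatewise.'' Neighborhood complexity controls only the number of \emph{distinct traces} $N_r[u]\cap A$; it says nothing about how many isomorphism types of the induced subgraph around $u$ can occur. Two vertices with identical traces on $A$ may have wildly different $2r$-neighborhoods in $G$, so the number of your local profiles is not bounded by $\Oof(|A|^{1+\delta})$, let alone $\Oof(|A|^{\ell(1+\delta)})$ after taking $\ell$-tuples. This collapses the main counting estimate. (A secondary issue: even the coarser ``iso type of $N_{2r}[\bar u]$ decorated with $A$'' is not obviously a \emph{complete} invariant of the $\psi$-type for mixed parameter tuples, which you acknowledge, but the plan you sketch to repair it — applying \cref{thm:uqw-tuples} to the \emph{parameter} tuples — runs backwards. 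Extracting a large mutually $r$-separated subfamily of parameter tuples does not control the number of types of object tuples; it controls how parameter tuples sit relative to one another, which is not what is being counted.)

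The paper's proof avoids both problems by not counting local isomorphism types at all. Instead it enlarges $A$ to an $r$-closure $B$ (via \cref{lem:closure-be}/\cref{lem:closure-nd} from~\cite{drange2016kernelization,eickmeyer2016neighborhood}) and uses the \emph{$r$-projection onto $B$} as the controlled invariant: each projection is a bounded-size subset of $B$, there are $\Oof(|B|^{1+\delta})$ distinct projections of single vertices, hence $\Oof(|B|^{\ell(1+\delta)})$ of $\ell$-tuples — a genuine polynomial bound that, unlike local iso types, actually follows from the closure lemmas. The tuple-UQW result (\cref{thm:uqw-tuples}) is then applied to a family of \emph{object} tuples that all share the same projection $C$, yielding a large subfamily $Z$ mutually $2r$-separated by a small $S$; the decisive observation is that $|Z|=\Oof(|C|)$, proved by splitting $Z$ into tuples $r$-separated from $B$ by $S$ (bounded via \cref{cor:bound}) and an injective remainder mapped into $C$. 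This exploits the projection's role as the canonical minimal $r$-separator from $B$, which has no analogue in your profile-based sketch. To fix your proposal you would essentially have to replace the local-profile invariant with the projection invariant and redirect the UQW application from parameter tuples to object tuples — which is the paper's proof.
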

 \setcounter{theorem}{\theaux}
We remark that the theorem holds also for colored graphs, in the following sense.
A class of graphs  whose vertices and edges are colored by a fixed finite number of colors is nowhere dense if the underlying class of graphs obtained by forgetting the colors is nowhere dense. 
Then~\cref{thm:vc-density} holds also for such classes of colored graphs, with the same proof. Namely, all graph theoretic notions
are applied to the underlying colorless graphs, only the definition of types takes the colors into account. 
The results of~\cref{sec:gaifman} then need to be lifted to edge- and vertex-colored graphs, but this is straightforward.

\medskip
The proof of~\cref{thm:vc-density} spans the remainder of this section.
In this proof, we will
will first enlarge the set $A$ to a set $B$, called
an \emph{$r$-closure of~$A$} (where $r$ is chosen depending on $\phi$), such 
that the connections of elements from $V(G)-B$ 
toward $B$ are well controlled. This approach
was first used in Drange et al.~\cite{drange2016kernelization} in the context of classes of bounded expansion, 
and then for nowhere dense classes in Eickmeyer et al.~\cite{eickmeyer2016neighborhood}. 
We start by recalling these notions.

Let $G$ be a graph and let $B\subseteq V(G)$ be a subset of vertices. For vertices $v\in B$ and $u\in V(G)$, a path $P$ leading from $u$ to $v$ is called {\em{$B$-avoiding}}
if all its vertices apart from~$v$ do not belong to~$B$. Note that if $u\in B$, then there is only one $B$-avoiding path leading from $u$, namely the one-vertex path where $u=v$.
For a positive integer $r$ and $u\in V(G)$, the {\em{$r$-projection}} of $u$ on $B$, denoted $M^G_r(u,B)$, is the set of all vertices $v\in B$ such that there is
a $B$-avoiding path of length at most $r$ leading from $u$ to $v$. Note that for $u\in B$, we have $M^G_r(u,B)=\{b\}$.
Equivalently, $M^G_r(u,B)$ is the unique inclusion-minimal
subset of $B$ which $r$-separates $u$ from $B$ (cf.~Fig.~\ref{fig:projection}).

\begin{figure}[h!]
	\centering
		\includegraphics[scale=0.35,page=2]{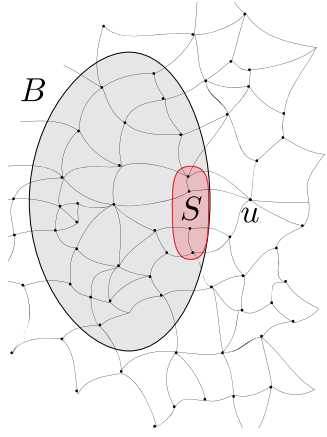}
	\caption{The  $r$-projection of $u$ on $B$
	(here $r=2$)
	is the minimal set  $S\subset B$
	which $r$-separates $ u$ from $B$.
	}
	\label{fig:projection}
\end{figure}

We will use the following results from~\cite{drange2016kernelization,eickmeyer2016neighborhood}.

\begin{lemma}[\cite{drange2016kernelization}]\label{lem:closure-be}
Let $\CCC$ be a class of bounded expansion. 
Then for every $r\in \N$ there is a constant $c\in \N$ such that for
every $G\in \CCC$ and $A\subseteq V(G)$ there exists a set $B$, called an {\em{$r$-closure}} of $A$, with the following properties:
\begin{enumerate}[(a)]
  \item $A\subseteq B\subseteq V(G)$;
  \item $|B|\leq c\cdot |A|$; and
  \item $|M_r^G(u,B)|\leq c$ for each $u\in V(G)$.
  \end{enumerate}
  Moreover, for every set $X\subset V(G)$, it holds that
  \begin{enumerate}[(d)]
  \item $|\setof{M_r^G(u,X)}{u\in V(G)}|\leq c\cdot |X|$.
\end{enumerate}
\end{lemma}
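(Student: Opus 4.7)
The plan is to prove the lemma via the characterization of bounded expansion in terms of \emph{generalized coloring numbers}, specifically the weak $r$-coloring number $\wcol_r$ introduced by Zhu. The starting point will be the known fact that if $\CCC$ has bounded expansion, then for every $r \in \N$ there exists a constant $c' = c'(\CCC, r)$ such that every $G \in \CCC$ admits a linear order $\sigma$ on $V(G)$ with $\wcol_{2r}(G, \sigma) \le c'$; here $\WReach_{2r}[\sigma, u]$ denotes the set of vertices $v$ such that there is a path of length at most $2r$ from $u$ to $v$ on which $v$ is the $\sigma$-minimum, and the assumption says $|\WReach_{2r}[\sigma, u]| \le c'$ for every $u$. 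Given $A \subseteq V(G)$, the construction will simply be
\[
B \ :=\ \bigcup_{a \in A} \WReach_{2r}[\sigma, a].
\]
Properties (a) and (b) are then immediate: every $a \in A$ lies in $\WReach_{2r}[\sigma, a]$, and $|B| \le c' \cdot |A|$.

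The heart of the proof will be property (c). Fix $u \in V(G)$ and consider an arbitrary $v \in M_r^G(u, B)$, witnessed by a $B$-avoiding path $P$ of length at most $r$ from $u$ to $v$ with $v \in B$. Let $w$ be the $\sigma$-minimum vertex on $P$. By definition of weak reachability, $w \in \WReach_r[\sigma, u]$, and the sub-path of $P$ from $w$ to $v$ witnesses $v \in \WReach_r[\sigma, w]$ if $w \ne v$. In particular, $v$ is then determined by knowing $w$ and the choice of a $\sigma$-smaller vertex weakly reachable in the opposite direction; applying the bound on $|\WReach_r|$ iteratively via the ``chain'' argument standard in the coloring-number literature (Nešetřil--Ossona de Mendez, Grohe--Kreutzer--Siebertz) yields that the total number of such $v$ is bounded by a function of $\wcol_{2r}(G,\sigma)$. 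Setting $c$ to that function applied to $c'$ will give (c).

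For property (d), the same argument will essentially go through verbatim, but with $X$ in place of $B$: the set $\{M_r^G(u, X) : u \in V(G)\}$ is a family of subsets of $X$, and each such projection can be computed by walking along $\sigma$-minima of paths of length at most $r$. A standard charging argument (each projection is determined by a ``trace'' on $X$ of weakly $r$-reachable vertices, and the total number of such traces across all $u$ is bounded by $c \cdot |X|$) will produce the claimed linear bound in $|X|$.

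\textbf{Main obstacle.} The difficulty concentrates entirely on (c): one must convert the local property ``$\wcol_{2r}(G, \sigma)$ is bounded'' into a bound on the size of an $r$-projection onto a set $B$ that was itself built from weakly reachable sets. The subtle point is that the $\sigma$-minimum $w$ of a $B$-avoiding path need not lie in $B$, so one cannot immediately conclude $v = w$; instead one must argue that the pair $(w, v)$ is constrained enough that the number of possible $v$'s is still controlled by $\wcol_{2r}$. Getting the exponent and the dependence on $r$ right in this combinatorial trace argument is where the real work lies, and where the choice of radius $2r$ (rather than $r$) for defining $B$ becomes essential.
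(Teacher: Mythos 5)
The paper does not prove this lemma; it cites it directly from Drange et al.\ \cite{drange2016kernelization}, so there is no internal proof to compare against. Evaluating your sketch on its own merits: the high-level choice of tool is right — the Drange et al.\ argument does indeed go through the weak $r$-coloring number characterization of bounded expansion, and $B$ is built from weakly reachable sets under a good order $\sigma$. Properties (a) and (b) are fine. The problem is that (c) and (d), which are where all the content lies, are left as allusions rather than arguments.

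For (c) concretely: you hope that, for $v \in M^G_r(u,B)$ witnessed by a $B$-avoiding path $P$, the $\sigma$-minimum $w$ of $P$ controls $v$ via a ``chain argument.'' But the natural candidate inclusion $M^G_r(u,B) \subseteq \WReach_{2r}[\sigma,u]$ is simply false for your $B$. On a path $z_0 - z_1 - \cdots - z_{3r}$ take $A = \{z_0\}$ and the order in which $z_{3r-1}$ is smallest, $z_{2r}$ is second smallest, then $z_1 < \cdots < z_{2r-1} < z_0$, then the remaining vertices. Then $B = \WReach_{2r}[\sigma,z_0] = \{z_0, z_1, z_{2r}\}$, the path $z_{3r} \cdots z_{2r}$ shows $z_{2r} \in M^G_r(z_{3r}, B)$, yet $z_{2r} \notin \WReach_{2r}[\sigma, z_{3r}]$ because $z_{3r-1}$ is $\sigma$-smaller and sits on every $z_{3r}$--$z_{2r}$ path. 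So the minimum $w$ of $P$ is neither $v$ nor in $B$, and fixing $w \in \WReach_r[\sigma,u]$ does not bound the number of $v$'s with $w \in \WReach_r[\sigma,v]$ — that number is unbounded in general (think of a star centered at $w$). You correctly flag this as the ``main obstacle,'' but the sentence ``applying the bound on $|\WReach_r|$ iteratively via the chain argument'' is not an argument; it is a placeholder for precisely the step you would need to supply. The actual argument in Drange et al.\ requires a more careful setup (their closure is not a one-shot union of $\WReach_{2r}$ sets, and the termination/size analysis for (b)–(c) is nontrivial). Property (d) is a separate linear-neighborhood-complexity statement in the spirit of Reidl et al.\ \cite{reidl2016characterising}; ``a standard charging argument'' names the flavor but again does not supply the proof. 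As it stands, the proposal identifies the right circle of ideas but leaves both load-bearing steps unproved.
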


\begin{lemma}[\cite{eickmeyer2016neighborhood}]\label{lem:closure-nd}
Let $\CCC$ be a nowhere dense class. 
Then for every $r\in\N$ and $\delta>0$ there is a 
constant $c\in\N$ such that for every $G\in \CCC$ and $A\subseteq V(G)$ there exists a set 
$B$,  called an {\em{$r$-closure}} of $A$, 
with the following properties: 
\begin{enumerate}[(a)]
  \item $A\subseteq B\subseteq V(G)$;
  \item $|B|\leq c\cdot |A|^{1+\delta}$; and
  \item $|M_r^G(u,B)|\leq c\cdot |A|^{\delta}$ for each $u\in V(G)$.
  \end{enumerate}
  Moreover, for every set $X\subset V(G)$, it holds that
  \begin{enumerate}[(d)]  
  \item $|\setof{M_r^G(u,X)}{u\in V(G)}|\leq c\cdot |X|^{1+\delta}$.
\end{enumerate}
\end{lemma}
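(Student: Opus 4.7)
The plan is to follow the approach of Eickmeyer et al., which builds on the closure construction for bounded expansion classes of Drange et al.\ by substituting the polynomial bound on weak coloring numbers available for nowhere dense classes in place of the constant bound available for bounded expansion. As the key input I would invoke the weak-coloring-number characterization of nowhere dense classes due to Grohe, Kreutzer and Siebertz: for every $s\in\N$ and $\eta>0$ there is a constant $k=k(s,\eta,\CCC)$ such that every $G\in\CCC$ admits an ordering $\sigma$ of $V(G)$ with $\wcol_s(G,\sigma)\leq k\cdot|V(G)|^{\eta}$. Fixing $s=3r$, my candidate for the closure would be $B:= A\cup \bigcup_{v\in A}\WReach_{3r}(G,\sigma,v)$, immediately yielding $|B|\leq |A|\cdot k\cdot|V(G)|^{\eta}$.

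The core combinatorial observation is that for every $u\in V(G)$ one has $M_r^G(u,B)\subseteq \WReach_r(G,\sigma,u)$, which yields $|M_r^G(u,B)|\leq k\cdot|V(G)|^{\eta}$ as well. The argument goes as follows: given $v\in M_r^G(u,B)$, let $w$ be the $\sigma$-minimum vertex on a $B$-avoiding path $P$ of length at most $r$ from $u$ to $v$. Either $w=v$, so that $v\in \WReach_r(\sigma,u)$; or $w$ is strictly $\sigma$-smaller than $v$. In the latter case, concatenating $P$ with the path of length at most $3r$ witnessing $v\in \WReach_{3r}(\sigma,a)$ for some $a\in A$ produces a path of length at most $3r$ from $a$ to $w$ on which $w$ is the $\sigma$-minimum; this forces $w\in B$, contradicting that $w$ was an internal vertex of the $B$-avoiding path $P$.

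The main obstacle will be converting the bounds $k\cdot|V(G)|^{\eta}$ into bounds of the shape $c\cdot|A|^{\delta}$, since the weak-coloring-number estimate scales with $|V(G)|$ rather than with $|A|$. I would resolve this by splitting on the relative sizes of $V(G)$ and $A$: if $|V(G)|\leq |A|^{1+\delta}$, then choosing $\eta:=\delta/(2(1+\delta))$ makes $|V(G)|^{\eta}\leq |A|^{\delta/2}$, and (a)--(c) follow directly from the one-shot construction above; if on the other hand $|V(G)|$ is much larger, I would iterate the construction for $O(1/\delta)$ rounds, at each stage augmenting $B$ by $\WReach_{3r}$-sets only of those vertices whose current projection on $B$ is still above the target threshold. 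Showing that the exponent is halved (or at least reduced by a multiplicative factor) in each round, and that the iteration therefore stabilizes after boundedly many steps with a final size of at most $c\cdot|A|^{1+\delta}$, is the most delicate bookkeeping in the whole argument.

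Finally, property (d) I would prove as an essentially independent argument: the number of distinct sets $M_r^G(u,X)$ over all $u\in V(G)$ is controlled by the $r$-neighborhood complexity of $X$, since each such projection is determined by the pattern of how $u$ sees $X$ through $r$-short $X$-avoiding paths. Appealing to the $O(|X|^{1+\delta})$ bound on $r$-neighborhood complexity in nowhere dense classes, proved by Eickmeyer et al.\ extending Reidl et al.\ from the bounded expansion setting, then gives (d) with the required polynomial dependence.
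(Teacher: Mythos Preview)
This lemma is not proved in the present paper; it is quoted from~\cite{eickmeyer2016neighborhood} (building on~\cite{drange2016kernelization} for the bounded-expansion analogue, \cref{lem:closure-be}). There is therefore no proof here against which to compare your proposal.

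That said, your sketch contains a genuine gap. The ``core combinatorial observation'' that $M_r^G(u,B)\subseteq \WReach_r(G,\sigma,u)$ for $B=\bigcup_{a\in A}\WReach_{3r}(G,\sigma,a)$ is false. Take $G$ to be the path on vertices $a,p_1,p_2,p_3,p_4,p_5,v,w,u$ (in this order along the path), set $r=2$ and $A=\{a\}$, and choose the ordering $\sigma$ so that $w<_\sigma v<_\sigma a<_\sigma p_1<_\sigma\cdots<_\sigma p_5<_\sigma u$. Then $B=\{a,v\}$, since the unique walk from $a$ to $w$ has length $7>3r$. The $B$-avoiding path $u,w,v$ certifies $v\in M_2^G(u,B)$, yet $v\notin\WReach_2(G,\sigma,u)$ because $w<_\sigma v$ sits on the only path of length at most $2$ from $u$ to $v$. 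The error in your argument is the length arithmetic: concatenating the $\WReach_{3r}$-witness (length $\le 3r$) with the relevant portion of $P$ (length $\le r$) produces a walk of length up to $4r$, not $3r$, so you cannot conclude $w\in B$. Replacing $3r$ by any fixed $Cr$ does not help, as the same construction scales.

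The actual construction in~\cite{drange2016kernelization,eickmeyer2016neighborhood} does go through weak reachability, but the closure is not a single pass of $\WReach$ over $A$; it is built by an iterative process, and the projection bound uses a more careful relationship between $M_r^G(\cdot,B)$ and weak reachability than the containment you state. Your iterative branch is closer in spirit to what is done there, but as written each round still appeals to the same flawed containment, so the gap propagates.
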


We note that in~\cite{drange2016kernelization,eickmeyer2016neighborhood} projections on $B$ are defined only for vertices outside of $B$. 
However, adding singleton projections for vertices of $B$ to the definition only adds $|B|$ possible projections of size $1$ each, so this does not influence the validity of the above results.

We proceed with the proof of \cref{thm:vc-density}.
 To focus attention, we present the proof only for the nowhere dense case (first statement). The proof in the bounded expansion case (second statement)
 can be obtained by replacing all the parameters $\epsilon,\delta,\epsilon_1,\epsilon_2$ below by~$0$, and substituting the usage of \cref{lem:closure-nd} with \cref{lem:closure-be}.

 Let us fix: a nowhere dense class of graphs $\CCC$, a graph $G\in \CCC$, a vertex subset $A\subseteq V(G)$, a real $\epsilon>0$, and 
 a first order formula $\phi(\bar x,\bar y)$, where $\bar x$ is the distinguished $\ell$-tuple of object variables.
 Our goal is to show that $|S^\phi(G/A)|=\Oof(|A|^{\ell+\epsilon})$.
 	   
In the sequel, $d$ denotes a positive integer 
depending on ${\cal C},\ell,\phi$ only (and not on $G, A$ and $\epsilon$), and will be specified later. We may choose positive reals
$\delta,\epsilon_1$ such that 
	 $(\ell+\epsilon_1)(1+\delta) 
	 \le
	 \ell+\epsilon$ and $\epsilon_1>\delta(d+\ell)> \delta\ell$, for instance as follows: $\epsilon_1=\epsilon/2$ and $\delta=\frac{\epsilon}{4d+4\ell}$.
The constants hidden in the $\Oof(\cdot)$ notation below depend
 on $\epsilon,\delta,\epsilon_1,\cal C, \ell$ and $\phi$, but not on $G$ and $A$.   By \emph{tuples} below we refer to tuples of length $\ell$.

	Let $q$ be the quantifier rank of $\phi$ and let 
$p,r$ be the numbers obtained by applying \cref{lem:types} to $q$ and $\ell$.
Let $B$ be an $r$-closure of $A$, given by~\cref{lem:closure-nd}.
  By~\cref{lem:closure-nd}, the total number of distinct $r$-projections onto $B$ 
  is at most $\Oof(|B|^{1+\delta})$, and each of these projections has size $\Oof(|B|^{\delta})$.
  	   Figure~\ref{fig:sketch} serves as  an illustration to the steps of the proof in the case $\ell=1$.
  	   \begin{figure}[h!]
  	   	\centering
  	   		\includegraphics[scale=0.346,page=4]{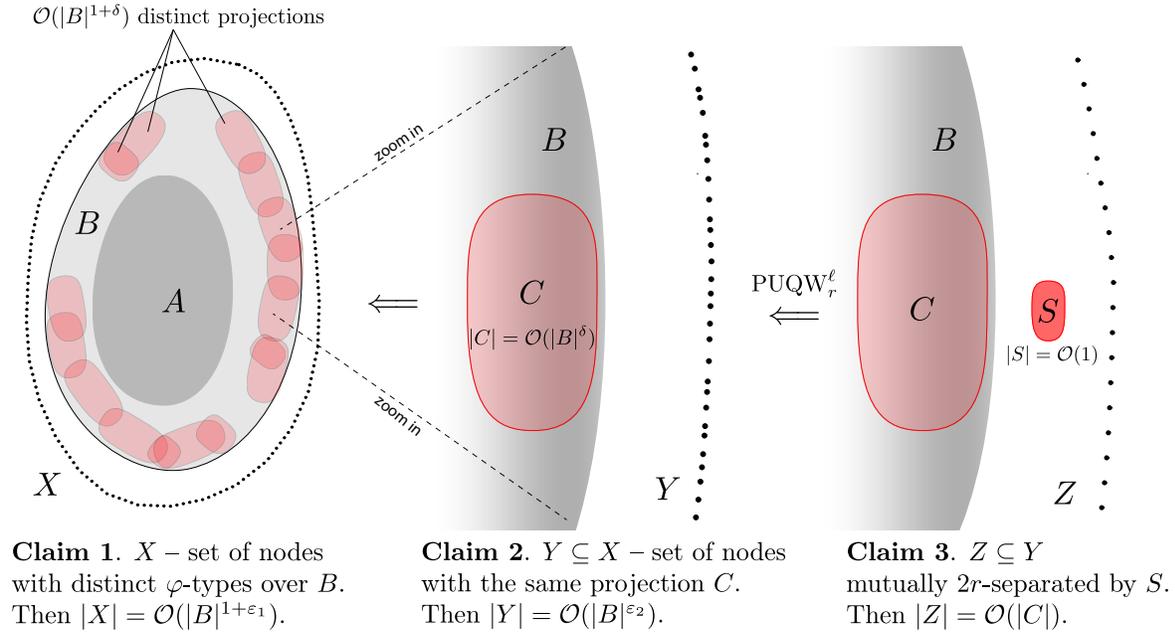}
%
%
  			\caption{The proof of~\cref{thm:vc-density} in case $\ell=1$. 
  The logical implications flow from right to left,
  but our description below proceeds in the other direction.
  			}
  	   	\label{fig:sketch}
  	   \end{figure}
	
	\setcounter{claim}{0}
	
The first step is to reduce the statement to the following claim.

\begin{claim}\label{claim2}
If $X$ is a set of tuples with pairwise different $\phi$-types over $B$, then $|X|=\Oof(|B|^{\ell+\epsilon_1})$.
\end{claim}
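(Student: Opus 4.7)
The plan is to associate to each tuple $\bar u \in X$ a compact ``signature'' capturing the information needed to recover its $\phi$-type over $B$, then count signatures.

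For $\bar u \in X$, define its \emph{projection} onto $B$ as $S_{\bar u} := \bigcup_{i=1}^{\ell} M_r^G(u_i, B)$. By the choice of $B$ as an $r$-closure (\cref{lem:closure-nd}(c)), we have $|S_{\bar u}| \leq \ell\cdot c\,|A|^{\delta} = O(|B|^{\delta})$. Moreover, each $M_r^G(u_i, B)$ is by definition the minimal subset of $B$ that $r$-separates $u_i$ from $B$, so $S_{\bar u}$ is an $r$-separator between $\bar u$ and $B\setminus S_{\bar u}$ in $G$. Applying \cref{lem:types} with parameters $q$ (the quantifier rank of $\phi$) and $\ell$ yields that $\tp^\phi_G(\bar u/B)$ is determined by $\tp^p_G(\bar u / S_{\bar u})$. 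Hence pairwise-distinct $\phi$-types over $B$ imply pairwise-distinct signatures $\sigma(\bar u) := (S_{\bar u}, \tp^p_G(\bar u/S_{\bar u}))$, so it suffices to count these.

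I would then count in two factors. First, consider the ordered projection tuple $(M_r(u_1,B),\ldots, M_r(u_\ell,B))$, which determines $S_{\bar u}$. By \cref{lem:closure-nd}(d), the total number of distinct $r$-projections onto $B$ is at most $c\cdot |B|^{1+\delta}$, so the number of projection tuples is at most $c^\ell |B|^{\ell(1+\delta)} = O(|B|^{\ell+\ell\delta})$. Second, for each fixed projection tuple $(P_1, \dots, P_\ell)$ with union $S = \bigcup_i P_i$, the aim is to bound the number of distinct realized types $\tp^p_G(\bar u/S)$ over tuples $\bar u$ with that projection by $O(|B|^{d\delta})$ for a constant $d = d(\phi, \CCC)$. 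Multiplying gives $O(|B|^{\ell+(d+\ell)\delta}) \leq O(|B|^{\ell+\epsilon_1})$, using our initial choice $\epsilon_1 > \delta(d+\ell)$, which is exactly the claim.

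The main obstacle is the second factor: bounding local types per fixed projection tuple. Here I plan to invoke the uniform quasi-wideness for tuples (\cref{thm:uqw-tuples} / \cref{prop:uqw-tuples}): if there were too many tuples $\bar u$ sharing the same projection but exhibiting distinct $\tp^p(\bar u/S)$, one could extract a large subfamily $Y$ that is mutually $r$-separated in $G$ by a small set $T$, with $T$ disjoint from the ``interior'' of each $\bar u$. By Gaifman locality combined with the Feferman-Vaught lemma (\cref{lem:fv}), as used inside the proof of \cref{lem:types}, the type $\tp^p(\bar u/S)$ of any $\bar u\in Y$ is determined by the colored isomorphism type of a bounded-radius ball of $\bar u$ in $G^{S\cup T}$, which lives outside $B$ (since $S_{\bar u}$ already shields $\bar u$ from $B\setminus S_{\bar u}$ at distance $r$). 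Since the possible colorings are indexed by subsets of $S$ and $T$, and the count of distinct ``colored balls'' of this form in a nowhere dense graph is controlled by the neighborhood complexity bound of~\cite{eickmeyer2016neighborhood}, this yields the desired $O(|B|^{d\delta})$ estimate for a suitable $d$ computable from $\phi$ and the nowhere denseness function of $\CCC$. The delicate part of this step is combining UQW for tuples with the compositional decomposition cleanly enough that the counts aggregate into the stated polynomial bound --- in particular, making sure that different possible $T$'s contribute only a constant factor rather than an extra polynomial in $|B|$.
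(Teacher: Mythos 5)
Your outline reproduces the paper's first two reduction steps faithfully: encode each $\bar u\in X$ by its projection $S_{\bar u}=\bigcup_i M^G_r(u_i,B)$ together with the type $\tp^p(\bar u/S_{\bar u})$, observe via \cref{lem:types} that this signature determines $\tp^\phi_G(\bar u/B)$, and then multiply the number of possible projection tuples (at most $O(|B|^{\ell(1+\delta)})$ by \cref{lem:closure-nd}(d)) by the number of distinct types realized per fixed projection. This is exactly the structure of the paper's reduction to Claim~3.

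The genuine gap is in the second factor: you never actually prove that the number of distinct realized types per fixed projection $C$ is $O(|B|^{d\delta})$, and the mechanism you propose does not yield it. After extracting a mutually separated subfamily via \cref{thm:uqw-tuples}, you appeal to Gaifman locality, Feferman--Vaught, and ``counting colored balls, controlled by the neighborhood complexity bound of~\cite{eickmeyer2016neighborhood}.'' But the colorings are indexed by the projection $S$, whose size is $O(|B|^\delta)$ and not constant, so the number of possible colored isomorphism types is not bounded by anything like $O(|B|^{d\delta})$ on the face of it; and the neighborhood-complexity result bounds the number of distinct intersections of balls with a fixed parameter set, which is not the same as the number of colored ball isomorphism types. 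You flag this step yourself as the ``delicate part,'' but it is not a delicacy -- it is where the proof is missing.

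The paper closes this gap by a much more concrete argument (Claims~3 and~4). Apply UQW for tuples to extract a subfamily $Z$ that is mutually $2r$-separated (note: $2r$, not $r$) by a constant-size set $S$. Split $Z$ into $Z_0$, the tuples that are $r$-separated from $B$ by $S$, and $Z_1=Z\setminus Z_0$. For $Z_0$, \cref{cor:bound} immediately gives $|Z_0|=O(1)$ since these tuples have pairwise distinct types over $B$. For $Z_1$, each $\bar u\in Z_1$ has a witness $b(\bar u)\in C$ that is not $r$-separated from $\bar u$ by $S$, and the $2r$-separation of $Z$ forces $b(\cdot)$ to be injective -- two tuples sharing a witness would be within $2r$ of each other through that witness. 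So $|Z_1|\le|C|=O(|B|^\delta)$, hence $|Z|=O(|B|^\delta)$, and the polynomial inverse of UQW gives the $O(|B|^{d\delta})$ bound for the original set. This injective map into $C$, and the need for $2r$- rather than $r$-separation to make it injective, is the missing idea in your write-up; you should replace the colored-ball/neighborhood-complexity heuristic with it.
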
	

\cref{claim2} implies that $|S^\phi(G/B)|=\Oof(|B|^{\ell+\epsilon_1})$, 
which is bounded by $\Oof(|A|^{(\ell+\epsilon_1)(1+\delta)})$ since $|B|=\Oof(|A|^{1+\delta})$. As $(\ell+\epsilon_1)(1+\delta)\le \ell+\epsilon$, this shows that $|S^\phi(G/B)|=\Oof(|A|^{\ell+\epsilon})$.
Then \cref{lem:types-over-B} implies that also $|S^\phi(G/A)|=\Oof(|A|^{\ell+\epsilon})$, and we are done. Therefore, it remains to prove~\cref{claim2}.

\medskip

For a tuple $\bar w=w_1\ldots w_\ell\in V(G)^\ell$, define its \emph{projection}
to be the set $C_1\cup\ldots\cup C_\ell\subset B$ where  
$C_i=M^G_r(w_i, B)$. Note that there are at most 
$\Oof(|B|^{\ell(1+\delta)})$ different projections of tuples in total, and each projection has size $\Oof(|B|^\delta)$.
To prove~\cref{claim2}, we consider the special case when all the tuples have the same projection, say $C\subset B$, and  obtain a stronger conclusion,
for $\epsilon_2\coloneqq \epsilon_1-\delta\ell>0$.

\begin{claim}\label{claim3}
If $Y$ is a set of tuples with pairwise different $\phi$-types over $B$, and each $u\in V$ has the same projection $C\subset B$, then $|Y|=\Oof(|B|^{\epsilon_2})$.
\end{claim}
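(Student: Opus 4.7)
My plan is to reduce counting $\phi$-types over the large set $B$ to counting types over the much smaller set $C$, and then to bound the latter via uniform quasi-wideness for tuples. The first step exploits that, by the very definition of $r$-projection, the set $C$ is an $r$-separator between the components of every tuple $\bar u\in Y$ and the parameter set $B$. Hence Lemma~\ref{lem:types}, applied with $S=C$ to the quantifier rank $q$ of $\phi$ and dimension $\ell$, produces numbers $p$ and $r$ (depending only on $\phi$ and $\ell$) such that $\tp^{\phi}_G(\bar u/B)$ is determined by $\tp^{p}_G(\bar u/C)$ for every $\bar u\in Y$. Consequently, $|Y|$ is bounded by the number of distinct $\tp^{p}$-types over $C$ realized by tuples of $Y$, and it suffices to bound this latter quantity.

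To control the count of distinct $\tp^{p}$-types over $C$, I would apply Theorem~\ref{thm:uqw-tuples} (uniform quasi-wideness for tuples) to $Y\subseteq V(G)^{\ell}$ in $G$ with radius $r$. Assuming $|Y|\geq N^{\ell}_{r}(K)$ for a constant $K$ to be fixed later, this yields a subset $Y^{\ast}\subseteq Y$ with $|Y^{\ast}|\geq K$ and a set $T\subseteq V(G)$ of constant size $s^{\ell}_{r}$ such that $Y^{\ast}$ is mutually $r$-separated by $T$. Combining the two separations, each $\bar u\in Y^{\ast}$ is $r$-separated from $C$ and from the components of the other tuples of $Y^{\ast}$ by the set $T\cup C$. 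Corollary~\ref{cor:bound}, applied with separator $T\cup C$ and parameter set $C$, then bounds the number of distinct $\tp^{p}$-types over $C$ realized in $Y^{\ast}$ by a quantity $\Gamma$ that is computable from $\phi$, $\ell$, and $|T\cup C|$. Choosing $K>\Gamma$ produces a contradiction, so we may conclude $|Y|<N^{\ell}_{r}(K)$.

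The main obstacle is to ensure that the resulting bound on $|Y|$ is polynomial in $|B|$ with exponent at most $\epsilon_{2}$. Since $|T\cup C|=|T|+|C|=\Oof(|B|^{\delta})$, a naive application of Corollary~\ref{cor:bound} only guarantees that $\Gamma$ is computable from a quantity that grows with $|B|^{\delta}$, which is a priori not polynomial in $|B|$. The crux of the proof is thus to refine the analysis so that $\Gamma$ is polynomial in $|C|$ with degree $c$ depending only on $\phi$, $\ell$, and $\CCC$; granted this, we would obtain $|Y|\leq N^{\ell}_{r}(\Gamma)=\Oof(|B|^{\delta c'})$ for some $c'$. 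Because the parameters at the outset of the proof of Theorem~\ref{thm:vc-density} were chosen so that $\epsilon_{2}>\delta(d+\ell)$ for an integer $d$ to be fixed later (and so far unspecified), taking $d$ large enough that $d\geq c'$ absorbs this into $\Oof(|B|^{\epsilon_{2}})$, as required. This is exactly the role of the still-unfixed parameter $d$ introduced at the beginning of the proof.
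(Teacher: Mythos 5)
The reduction you set up at the outset is sound: because $C$ is (by definition of $r$-projection) a subset of $B$ that $r$-separates every coordinate of every $\bar u\in Y$ from $B$, Lemma~\ref{lem:types} does let you replace $\phi$-types over $B$ by quantifier-rank-$p$ types over $C$. However, the step where you try to bound the number of these types over $C$ is where the argument breaks down, and you flag the gap yourself without filling it. Applying \cref{cor:bound} with separator $T\cup C$ gives a bound $\Gamma$ that is computable from $|T\cup C|=\Theta(|B|^\delta)$, and the number of quantifier-rank-$p$ types of $\ell$-tuples over a parameter set of size $s$ is at least exponential in $s$ (already at the atomic level, the trace of a single vertex on $C$ can be any subset of $C$). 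So there is no way to ``refine'' this particular application of \cref{cor:bound} to get $\Gamma$ polynomial in $|C|$; the proposed route is structurally unable to produce a polynomial bound, and the phrase ``granted this'' is precisely where the proof would have to go.

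The paper closes the gap by a genuinely different argument, namely \cref{claim4}. One applies \cref{thm:uqw-tuples} at radius $2r$ (not $r$), obtaining a set $Z\subseteq Y$ of size $m$ mutually $2r$-separated by a set $S$ of \emph{constant} size $s^{\ell}_{2r}$, with $|Y|=\Oof(m^{d})$ for $d$ the degree of the polynomial $N^{\ell}_{2r}$. Then $Z$ is split into $Z_0$ (tuples $r$-separated from $B$ by $S$) and $Z_1=Z\setminus Z_0$. For $Z_0$, \cref{cor:bound} is applied with separator $S$, which has \emph{constant} size, so $|Z_0|=\Oof(1)$; this is the correct way to use \cref{cor:bound}, not with the non-constant separator $T\cup C$. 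For $Z_1$, each tuple $\bar u\in Z_1$ is, by definition, not $r$-separated from $B$ by $S$, so some coordinate of $\bar u$ has an $S$-avoiding path of length at most $r$ to a vertex $b(\bar u)\in C$. The map $b(\cdot)$ is injective on $Z_1$ precisely because $Z$ is $2r$-separated by $S$: two tuples witnessing the same $b$ would be connected through $b$ by an $S$-avoiding path of length $\leq 2r$. Hence $|Z_1|\leq|C|$, so $m=|Z|=\Oof(|C|)=\Oof(|B|^{\delta})$ and $|Y|=\Oof(|B|^{d\delta})$, which is $\Oof(|B|^{\epsilon_2})$ by the choice of $\delta$. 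This injection and the use of radius $2r$ to make it injective are the key missing ideas in your plan; your use of radius $r$ in \cref{thm:uqw-tuples} is also insufficient for this step, since the injectivity argument requires $2r$-separation.
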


Since there are at most $\Oof(|B|^{\ell(1+\delta)})$ different projections in total and $\ell(1+\delta)+\epsilon_2=\ell+\epsilon_1$, \cref{claim2} can be proved
by summing the bound given by \cref{claim3} through all different projections~$C$.
It therefore remains to prove~\cref{claim3}.

\medskip

We apply~\cref{thm:uqw-tuples} to the set of $\ell$-tuples $Y$, for $m$ being the largest integer such that $|Y|\ge N^{\ell}_{2r}(m)$.
  As a conclusion, we obtain a set $Z\subseteq Y$ of $m$ tuples that is mutually $2r$-separated by $S$ in $G$, for some set of vertices $S\subseteq V(G)$ of size $s\coloneqq s^{\ell}_{2r}$.
  Let $d$ be the degree of the polynomial $N^\ell_{2r}(\cdot)$ obtained from~\cref{thm:uqw-tuples}.
  Note that $s=\Oof(1)$ and $|Y|=\Oof(m^d)$.
    
  \begin{claim}\label{claim4}
It holds that $|Z|=\Oof(|C|)$.
  \end{claim}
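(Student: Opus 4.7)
The plan is to partition $Z$ according to whether the mutual separator $S$ effectively cuts each tuple off from $B$. For every $\bar u \in Z$, define $C_S(\bar u) \subseteq C$ to be the set of $c \in C$ for which there is a $B$-avoiding path of length at most $r$ in $G$ from some component of $\bar u$ to $c$ that moreover contains no vertex of $S$. I then split $Z = Z_0 \cup Z_1$, where $Z_0 = \{\bar u \in Z : C_S(\bar u) = \emptyset\}$ and $Z_1 = Z \setminus Z_0$, and bound each part separately.

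For $\bar u \in Z_0$, I argue that the set $A_0$ of all vertices appearing as components of tuples in $Z_0$ is $r$-separated from $B$ by $S$. Indeed, consider any path of length at most $r$ from some $u_i \in A_0$ to $B$, and let $c$ be the first vertex of the path lying in $B$; by the definition of the $r$-projection, $c$ must belong to $C$. The initial segment of the path from $u_i$ to $c$ is $B$-avoiding, so it would witness $c \in C_S(\bar u)$ unless it meets $S$, which by $C_S(\bar u)=\emptyset$ it must. Hence every path of length at most $r$ from $u_i$ to $B$ meets $S$. With $|S| \leq s = \Oof(1)$, \cref{cor:bound} bounds the number of distinct $\phi$-types over $B$ realized by $\ell$-tuples with components in $A_0$ by a constant $T$ depending only on $\phi$ and $s$. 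Since the tuples in $Z \supseteq Z_0$ realize pairwise distinct $\phi$-types over $B$, this yields $|Z_0| \leq T = \Oof(1)$.

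For $Z_1$, I pick for each $\bar u \in Z_1$ an arbitrary representative $c(\bar u) \in C_S(\bar u)$ and claim that the assignment $\bar u \mapsto c(\bar u)$ is injective. Suppose for contradiction that $c(\bar u) = c(\bar u') = c$ for distinct $\bar u, \bar u' \in Z_1$, witnessed by $S$-avoiding paths of length at most $r$ from some $u_i$ in $\bar u$ and $u'_j$ in $\bar u'$ to $c$. Note that $c, u_i, u'_j \notin S$, as these paths avoid $S$. Concatenating the two paths at $c$ yields a walk of length at most $2r$ from $u_i$ to $u'_j$ avoiding $S$, and extracting any simple subpath between these endpoints also avoids $S$ and has length at most $2r$. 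This contradicts the fact that $u_i$ and $u'_j$ are $2r$-separated by $S$, which is part of the defining property of $Z$. Therefore $|Z_1| \leq |C|$, and combining gives $|Z| \leq T + |C| = \Oof(|C|)$.

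The main subtlety is handling tuples whose short routes to $B$ are all forced to traverse $S$: for those the injection into $C$ is unavailable, so the bound must instead come from the small size of $S$ via \cref{cor:bound}. The two-case split is exactly what marries these complementary mechanisms, using the mutual $2r$-separation to prevent two distinct tuples in $Z_1$ from being charged to the same element of $C$.
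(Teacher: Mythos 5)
Your proof is correct and follows essentially the same approach as the paper: split $Z$ into the tuples that $S$ cuts off from $B$ within distance~$r$ (bounded by a constant via \cref{cor:bound}) and the rest (injected into $C$ via mutual $2r$-separation). Your treatment is a bit more explicit than the paper's --- in particular you carefully verify that any short $S$-avoiding route from a tuple to $B$ can be rerouted to land in $C$ --- but the two arguments are the same in substance.
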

  
  We first show how~\cref{claim4} implies~\cref{claim3}.
  Since $m=|Z|=\Oof(|C|)$,
  and $|C|=\Oof(|B|^\delta)$,
  it follows that $|Y|=\Oof(m^d)=\Oof(|B|^{d\delta})$. As $\delta(d+\ell)>\epsilon_1$, this implies that $d\delta<\epsilon_2$, yielding~\cref{claim3}.
  We now prove~\cref{claim4}.

\medskip
  Let $Z_0\subset Z$ be the set of 
  those tuples in $Z$ which are $r$-separated by $S$ from $B$ in $G$,
  and let $Z_1=Z-Z_0$ be the remaining  tuples.
  Since tuples from $Z_0$ have pairwise different $\phi$-types over $B$, and each of them is $r$-separated by $S$ from $B$ in $G$, by~\cref{cor:bound} we infer that $|Z_0|=\Oof(1)$.  
 On the other hand, by the definition of $Z_1$, with each tuple $\bar u\in Z_1$ we may associate a vertex $b(\bar u)\in C$ which is not $r$-separated from $\bar u$ by $S$ in $G$.
 Since the set $U$ is mutually $2r$-separated by $S$ in $G$, it follows that for any two different tuples $\bar u,\bar v\in Z_1$ we have $b(\bar u)\neq b(\bar v)$.
 Hence $b(\cdot)$ is an injection from $Z_1$ to $C$, which proves that $|Z_1|\leq |C|$.
 To conclude, we have $|Z|=|Z_0|+|Z_1|=\Oof(1)+\Oof(|C|)=\Oof(|C|)$. This finishes the proof of~\cref{claim4} and ends the proof of~\cref{thm:vc-density}.

\subsection{Lower bounds for non-sparse classes}
We now move to the proof of \Cref{thm:vc-density-lower-bound},
whose statement we repeat for convenience.

 \setcounter{aux}{\thetheorem}
\setcounter{theorem}{\thevclower}
 \begin{theorem}
   Let $\CCC$ be a class of graphs which
   is closed under taking subgraphs.
   \begin{enumerate}[(1)]
   \item If $\CCC$ is not nowhere dense, then there is a formula
   $\phi(x,y)$ such that for every $n\in \N$ there are $G\in\CCC$ and $A\subseteq V(G)$
   with $|A|=n$ and $|S^\phi(G/A)|=2^{|A|}$.
   \item If $\CCC$ has unbounded expansion, then there is a formula
   $\phi(x,y)$ such that for every $c\in \mathbb{R}$ there exist $G\in\CCC$ and a nonempty $A\subseteq V(G)$ with $|S^\phi(G/A)|>c|A|$.
   \end{enumerate}
 \end{theorem}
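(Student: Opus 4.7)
The plan is to invoke two standard combinatorial dichotomies characterizing the negations of the two sparsity conditions, and then expose the resulting dense configuration using a single distance formula. Both parts will use a formula $\phi(x,y)$ expressing $\dist_G(x,y)\le r+1$, which is first-order expressible for any fixed $r$ by chaining $r+1$ edge-or-equality atoms; only the value of $r$ will differ between the two parts.

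For part (1), I would invoke the Ne\v{s}et\v{r}il--Ossona de Mendez characterization: a subgraph-closed class $\CCC$ is nowhere dense if and only if for every $r$ there exists $n$ such that the $r$-subdivision of $K_n$ (each edge replaced by a path with $r$ interior vertices) does not belong to $\CCC$. Fix an $r\ge 0$ for which $\CCC$ contains the $r$-subdivision of $K_n$ for every $n$, and take $\phi$ as above with this $r$. Given $n$, set $N=n+2^n$, partition $V(K_N)=A\cup B$ with $|A|=n$ and $B=\{b_T : T\subseteq A\}$, and let $G$ be the subgraph of the $r$-subdivision of $K_N$ obtained by keeping $A\cup B$ together with the $r$-subdivision path joining $b_T$ to $a$ exactly when $a\in T$; by subgraph-closure $G\in\CCC$. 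A direct distance computation shows that $\dist_G(b_T,a)=r+1$ for $a\in T$, whereas for $a\notin T$ every walk from $b_T$ to $a$ must traverse at least three full subdivision segments (leaving $b_T$ via some $a'\in T$, crossing to some $b_{T'}$ with $a,a'\in T'$, and reaching $a$), yielding $\dist_G(b_T,a)\ge 3(r+1)>r+1$. Hence the $\phi$-type of $b_T$ over $A$ equals $T$, giving $2^{|A|}$ distinct types and thus $|S^\phi(G/A)|=2^{|A|}$.

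For part (2), I would invoke Dvo\v{r}\'ak's analogous characterization: a subgraph-closed $\CCC$ has bounded expansion if and only if for every $r$ there is a constant $c_r$ such that no $(\le r)$-subdivision of a graph with edge density exceeding $c_r$ belongs to $\CCC$. Unbounded expansion then yields some $r^*$ such that for every $c$, some $(\le r^*)$-subdivision of some $H$ with $|E(H)|>c|V(H)|$ lies in $\CCC$; a double pigeonhole (first on the per-edge subdivision length inside a single witness, then on $c$) upgrades this to a single $r'\in\{0,\ldots,r^*\}$ such that for every $c$ the \emph{exact} $r'$-subdivision of some $H$ with $|E(H)|>c|V(H)|$ lies in $\CCC$. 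If the pigeonhole forces $r'=0$, I would apply the same argument one level up, using monotonicity in the depth of the shallow-topological-minor density to propagate unbounded expansion from depth $0$ to depth $1$; this lets us take $r'\ge 1$.

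With $r'\ge 1$ fixed, take $\phi$ as above with $r=r'$, let $G$ be the exact $r'$-subdivision of the witness $H$, and put $A=V(H)$. For each edge $e=\{u,v\}\in E(H)$ choose any interior subdivision vertex $w_e$ on the $u$--$v$ path in $G$. Then $\dist_G(w_e,u),\dist_G(w_e,v)\le r'$, while for any $z\in V(H)\setminus\{u,v\}$ the shortest $w_e$--$z$ path must leave the $u$--$v$ segment through $u$ or $v$ and then traverse a further full subdivision segment of length $r'+1$, giving $\dist_G(w_e,z)\ge 1+(r'+1)>r'+1$. Hence $w_e$ has $\phi$-type exactly $\{u,v\}$, distinct edges give distinct types, and $|S^\phi(G/A)|\ge|E(H)|>c|V(H)|=c|A|$. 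The main obstacle is the pigeonhole refinement ensuring $r'\ge 1$ in part (2); modulo that point, both parts reduce to routine distance arithmetic in a subdivided graph.
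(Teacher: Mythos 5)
Part (1) of your proposal is essentially the paper's argument: the paper uses the \cite{nevsetvril2011nowhere} characterization (\cref{lem:lower-nd}) and the specific graph $P(n)$ whose $(r-1)$-subdivision realizes all subsets of the $n$ ``left'' vertices as distance-$r$ types; your construction of a suitable subgraph of a subdivided clique and the distance arithmetic you sketch are the same idea in slightly different clothing. That part is fine.

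Part (2) takes a genuinely different route, and the route has a real gap. The paper does \emph{not} extract an exact-length subdivision of a dense graph; it instead invokes the quantitative result of Reidl et al.\ (\cref{lem:lower-be}) relating the edge density of topological depth-$r$ minors to the normed neighborhood complexities $\nu_1,\dots,\nu_{\lceil r+1/2\rceil}$, concludes that some $\nu_q$ is unbounded on $\CCC$, and reads off the conclusion directly from the definition of $\nu_q$ (using subgraph-closure to move the subgraph $H$ inside $\CCC$). Your plan instead pigeonholes toward an exact $r'$-subdivision of a dense witness $H$ and then marks one interior subdivision vertex per edge. The marking argument is correct and gives $|E(H)|$ distinct $\phi$-types over $A=V(H)$ --- but only when $r'\ge 1$, i.e.\ when there \emph{are} interior subdivision vertices. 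The case $r'=0$, where the pigeonhole collapses to ``$\CCC$ contains subgraphs of unbounded average degree'' (e.g.\ $\CCC$ the class of all graphs), is not a corner you can dismiss. Your proposed fix --- ``propagate unbounded expansion from depth $0$ to depth $1$ by monotonicity'' --- does not do what you want: monotonicity says dense \emph{subgraphs} are also dense \emph{topological depth-$1$ minors}, but it does not produce dense \emph{exact $1$-subdivisions} inside $\CCC$, which is what your marking argument requires. Converting ``unbounded subgraph density'' into ``unbounded trace/neighborhood complexity'' is precisely the non-trivial content of the Reidl et al.\ bound (the $\nu_1^4\log^2\nu_1$ term in \cref{lem:lower-be}), and your proposal contains no substitute for it. So the $r'=0$ case is a genuine missing idea, not a loose end you could tidy up by routine pigeonholing.
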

 \setcounter{theorem}{\theaux}

\begin{proof}
The first part follows easily from the following lemma.
Let $\mathcal{G}_r$ be the class of $(r-1)$-subdivisions of all 
simple graphs, that is, the class comprising
all the graphs that can be obtained from any simple graph by replacing every edge by a path of
length $r$.

\begin{lemma}[\cite{nevsetvril2011nowhere}]\label{lem:lower-nd}
For every somewhere dense graph class $\CCC$ that is closed 
under taking subgraphs, there
exists a positive integer $r$ such that $\mathcal{G}_{r}\subseteq \CCC$.
\end{lemma}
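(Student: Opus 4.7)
The plan is to invoke the known equivalence, due to Ne\v set\v ril and Ossona de Mendez~\cite{nevsetvril2011nowhere}, between nowhere denseness (defined via shallow minors) and its topological counterpart. Concretely: a class $\CCC$ is nowhere dense if and only if for every $r\in\N$ there exists $t\in\N$ such that no graph in $\CCC$ contains, as a subgraph, any $\le r$-subdivision of $K_t$ (i.e., a subdivision in which every edge of $K_t$ is replaced by a path of length at most $r$). Taking this equivalence as a black box, the argument proceeds in two short steps.

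First, I negate. Since $\CCC$ is somewhere dense, there is a fixed $r_0\in\N$ such that for every $t$, some $G_t\in\CCC$ contains a subdivision of $K_t$ as a subgraph in which every edge is replaced by a path of some length in $\{1,2,\dots,r_0\}$. The path lengths may differ across the edges, so I homogenize them by a standard Ramsey argument: color each edge of $K_t$ by the length of its corresponding path (at most $r_0$ colors) and extract a monochromatic clique of size $t'$, for $t$ sufficiently large in terms of $t'$. This yields, for some single length $\ell\in\{1,\dots,r_0\}$, a copy of the $(\ell-1)$-subdivision of $K_{t'}$ as a subgraph of $G_t$. Since $\ell$ ranges over a finite set and $t'$ can be made arbitrarily large, a further pigeonhole step fixes one value of $\ell$ that works for infinitely many $t'$, and therefore, by monotonicity ($K_{t'}\subseteq K_{t''}$ for $t''\ge t'$), for all $t'$.

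Second, I use subgraph closure. Because the $(\ell-1)$-subdivision of $K_{t'}$ is a subgraph of some member of $\CCC$, it itself belongs to $\CCC$. For any simple graph $H$ on at most $t'$ vertices, $H$ embeds into $K_{t'}$, so the $(\ell-1)$-subdivision of $H$ embeds into the $(\ell-1)$-subdivision of $K_{t'}$; by subgraph closure it belongs to $\CCC$. Ranging over all simple graphs and choosing $t'$ large enough in each case, I conclude $\mathcal{G}_\ell\subseteq\CCC$, with the desired integer being $r=\ell$.

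The main obstacle, and the only place in which real graph-theoretic content is used, is the equivalence between having $K_t$ as a shallow minor at some fixed depth and having a bounded-length subdivision of a (smaller) clique as a subgraph. The reduction from shallow minors to shallow topological minors of smaller cliques is the technical heart of the classical proof in~\cite{nevsetvril2011nowhere}, and relies on density bounds for shallow topological minors (such as~\cref{lem:densitynd}) together with a probabilistic assignment of branch vertices to roles inside each branch set, in the spirit of the argument we already used in~\cref{lem:diversity}. Once that equivalence is in hand, the subsequent Ramsey extraction and the final subgraph-closure step are entirely elementary.
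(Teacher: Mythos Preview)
The paper does not give its own proof of this lemma; it is stated with a citation to~\cite{nevsetvril2011nowhere} and used as a black box. Your sketch correctly reproduces the standard argument behind that reference: pass to the topological-minor characterization of nowhere denseness, homogenize the subdivision lengths via Ramsey and pigeonhole, and conclude by subgraph closure.
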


To prove the first statement of \Cref{thm:vc-density-lower-bound}, 
for $n\in \N$, let $P(n)$ denote the graph with $n+2^n$ 
vertices $V(P(n))\coloneqq \{v_1,\ldots, v_n\}\cup \{w_M \colon M\subseteq \{1,\ldots, n\}\}$ and edges $E(P(n))\coloneqq \{v_iw_M \colon 1\leq i\leq n,\, M\subseteq \{1,\ldots, n\},\, i\in M\}$. 
If $\CCC$ is somewhere dense and closed under taking subgraphs, 
according to \Cref{lem:lower-nd}, there exists an integer $r$ 
such that $\mathcal{G}_{r}\subseteq \CCC$. In particular, for every $n\in \N$ the $(r-1)$-subdivision $P^{r}(n)$ of the graph $P(n)$ is contained in $\CCC$.
Now consider 
the formula $\phi(x,y)$ stating that $x$ and~$y$ are at distance at most $r$. Then for every $n\in \N$ we have 
$S^\phi(P^{r}(n)/A)=\Pow(A)$, where $A\subseteq V(P^{r}(n))$ denotes the set $\{v_1,\ldots, v_n\}$. This implies the first part
of the theorem. 

\medskip
We now move to the  second part of~\cref{thm:vc-density-lower-bound}.
A graph $H$ is a \emph{topological depth-$r$ minor} of~$G$ if
there is a mapping $\phi$ that maps vertices of~$H$ to 
vertices of $G$ such that $\phi(u)\neq \phi(v)$ for 
$u\neq v$, and edges of $H$ to paths in 
$G$ such that if $uv\in E(H)$, then $\phi(uv)$
is a path of length at most $2r+1$ between $u$ and $v$ in 
$G$ and furthermore, if $uv, xy\in E(H)$, then 
$\phi(uv)$ and $\phi(xy)$ are internally vertex
disjoint. We write $H\minor_r^{\mathrm{t}} G$. 
Note that the above definition makes sense for 
half-integers, i.e., numbers $r$ for which $2r$ is an integer.

It is well-known that classes of bounded expansion can be alternatively characterized by the sparsity of shallow topological minors.

\begin{lemma}[Corollary 4.1 of \cite{sparsity}]\label{lem:top-bnd-exp}
A class $\CCC$ of graphs has bounded expansion if and only 
if for every $r\in \N$ there exists a constant $c_r$ such that $|E(H)|/|V(H)|\leq c_r$ for all graphs $H$ such that $H\minor_r^{\mathrm{t}} G $ for some $G\in \CCC$.
\end{lemma}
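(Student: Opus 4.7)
The plan is to prove the two implications separately. The ``only if'' direction is the easier one: I would show that every topological depth-$r$ minor is itself a (standard) depth-$r$ minor with the same vertex and edge set. Given a topological model of $H$ in $G$ with branch vertices $\phi(v)$ and internally disjoint paths $\phi(uv)$ of length at most $2r+1$, I would define branch sets by attaching to each $\phi(v)$ one half (of length at most $r$) of every incident path $\phi(uv)$; the midpoint edges of these paths then witness the required adjacencies between branch sets, and each branch set has radius at most $r$ around $\phi(v)$. Since this transformation preserves both $V(H)$ and $E(H)$, an upper bound on the edge density of depth-$r$ minors immediately transfers to topological depth-$r$ minors.

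For the ``if'' direction I would argue by contraposition, assuming bounded density of topological shallow minors and deriving bounded expansion. The main external tool is the classical Bollob\'as--Thomason / Koml\'os--Szemer\'edi theorem: there is a universal constant $\alpha$ such that every graph of average degree at least $\alpha t^2$ contains a subdivision of $K_t$ as a subgraph. Suppose, for contradiction, that edge density of depth-$r$ minors in $\CCC$ were unbounded for some fixed $r$. Then for arbitrarily large $t$ one could pick $G\in\CCC$ and $H\minor_r G$ with average degree at least $\alpha t^2$, obtaining $K_t\minor^{\mathrm{t}}_0 H$. The central step now is a composition lemma in the spirit of \cref{lem:combineminors}: if $J\minor^{\mathrm{t}}_a H$ and $H\minor_b G$, then $J\minor^{\mathrm{t}}_c G$ for some $c$ depending only on $a$ and $b$. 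This is proved by routing the topological paths of $J$'s model in $H$ through spanning trees of the branch sets of $H$'s minor model in $G$, blowing up path lengths by a factor of $O(b)$. Applying this composition with $a=0$ and $b=r$ yields $K_t\minor^{\mathrm{t}}_{c(r)} G$ for arbitrarily large $t$, while $K_t$ has edge density $(t-1)/2$, contradicting the assumed boundedness of density for topological depth-$c(r)$ minors in $\CCC$.

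The main obstacle is precisely the composition lemma invoked in the contrapositive direction: one must carefully route the topological paths through branch sets of bounded radius while ensuring that the resulting depth $c(r)$ depends only on $r$ and not on $t$. Once this routing and the accompanying bookkeeping on path lengths is carried out, the contradiction is essentially automatic from the Bollob\'as--Thomason bound. Both ingredients are standard in the theory of graph sparsity, which is why the lemma is cited directly from~\cite{sparsity} rather than reproved in the paper.
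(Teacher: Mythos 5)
The paper cites this lemma directly from~\cite{sparsity} and gives no proof, so there is nothing to compare against; I will assess your argument on its own terms. Your ``only if'' direction is fine: splitting each subdivision path of length at most $2r+1$ into two halves of length at most $r$ does indeed turn a topological depth-$r$ model into a depth-$r$ minor model with the same vertex and edge sets, so the density bound transfers.

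The ``if'' direction, however, contains two genuine errors. First, the Bollob\'as--Thomason / Koml\'os--Szemer\'edi theorem produces a subdivision of $K_t$ of \emph{unbounded} depth: the subdivision paths in $H$ can be as long as $|V(H)|$, and nothing in the average-degree hypothesis controls their length. You write that you obtain $K_t\minor^{\mathrm{t}}_0 H$, which by the paper's definition means $K_t$ is literally a subgraph of $H$; this does not follow. At best you get $K_t\minor^{\mathrm{t}}_s H$ for an $s$ that grows with $|V(H)|$, and composing this with $H\minor_r G$ would place $K_t$ as a topological minor of $G$ only at an unbounded depth, yielding no contradiction with the assumed bound for any fixed depth.

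Second, the composition lemma you rely on --- that $J\minor^{\mathrm{t}}_a H$ and $H\minor_b G$ imply $J\minor^{\mathrm{t}}_c G$ for some $c=c(a,b)$ --- is false. Take $J=H=K_5$, $a=0$, $b=1$, and let $G$ be the Petersen graph: contracting a perfect matching shows $K_5\minor_1 G$, yet $G$ is $3$-regular, so $K_5$ is not a topological minor of $G$ at \emph{any} depth, since a topological $K_5$ requires five branch vertices of degree at least four. The obstruction is precisely the routing issue you flag as the ``main obstacle'': many internally disjoint paths of $J$'s model may need to fan out through a single low-degree branch set, which is impossible in general. This is why the implication from bounded density of shallow topological minors to bounded density of shallow minors is the nontrivial direction of the statement, proved in~\cite{sparsity} (Theorem 4.1 there, due to Dvo\v{r}\'ak) by a substantially different argument that establishes a polynomial relation between $\nabla_r$ and $\widetilde\nabla_r$ rather than passing through an excluded clique.
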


For $r\in \N$ and a graph $G$, by $\nu_r(G)$ we denoted the
\emph{normed $r$-neighborhood complexity} of $G$, as defined
by Reidl et al.~\cite{reidl2016characterising}: 
\begin{equation*}
\nu_r(G)\coloneqq\max_{H\subseteq G,\,\emptyset\neq A\subseteq V(G)}\frac{|\{N_r^H[v]\cap A\, \colon\, v\in V(H)\}|}{|A|}.
\end{equation*}
We will need the following result relating edge density in shallow topological minors and normed neighborhood complexity.

\begin{lemma}[Theorem 4 of \cite{reidl2016characterising}]\label{lem:lower-be}
Let $G$ be a graph, $r$ be a half-integer, 
and let $H\minor_r^{\mathrm{t}}G$. 
Then 
$$\frac{|E(H)|}{|V(H)|}\leq (2r + 1)\cdot \max \left\{\nu_1(G)^4\cdot \log^2\nu_1(G),\nu_2(G),\ldots, \nu_{\left\lceil r+\frac{1}{2}\right\rceil}(G)\right\}.$$
\end{lemma}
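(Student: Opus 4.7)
My plan is to partition the edges of $H$ by the length of their associated embedded paths in $G$ and bound each part by the appropriate normed neighborhood-complexity term.

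First, I would fix a witness of $H\minor_r^{\mathrm{t}} G$: an injection $\phi\from V(H)\to V(G)$ together with internally disjoint paths $P_{uv}$ of length $\ell_{uv}\in\{1,2,\ldots,2r+1\}$ in $G$ for each $uv\in E(H)$. Let $H^*\subseteq G$ be the union of all these paths and let $A=\phi(V(H))\subseteq V(H^*)$. Partition $E(H)=E_1\cup E_2\cup\cdots\cup E_{2r+1}$ by the length $\ell_{uv}$. Since there are $2r+1$ classes, it suffices to bound $|E_\ell|/|V(H)|$ separately by one of the quantities in the $\max$ for each $\ell$, and then sum; this produces the factor $(2r+1)$ in the statement.

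For the long-path classes $\ell\geq 3$, set $k=\lceil \ell/2\rceil\leq \lceil r+1/2\rceil$. For each edge $uv\in E_\ell$, pick a central vertex $c_{uv}$ of $P_{uv}$, so that $\phi(u),\phi(v)\in N_k^{H^*}[c_{uv}]$. Applying the definition of $\nu_k(G)$ to the subgraph $H^*\subseteq G$ and the parameter set $A$, the number of distinct sets $N_k^{H^*}[v]\cap A$ as $v$ ranges over $V(H^*)$ is at most $\nu_k(G)\cdot|A|=\nu_k(G)\cdot|V(H)|$. Since each edge $uv\in E_\ell$ is witnessed by a pair $(c_{uv},\phi(u))$ or $(c_{uv},\phi(v))$, and the endpoint $\phi(u)$ is determined by the intersection of $N_k^{H^*}[c_{uv}]$ with $A$ together with a linear amount of additional data, a careful double-counting yields $|E_\ell|\leq \nu_k(G)\cdot|V(H)|$.

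For the short-path classes $\ell\in\{1,2\}$, we need the $\nu_1(G)^4\log^2\nu_1(G)$ factor, and this is where the main technical difficulty lies. Edges in $E_1\cup E_2$ correspond to pairs $\phi(u),\phi(v)$ that are within distance at most $2$ in $H^*$, so a naive application of $\nu_1$ only controls the number of distinct $1$-neighborhoods, not the density directly. The idea is to convert the neighborhood-complexity bound $\nu_1(G)\leq K$ into a degeneracy bound of order $K^4\log^2 K$: one iteratively identifies low-profile vertices — those whose $1$-neighborhood in $A$ is highly constrained — and peels them off, using the neighborhood-complexity classes as a combinatorial resource, somewhat in the spirit of Chernikov--Simon type arguments for classes of bounded VC-dimension. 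This degeneracy bound immediately yields the required edge-count estimate.

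The main obstacle is the short-path case: the transition from a cardinality bound on distinct $1$-neighborhoods to an actual density bound is not tight and passes through a polylogarithmic loss that must be carefully tracked. Everything else is a straightforward partitioning-and-summing argument, so the technical heart of the lemma lies in turning $\nu_1$ into degeneracy.
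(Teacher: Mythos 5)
The paper does not prove this lemma: it is imported verbatim as Theorem~4 of Reidl, S\'anchez Villaamil, and Stavropoulos \cite{reidl2016characterising}, so there is no in-paper proof to compare your sketch against. The lemma's role in the present paper is purely as a black box used inside the proof of the second part of \cref{thm:vc-density-lower-bound}, so reproducing its (nontrivial) proof is not expected here, only citing it.

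That said, evaluating your sketch on its own terms: the high-level skeleton is plausible and matches the shape of the bound (partition $E(H)$ by the length $\ell\in\{1,\dots,2r+1\}$ of the embedded path, which accounts for the factor $2r+1$; use $\nu_{\lceil\ell/2\rceil}$ for $\ell\ge 3$; treat $\ell\in\{1,2\}$ separately, which is where the $\nu_1^4\log^2\nu_1$ term appears). But both halves are left as gaps. For $\ell\ge 3$, the claim that ``a careful double-counting yields $|E_\ell|\le\nu_k(G)\cdot|V(H)|$'' does not follow from what you wrote: $\nu_k$ bounds the number of \emph{distinct} traces $N_k^{H^*}[c]\cap A$, not the number of midpoints $c$; many internally disjoint paths can have midpoints realizing the same trace, and the phrase ``the endpoint is determined by the trace together with a linear amount of additional data'' is not an argument. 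For $\ell\in\{1,2\}$, you explicitly flag the conversion of a $\nu_1$-bound into a degeneracy bound with a $\nu_1^4\log^2\nu_1$ loss as ``the technical heart'' and do not attempt it. So what remains is a proof plan whose two load-bearing steps are both missing; to fill them in you would need to reproduce the actual arguments of \cite{reidl2016characterising} (which, roughly, pass through excluded complete bipartite subgraphs and K\H{o}v\'ari--S\'os--Tur\'an-type density bounds rather than a direct double-counting of midpoint traces).
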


For the second part of~\cref{thm:vc-density-lower-bound}, we use the contrapositive of \Cref{lem:top-bnd-exp}. Since $\CCC$ has unbounded expansion, for some $r\in \N$ 
we have that the value $|E(H)|/|V(H)|$ is unbounded among depth-$r$ topological minors $H$ of graphs from $\CCC$.
By applying \Cref{lem:lower-be}, we find that for some $q\leq r$ the value
$\nu_{q}(G)$ is unbounded when $G$ ranges over all graphs from $\CCC$. 
Since $\CCC$ is closed under taking subgraph, we infer that also the ratio 
$\frac{|\{N_q^G[v]\cap A \colon v\in V(G)\}|}{|A|}$ is unbounded when~$G$ ranges over graphs from $\CCC$ and $A$ ranges over nonempty subsets of $V(G)$.
This is equivalent to the sought assertion for the formula $\phi(x,y)$ expressing that $x$ and~$y$ are at distance at most $q$. 

\mbox{}
\end{proof}

\section{Packing and traversal numbers for nowhere dense graphs}\label{sec:ep}
In this section, we give an application 
of~\cref{thm:vc-density}, proving a 
duality result for nowhere dense graph classes.

A \emph{set system} is a family  $\cal F$ of subsets of a set $X$.
Its  \emph{packing} is a subfamily of $\cal F$ of pairwise disjoint subsets, and its \emph{traversal} (or \emph{hitting set}) is a subset of $X$ which intersects every member of $\cal F$.
The \emph{packing number} of~$\cal F$, denoted $\nu(\cal F)$, is the largest cardinality of a packing in $\cal F$,
and the \emph{transversality} of $\cal F$, denoted
$\tau(\cal F)$, is the smallest cardinality of a traversal of $\cal F$.
Note that if $\cal G$ is a finite set system, then
$\nu({\cal G})\le \tau(\cal G)$. 
The set system $\cal F$ has the \emph{Erd{\H o}s-P\'{o}sa property} if there is a function $f\from\N\to\N$ such that every finite subfamily $\cal G$ of $\cal F$
satisfies $\tau({\cal G})\le f(\nu(\cal G))$. 

We prove that set systems defined by first order formulas in nowhere dense graph classes have the Erd{\H o}s-P\'{o}sa property, in the following sense.

 \setcounter{aux}{\thetheorem}
 \setcounter{theorem}{\theep}
\begin{theorem}
	Fix a nowhere dense class of graphs $\CCC$ and a 
	formula $\phi(x,y)$ with two free variables $x,y$.
	Then there is a function $f\from \N\to\N$ with the following property.
	Let $G\in \CCC$ be a graph and let $\cal G$
	be a family of subsets of $V(G)$ consisting of sets of the form $\setof{v\in V(G)}{\phi(u, v)}$, where~$u$ is some vertex of $V(G)$.
Then~$\tau({\cal G})\le f(\nu(\cal G))$.
\end{theorem}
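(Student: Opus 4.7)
The plan is to derive \cref{thm:erdos-posa} as a direct combination of \cref{thm:vc-density} with a theorem of Matou{\v s}ek~\cite{Matousek:2004:BVI:1005787.1005789}. I would first fix a nowhere dense class $\CCC$ and a formula $\phi(x,y)$, viewing $x$ as the single object variable and $y$ as the single parameter variable, so that $|\bar x|=|\bar y|=1$. For any graph $G\in\CCC$ and any family $\cal G$ of the prescribed form, every member of $\cal G$ is of the shape $\{v\in V(G):\phi(u,v)\}$ for some $u\in V(G)$. Hence for every finite $A\subseteq V(G)$ the trace of $\cal G$ on $A$ embeds into $S^\phi(G/A)$, so by \cref{thm:vc-density} it satisfies $|\{A\cap S:S\in\cal G\}|\le c\cdot |A|^{1+\epsilon}$ for any fixed $\epsilon>0$ and some constant $c=c(\CCC,\phi,\epsilon)$.

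Next, I would observe that this polynomial bound on the shatter function yields a uniform bound on the VC-dimension of $\cal G$: if $\cal G$ shatters a set $A$ of size $k$, then the trace on $A$ has cardinality $2^k$, so $2^k\leq c\cdot k^{1+\epsilon}$, forcing $k$ to be at most some integer $d=d(\CCC,\phi)$. Thus the VC-dimension of every such family $\cal G$ is bounded by $d$, independently of the choice of $G\in\CCC$ and of the subfamily.

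Finally, I would invoke Matou{\v s}ek's theorem from~\cite{Matousek:2004:BVI:1005787.1005789}, which asserts that for every $d\in\N$ there is a function $f_d\colon\N\to\N$ such that every finite set system of VC-dimension at most $d$ satisfies $\tau(\mathcal{F})\le f_d(\nu(\mathcal{F}))$. Setting $f:=f_d$ yields the claimed bound $\tau(\cal G)\le f(\nu(\cal G))$. Since the whole argument is a straightforward concatenation of two ready-made results, no real obstacle arises; the only point deserving a line of care is confirming that Matou{\v s}ek's statement applies to abstract set systems defined purely via a bound on VC-dimension, rather than only to geometrically defined families, which is indeed its content.
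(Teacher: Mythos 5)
Your reduction to Matou{\v s}ek goes through a misstated version of his theorem, and the gap is fatal: bounded VC-dimension alone does \emph{not} give the Erd{\H o}s-P\'osa property. As a counterexample, take the set system of all lines in the plane restricted to $n$ points in general position (or its finite abstraction): the VC-dimension is $2$, any two lines intersect so $\nu=1$, yet no point lies on more than two lines, so $\tau\geq n/2$. Thus there is no function $f_d$ with $\tau(\mathcal F)\le f_d(\nu(\mathcal F))$ for all set systems of VC-dimension at most $d$.

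What Matou{\v s}ek actually proves (stated as \cref{thm:pq} in the paper) is a $(p,q)$-theorem: if the \emph{dual} shatter function satisfies $\pi^*_{\mathcal F}(m)=o(m^k)$, then for $p\geq k$ the $(p,k)$-property implies a bounded transversal. The hypothesis $\nu(\mathcal G)=\nu$ only gives the $(\nu+1,2)$-property, so one needs $k=2$, i.e.\ $\pi^*_{\mathcal F}(m)=o(m^2)$. Bounded VC-dimension does imply a polynomial dual shatter function, but only with exponent roughly $2^{d+1}$, which is far too weak: you would then need the $(p,2^{d+1})$-property, which does not follow from $\nu(\mathcal G)=\nu$. (In the lines example the dual shatter function is $\Theta(m^2)$, exactly why the argument fails there.) So the coarsening from the $|A|^{1+\epsilon}$ bound to ``bounded VC-dimension'' throws away precisely the information you need. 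The paper instead applies \cref{thm:vc-density} with the roles of $x$ and $y$ swapped to bound the \emph{dual} shatter function by $O(m^{1+\epsilon})=o(m^2)$, and then invokes \cref{thm:pq} with $k=2$, $p=\nu+1$. Also note you work graph-by-graph, while the cited $(p,q)$-theorem is stated for a single (infinite) ambient set system with a uniform $o(m^k)$ bound; the paper handles this by forming the disjoint union $\mathcal F=\bigsqcup_{G\in\CCC}\mathcal F_G$ so that a single constant $T=T(p)$ works for all $G$.
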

 \setcounter{theorem}{\theaux}


We will apply the following result of Matou{\v s}ek~\cite{Matousek:2004:BVI:1005787.1005789},
which relies on the proof of Alon and Kleitman~\cite{ALON1992103} of the conjecture of Hardwiger and Debrunner. 
In the result of Matou{\v s}ek, the set system $\cal F$ is infinite. For $m\in \N$, by $\pi_{\cal F}^*(m)$ we denote the \emph{dual shatter function} of $\cal F$, which is defined as the maximal number 
of occupied cells in the Venn diagram of $m$ sets in $\cal F$.

\begin{theorem}[Matou{\v s}ek, \cite{Matousek:2004:BVI:1005787.1005789}]\label{thm:pq}
	Let $\cal F$ be a set system with $\pi^*_{\cal F}(m)=o(m^k)$,
	for some integer $k$, and let $p\ge k$.
	Then there is a constant $T$ such that the following holds for every finite family $\cal G\subset \cal F$: 
	if $\cal G$ has the $(p,k)$-property, meaning that 
	among every $p$ sets in $\cal G$ some $k$ have a non-empty intersection, then $\tau ({\cal G})\le T$.
\end{theorem}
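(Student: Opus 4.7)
The plan is to follow Matoušek's adaptation of the Alon--Kleitman strategy for the Hadwiger--Debrunner $(p,q)$-conjecture. The proof splits into two major stages: first derive a fractional Helly theorem from the hypothesis $\pi^*_{\cal F}(m)=o(m^k)$, and then combine this with the $(p,k)$-property and a fractional-to-integral conversion based on bounded VC-dimension.

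For the first stage, I would prove the following \emph{fractional Helly theorem}: there is a function $\beta\colon(0,1)\to(0,1)$ such that for every $\alpha>0$ and every sufficiently large finite $\cal G\subset\cal F$, if at least $\alpha\binom{|\cal G|}{k}$ of the $k$-element subfamilies of $\cal G$ have nonempty common intersection, then some point lies in at least $\beta(\alpha)|\cal G|$ sets of $\cal G$. The natural setup is the Venn diagram of $\cal G$: every nonempty cell $C$ carries a label $T_C=\{F\in\cal G : C\subseteq F\}$, distinct cells carry distinct labels, and the total number of labels is at most $\pi^*_{\cal F}(|\cal G|)=o(|\cal G|^k)$. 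An intersecting $k$-tuple $K$ must satisfy $K\subseteq T_C$ for some cell $C$. A straightforward bound $\sum_C\binom{|T_C|}{k}\le\pi^*_{\cal F}(|\cal G|)\binom{\beta|\cal G|}{k}$ is too weak once $\pi^*_{\cal F}(m)$ grows polynomially, so the delicate step is to apply pigeonhole to the distribution of label sizes, pass to a subcollection where all large labels are essentially of the same size, and then apply a random-sampling argument to the resulting substructure; this is what genuinely uses $\pi^*_{\cal F}(m)=o(m^k)$ rather than just a polynomial bound.

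With fractional Helly in hand, the second stage proceeds as follows. The $(p,k)$-property implies that in every $p$-subfamily of $\cal G$ some $k$ sets intersect; averaging over all $p$-subfamilies gives that at least $\binom{p}{k}^{-1}\binom{|\cal G|}{k}$ of the $k$-tuples of $\cal G$ have nonempty intersection. Fractional Helly, invoked with $\alpha=\binom{p}{k}^{-1}$, then yields a point piercing at least $\beta_0|\cal G|$ sets, where $\beta_0$ depends only on $p$, $k$, and the function hidden in $o(m^k)$. Setting up the standard LP relaxation for the transversal problem and using this piercing strength as a certificate, one obtains a fractional transversal of weight $O(1/\beta_0)$. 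To convert this into an integral transversal of bounded size I would use the observation that $\pi^*_{\cal F}(m)=o(m^k)$ forces the dual VC-dimension of $\cal F$ to be at most $k-1$, and consequently the primal VC-dimension is bounded (by $2^k$). The Haussler--Welzl $\epsilon$-net theorem applied with $\epsilon=\beta_0/2$ then produces, from the fractional transversal, an integral transversal of size depending only on $p$, $k$, and the $o(\cdot)$ bound.

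The main obstacle is unambiguously the fractional Helly step: the straightforward cell-counting bound degrades to $\beta\to 0$ as $|\cal G|\to\infty$, so one really needs to exploit the sub-polynomial growth of $\pi^*_{\cal F}$ through a nontrivial pigeonhole-plus-sampling reduction to a dense substructure with effectively bounded dual shatter function. By comparison, the LP-duality and $\epsilon$-net conversion in the second stage is a routine application of Alon--Kleitman--style arguments, contingent only on the VC-dimension bound which, fortunately, is a free consequence of $\pi^*_{\cal F}(m)=o(m^k)$.
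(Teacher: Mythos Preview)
The paper does not prove this theorem: it is quoted as a known result of Matou\v{s}ek~\cite{Matousek:2004:BVI:1005787.1005789} and used as a black box in the proof of \cref{thm:erdos-posa}. There is therefore no proof in the paper to compare your proposal against.

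That said, your outline is the standard Alon--Kleitman scheme as adapted by Matou\v{s}ek: fractional Helly from the dual shatter-function hypothesis, then the $(p,k)$-property yields a positive fraction of intersecting $k$-tuples, fractional Helly gives a heavy point, LP duality bounds the fractional transversal number, and finally an $\epsilon$-net converts this into a bounded integral transversal. One inaccuracy: the implication ``$\pi^*_{\cal F}(m)=o(m^k)$ forces dual VC-dimension at most $k-1$'' is false as stated. For instance, take $k$ pairwise dual-shattered sets together with infinitely many copies of the empty set; then $\pi^*(m)\le 2^k=O(1)=o(m)$ while the dual VC-dimension equals $k$. What \emph{is} true, and suffices for your $\epsilon$-net step, is that $\pi^*_{\cal F}(m)=o(m^k)$ implies the dual VC-dimension is \emph{finite} (since otherwise $\pi^*(m)=2^m$ for arbitrarily large $m$), whence the primal VC-dimension is also finite; the resulting constant $T$ then depends on $\cal F$ through the specific function witnessing the $o(m^k)$ bound, which is consistent with the theorem's statement. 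Your description of the fractional Helly step is somewhat impressionistic; Matou\v{s}ek's actual argument is a supersaturation-style counting rather than the pigeonhole-plus-sampling you sketch, but the step is in any case external to the present paper.
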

\begin{proof}[of \cref{thm:erdos-posa}]
For a graph $G$, define the set system ${\cal F}_G$ on the ground set $V(G)$ as
$${\cal F}_G = \setof{\setof{v\in V(G)}{\phi(u, v)}}{u\in V(G)}.$$
Let then $\cal F$ be the disjoint union of set systems ${\cal F}_G$ for $G\in \CCC$. That is, 
the ground set of $\cal F$ is the disjoint union of the vertex sets $V(G)$ for $G\in \CCC$, and for each $G\in \CCC$ we add to ${\cal F}$
a copy of ${\cal F}_G$ over the copy of relevant $V(G)$.
Then the following claim follows directly from~\cref{thm:vc-density}.

\begin{claim}
The dual shatter function of $\cal F$ satisfies $\pi^*_{\cal F}(m)=\Oof(m^{1+\epsilon})$,
for every fixed $\epsilon>0$. In particular, $\pi^*_{\cal F}(m)=o(m^{2})$.
\end{claim}

Consider the function $f\from \N \to \N$ defined so that $f(\nu)$ is the value $T$ obtained from~\cref{thm:pq} applied to $\cal F$, $k=2$, and $p=\nu+1$.
Suppose now that $G\in \CCC$ is a graph and $\GGG\subseteq \FFF_G$
is a family of subsets of $V(G)$ consisting of sets of the form $\{v\in V(G)\,\colon\,\phi(u,v)\}$, where $u$ is some vertex of $G$.
We identify $\GGG$ with a subfamily of $\FFF$ in the natural way, following the embedding of $\FFF_G$ into $\FFF$ used in the construction of the latter.
Let $\nu$ be the packing number of $\GGG$.
In particular, for every $\nu+1$ subsets of $\GGG$
there is a vertex $v\in V(G)$
which is contained in two elements of~$\GGG$.
Hence, $\GGG$ is a $(p,2)$-family for $p=\nu+1$.
By~\cref{thm:pq}, $\tau(\GGG)\le T=f(\nu)=f(\nu(\GGG))$, as required.
\end{proof}

\section{Bounds for stability}\label{sec:stable}
As mentioned, Adler and Adler~\cite{adler2014interpreting}, 
proved that every nowhere dense class of graphs is stable. In this section,
we prove its effective variant,~\cref{thm:new-stable}, which 
we repeat for convenience.
 \setcounter{aux}{\thetheorem}
 \setcounter{theorem}{\thestable}
 \begin{theorem}
 There are computable functions $f\colon \N^3\to\N$ and $g\colon\N\to\N$ with the following property.
 Suppose $\phi(\bar x,\bar y)$ is a formula of quantifier rank at most $q$ and with $d$ free variables.
 Suppose further that $G$ is a graph excluding $K_t$ as a depth-$g(q)$ minor. Then the ladder index of $\phi(\bar x,\bar y)$ in $G$ is at most $f(q,d,t)$.
 \end{theorem}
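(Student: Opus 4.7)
The plan is to prove the theorem by contradiction, combining the uniform quasi-wideness for tuples from \cref{thm:uqw-tuples} with the locality bound on the number of types from \cref{cor:bound}. The strategy is: given a $\phi$-ladder of length $n$ in $G$, I will extract a large sub-ladder that is ``geometrically spread out'' in $G$, and then show that locality constrains the number of $\phi$-types realizable over such a spread-out configuration; matching this against the number of distinct types present in any ladder yields an upper bound on $n$.

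First, fix the locality radius $r$ from \cref{cor:bound} applied to $\phi$; by inspection of the proof of \cref{lem:types}, one may take $r = 7^q$, which depends only on $q$. Set $g(q) := 18r = 18\cdot 7^q$, so that the hypothesis $K_t \not\minor_{g(q)} G$ permits applying \cref{thm:uqw-tuples} in dimension $|\bar x| + |\bar y|$ and radius $r$ to the class $\{G\}$, yielding a constant $s$ and a polynomial $N$, both computable from $q, d, t$. Let $T$ be the type bound from \cref{cor:bound} for this $\phi$ and this $s$, and put $f(q,d,t) := N(2(T+1))$. Assuming toward contradiction a ladder $\bar u_1,\ldots,\bar u_n,\bar v_1,\ldots,\bar v_n$ of length $n \ge f(q,d,t)$, form the concatenated tuples $\bar w_i := \bar u_i \bar v_i \in V(G)^{|\bar x|+|\bar y|}$. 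Applying quasi-wideness for tuples to $\{\bar w_i : 1 \le i \le n\}$ produces an index set $I \subseteq \{1,\ldots,n\}$ with $|I| \ge 2(T+1)$ and a separator $S \subseteq V(G)$ with $|S| \le s$, such that the tuples $\{\bar w_i : i \in I\}$ are pairwise $r$-separated by $S$ in $G$.

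Second, I perform an interleaving split. Writing $I$ in increasing order as $i_1 < i_2 < \cdots < i_{2p}$ with $p \ge T+1$, set $I_1 := \{i_1, i_3, \ldots, i_{2p-1}\}$ and $I_2 := \{i_2, i_4, \ldots, i_{2p}\}$. Let $A$ be the set of vertices appearing in the tuples $\bar u_i$ for $i \in I_1$, and let $B$ be the set of vertices appearing in the tuples $\bar v_j$ for $j \in I_2$. Since $I_1 \cap I_2 = \emptyset$, for any $i \in I_1$ and $j \in I_2$ the tuples $\bar w_i$ and $\bar w_j$ are distinct and hence mutually $r$-separated by $S$; in particular, every vertex of $A$ is $r$-separated from every vertex of $B$ by $S$. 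Thus \cref{cor:bound} gives $|S^\phi(A/B)| \le T$. On the other hand, the ladder relation combined with the interleaving yields that, for each $k \in \{1,\ldots,p\}$, the type $\tp^\phi(\bar u_{i_{2k-1}}/B)$ contains exactly the tuples $\bar v_{i_{2l}}$ with $l \ge k$; these types are strictly nested as $k$ grows, producing $p \ge T+1$ pairwise distinct $\phi$-types in $S^\phi(A/B)$, which contradicts the bound $T$.

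The main obstacle, and the reason the two-sided split is necessary, is that in a raw ladder the ``row'' tuples $\bar u_i$ and the ``column'' tuples $\bar v_j$ need not be $r$-separated from each other in any useful way, so \cref{cor:bound} cannot be invoked directly on the ladder itself. The interleaving step exploits the order structure of the ladder to manufacture two disjoint index sets for which quasi-wideness applied to the joint tuples delivers the required separation, while simultaneously preserving enough of the ladder pattern to force many distinct types. Tracking the constants, $g$ depends only on $q$ (through $r = 7^q$), and $f$ is obtained by composing the effective bounds from \cref{thm:uqw-tuples} and \cref{cor:bound}, hence is a computable function of $q, d, t$, as required.
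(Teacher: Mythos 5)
Your proof is correct and takes essentially the same approach as the paper's: fix the locality radius $r$ from \cref{cor:bound}, apply \cref{thm:uqw-tuples} to the diagonal tuples $\bar u_i\bar v_i$, interleave the resulting $r$-separated index set into two alternating halves so that the $\bar u$-part of one is $r$-separated from the $\bar v$-part of the other, and then play \cref{cor:bound} against the ladder structure to get a contradiction. The only cosmetic difference is in the final step — you count $p \ge T+1$ strictly nested $\phi$-types in $S^\phi(A/B)$ directly, whereas the paper uses the pigeonhole principle to find two indices in $J_1$ with equal types over $Y$ and then derives a contradiction via an interleaved index in $J_2$; these are the same argument in slightly different clothing.
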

 \setcounter{theorem}{\theaux}

Recall that a class $\cal C$ is stable if and only if for every first order formula $\varphi(\bar x,\bar y)$, 
its ladder index over graphs from $\cal C$ is bounded by a constant depending only on $\cal C$ and $\varphi$;
see \cref{sec:intro} to recall the background on stability.
Thus the result of Adler and Adler is implied by \cref{thm:new-stable},
and is weaker in the following sense: \cref{thm:new-stable} asserts in addition that there is a computable bounds on the ladder index
of any formula that depends only on the size of an excluded clique minor at depth bounded in terms of formula's quantifier rank and number of free variables. 
We now prove~\cref{thm:new-stable}.

\begin{proof}[of~\cref{thm:new-stable}]
Fix a formula $\phi(\bar x,\bar y)$ of quantifier rank $q$ and
a partitioning of its 
free variables into  $\bar y$ and $\bar z$.
Let $d=|\bar x|+|\bar y|$ be the total number of free variables of $\phi$.
Let $r\in \N$ be the number given by~\cref{cor:bound},
which depends on $\phi$ only.
Let $\cal C$ be the class of all graphs 
such that  $K_t\not\minor_{18r} G$.
Then, by~\cref{thm:uqw-tuples}, 
$\cal C$ satisfies $\uqw^d_{r}(N^d_{r},s^d_r)$,
for some  polynomial  $N^d_r\from\N\to\N$ and number $s=s^d_r\in \N$ computable from $d,t,r$.
Let $T$ be the number given by~\cref{cor:bound} for $\phi$ and $s$.
 Finally, let 
$\ell=N^d_r(2T+1)$.
We show that 
every $\phi$-ladder in a graph $G\in\cal C$ has length smaller than $\ell$.

For the sake of contradiction, assume that there is a graph $G\in\cal C$
and tuples $\bar u_1,\ldots,\bar u_\ell\in V(G)^{|\bar x|}$ and $ \bar v_1,\ldots, \bar v_\ell\in V(G)^{|\bar y|}$
which form a $\phi$-ladder in $G$, i.e., 
$\phi(\bar u_i,\bar v_j)$ holds in~$G$ if and only if $i\le j$.
	Let $A=\setof{ \bar u_i \bar v_i}{i=1,\ldots,\ell}\subset V(G)^d$. Note that $|A|=\ell\ge N^d_r(2T+1)$, since tuples $\bar u_i$ have to be pairwise different.
  
Applying property  $\uqw^d_r(N^d_r,s^d_r)$ to the set $A$, radius $r$, and target size $m=2T+1$
		 yields a set $S\subset V(G)$ with $|S|\le s$
	and a set $B\subset A$ with $|B|\geq 2T+1$ 
  of tuples which are  mutually $r$-separated by $S$  in $G$.
  Let $J\subseteq \set{1,\ldots,\ell}$
  be the set of indices corresponding to $B$,
  i.e., $J=\set{j\colon\bar u_j\bar v_j\in B}$.
  
  Since $|J|=2T+1$, we may partition $J$ into $J_1\uplus J_2$ with $|J_1|=T+1$ so that the following condition holds:
  for each $i,k\in J_1$ satisfying $i<k$, there exists $j\in J_2$ with $i<j<k$. Indeed, it suffices to order the indices of $J$ and put every second index to $J_1$, and every other to $J_2$.
  Let~$X$ be the set of vertices appearing in the tuples $\bar u_i$ with $i\in J_1$, and let $Y$ be the set of vertices appearing in the tuples $\bar v_j$ with $j\in J_2$.
  Since the tuples of $B$ are mutually $r$-separated by $S$ in~$G$, it follows that $X$ and $Y$ are $r$-separated by $S$.
  As $|J_1|=T+1$, by \cref{cor:bound} we infer that there are distinct indices $i,k\in J_1$, say $i<k$, such that $\tp^\phi(\bar u_i/Y)=
    \tp^\phi(\bar u_{k}/Y)$. This implies that for each $j\in J_2$, we have $G,\bar u_i,\bar v_j\models \phi(\bar x,\bar y)$ if and only if $G,\bar u_{k},\bar v_j\models \phi(\bar x,\bar y)$.
    However, there is an index $j\in J_2$ such that $i<j<k$, and for this index we should have $G,\bar u_i,\bar v_j\models \phi(\bar x,\bar y)$ and $G,\bar u_{k},\bar v_j\not\models \phi(\bar x,\bar y)$
    by the definition of a ladder. This contradiction concludes the proof.
\end{proof}

We remark that~\cref{thm:new-stable} also holds for classes of edge- and vertex-colored graphs, with the same proof, but using 
a version of the results in~\cref{sec:gaifman} lifted to edge- and vertex-colored graphs.

\bibliographystyle{abbrv}
\bibliography{ref}

\end{document}